\definecolor{commentaqua}{rgb}{0.2, 0.6, 0.9}
\definecolor{commentred}{rgb}{0.8, 0.2, 0.2}
\newtheorem{definition}{Definition}
\newtheorem{theorem}{Theorem}
\newtheorem{proposition}{Proposition}
\newtheorem{corollary}{Corollary}
\newtheorem{lemma}{Lemma}
\newcommand{\uvec}{\mathbf{u}}
\newcommand{\uvecperp}{\mathbf{u}^{\perp}}
\newcommand{\rvec}{\mathbf{r}}
\newcommand{\rdot}{\dot{\mathbf{r}}}
\newcommand{\ghat}{\hat{\mathbf{g}}}
\newcommand{\nhat}{\hat{\mathbf{n}}}
\newcommand{\drdlambda}{d\mathbf{r} / d\lambda}
\DeclareMathOperator{\sgn}{sgn}
\begin{document}

\title{Frozen reaction fronts in steady flows: a burning-invariant-manifold perspective}

\author{John~R.~Mahoney}
\email{jmahoney3@ucmerced.edu}
\affiliation{University of California, Merced, California 95344, USA}
\author{John~Li}
\affiliation{University of California, Merced, California 95344, USA}
\affiliation{University of Southern California, Los Angeles, CA 90089, USA}
\author{Carleen~Boyer}
\affiliation{Bucknell University, Lewisburg, Pennsylvania 17837, USA}
\author{Tom~Solomon}
\affiliation{Bucknell University, Lewisburg, Pennsylvania 17837, USA}
\author{Kevin~A.~Mitchell}
\email{kmitchell@ucmerced.edu}
\affiliation{University of California, Merced, California 95344, USA}

\date{\today}

\begin{abstract}
The dynamics of fronts, such as chemical reaction fronts, propagating in
two-dimensional fluid flows can be remarkably rich and varied.  For
time-invariant flows, the front dynamics may simplify, settling in to a
steady state in which the reacted domain is static, and the front
appears ``frozen''.  Our central result is that these frozen fronts in the
two-dimensional fluid are composed of segments of \emph{burning invariant
manifolds}---invariant manifolds of front-element 
dynamics in $xy\theta$-space, where $\theta$ is the front orientation.
Burning invariant manifolds (BIMs) have been identified
previously as important local barriers to front propagation in fluid
flows.   The relevance of BIMs for frozen fronts rests in their ability,
under appropriate conditions, to form global barriers, separating
reacted domains from nonreacted domains for all time.  The second main
result of this paper is an understanding of bifurcations that lead from
a nonfrozen state to a frozen state, as well as bifurcations that change
the topological structure of the frozen front.  Though the primary
results of this study apply to general fluid flows, our analysis focuses
on a chain of vortices in a channel flow with an imposed wind.  For this
system, we present both experimental and numerical studies that support
the theoretical analysis developed here.
\end{abstract}

\pacs{47.70.Fw, 82.40.Ck, 47.10.Fg} 

\maketitle

\section{Introduction: Reacting flows and frozen fronts}
\label{sec:intro}

The evolution of an autocatalytic reaction $A+B \to 2A$ in a
spatially-extended system is characterized by the propagation of
reaction fronts that separate the species $A$ and $B$. The motion of
these fronts is well-understood for reaction-diffusion systems in
the absence of any substrate flow. The effects of fluid motion on
fronts in the more general {\em advection-reaction-diffusion} system
have only recently received significant attention.  This is somewhat
surprising, given the applicability of advection-reaction-diffusion to
a wide range of systems, including microfluidic chemical reactors
\cite{Mezic07}, plasmas \cite{beule98}, the dynamics of ecosystems in
the oceans (e.g., plankton blooms) \cite{scotti07}, cellular- and
embryonic-scale biological processes \cite{prigogine84, babloyantz86},
and the propagation of diseases in society \cite{russell04}. It has
been recently proposed that the motion of reaction fronts in fluid
flows may be dominated by the presence of {\em burning invariant
  manifolds} (BIMs), which act as one-way barriers to advancing fronts
\cite{Mahoney12, Mitchell12b}. The existence of BIMs and their
function as one-way barriers has been verified experimentally in
time-independent and time-periodic vortex chain flows, as well as 2D
disordered vortex flows \cite{Bargteil12}.

Experiments have shown that reaction fronts tend to pin to vortex
structures in the presence of an imposed wind \cite{Schwartz08}.
These fronts neither propagate forward against the wind nor are blown
backwards, but remain ``frozen''. This behavior is surprisingly
robust, occurring over more than an order of magnitude of wind speeds
and a variety of underlying flows ranging from confined vortex chains
to extended, spatially-random flows. Figure
\ref{fig:converge_to_pinned} shows a sequence of images from
experiments showing the evolution of a triggered, autocatalytic
reaction front in a vortex chain with wind. The front eventually
stabilizes and remains fixed for the duration of the experiment.

\begin{figure}[bt]
\includegraphics[width=\linewidth]{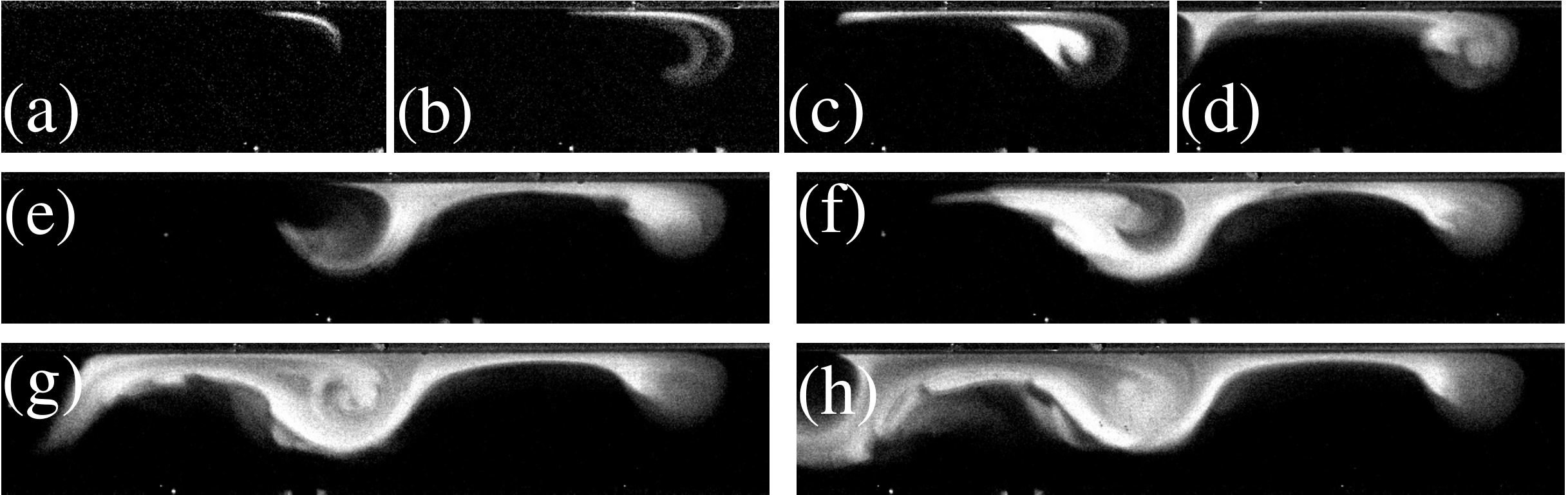}
\caption{Sequence showing the evolution of a triggered reaction front.
  The maximum fluid vortex speed (in the absence of wind) is $U =
  1.4$~mm/s, and the wind speed is $V_w = 0.90$~mm/s.  The images in
  the sequence are separated by $10$~s.}
\label{fig:converge_to_pinned}
\end{figure}

In this paper, we use the theory of BIMs to characterize these
\emph{frozen fronts} (FFs). FFs occur when a BIM spans the entire
width of the system with no changes in blocking direction, or when
there is a combination of overlapping BIMs, with the same blocking
directions, that together span the system. In either of these
situations, the shape of the FF is determined by the shape of the BIMs
responsible.  We illustrate the creation of FFs and changes in their
structure by increasing the wind applied to a canonical base flow (the
alternating vortex chain) with a propagating chemical reaction.  We
present both experimental and numerical studies of this system.

This paper is organized as follows.  We begin in
Sec.~\ref{sec:experiments} by presenting experiments involving
reaction fronts in a particular quasi-two-dimensional fluid flow---the
``windy alternating vortex chain''.  The images in this section
illustrate the behavior of FFs under an imposed wind of various
strengths.  Section~\ref{sec:BIM} recalls some basic aspects of
\emph{burning invariant manifolds} (BIMs)---geometric structures that
govern the progress of fronts in fluid flows---including the
three-dimensional dynamics of front elements and fixed points of this
system.  Next, Sec.~\ref{sec:FF} connects the previous two sections by
showing that FFs are composed of BIM segments.
Section~\ref{sec:theory} considers FFs in a numerical model of the
experimental flow.  It parallels Sec.~\ref{sec:experiments} by
increasing the applied wind and observing the resulting changes in the
FFs.  Here we discuss in detail the various FF topologies and the
dynamical systems mechanism underlying the transitions which connect
them.  There are four appendices.  Appendix~\ref{app:sliding_fronts}
introduces a two-dimensional invariant surface of ``sliding fronts'',
which is used to prove several key results in the paper.
Appendix~\ref{app:stability} establishes the stability condition that
frozen fronts must satisfy.  Appendix~\ref{app:far_field} provides
some technical analysis concerning the structure of frozen fronts at
infinity.  Finally, Appendix~\ref{app:SSS} examines attracting fixed
points for this system.

\section{Experiments: windy alternating vortex chain flow}
\label{sec:experiments}

The alternating vortex chain fluid flow has been the subject of much
study, both theoretical and experimental. It has been used as a model
of a two-dimensional cross-section of Rayleigh-B\'enard (thermal)
convection \cite{chandrasekhar,busse1974,busse1986,Solomon88}
and Taylor-Couette vortices \cite{hohenberg93}, and can be used to
model vortex chains and streets in oceanic and atmospheric flows
\cite{atmos_convection1993,atmos_convection1996}.  The
alternating vortex chain has been used to study enhancement
of long-range, fluid transport in cellular
flows \cite{shraiman87,solomon88b,Camassa91,Solomon96}.
More recently, it has been used repeatedly in studies of
chemical front propagation in advection-reaction-diffusion systems
\cite{Abel01,Abel02,Cencini03,Paoletti05,Paoletti05b,Mahoney12}.  Here we
modify this flow by adding a uniform ``wind'', creating the {\em windy
alternating vortex chain} \cite{Schwartz08}.

\subsection{Experimental setup}

\begin{figure}[bt]
\includegraphics[width=\linewidth]{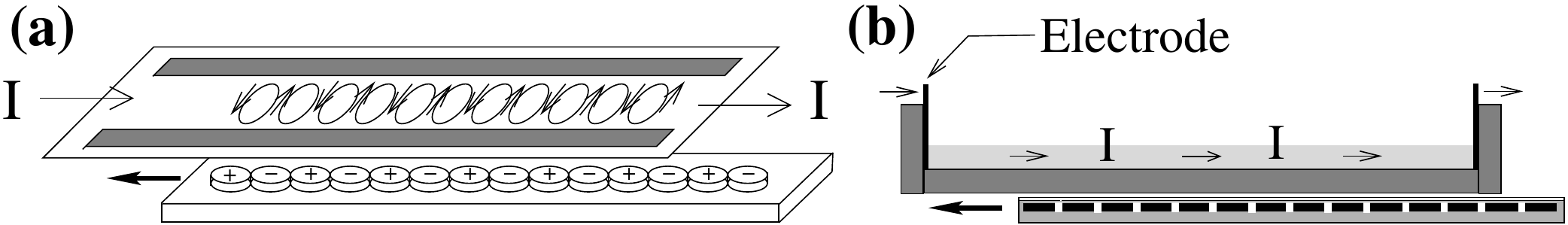}
\caption{Experimental apparatus. (a) Exploded view of alternating
  fluid vortices above array of magnets. Current through fluid induces
  Lorenz force.  (b) Side view of the apparatus.  A 2 mm thick layer
  of an electrolytic fluid is contained in an acrylic box.  The chain
  of Nd-Fe-Bo magnets moves on a translation stage below the box. }
\label{fig:experimental_apparatus}
\end{figure}

The experiments are conducted in a quasi-2D flow composed of a chain 
of vortices in a thin (2 mm) fluid layer.
The flow is produced using a magnetohydrodynamic forcing technique, as
shown in Fig.~\ref{fig:experimental_apparatus}.
A chain of permanent 1.9cm-diameter Nd-Fe-Bo magnets sits below the
fluid layer, thereby imposing a spatially-varying magnetic field.  An
electric current is passed though this electrolytic fluid, generating
Lorentz forces on the fluid.
%
In conjunction with rigid, plastic side-walls that bound the region of
interest, the result is an alternating chain of well-controlled
vortices. The magnets are mounted on a translation stage; motion of
the translation stage results in motion of the magnets and,
consequently, the fluid vortices. In these experiments, we move the
magnets (and the vortices) with a constant speed $V_w$.  In the
reference frame moving with the magnets, the flow is a stationary
chain of vortices with an imposed, uniform wind of speed $V_w$.

The fronts are produced in the experiments with the excitable,
ferroin-catalyzed Belousov-Zhabotinsky chemical reaction
\cite{Boehmer08, scott94}.  At the beginning of an experimental run,
the ferroin indicator in the solution is in its reduced (orange)
state.  A reaction is then triggered by briefly dipping a silver wire
into the fluid.  The silver oxidizes the ferroin in its vicinity,
changing the local indicator to a blue-green color. The oxidized
indicator in turn oxidizes the ferroin of {\em its} neighbors,
resulting in a blue-green reaction front that steadily propagates
outward from the trigger point with a roughly constant propagation
speed $V_0$.  For all experiments presented in this article, the
propagation speed is $V_0 = 0.07$~mm/s.  The front is a pulse-like
front---behind the leading edge of the front, the reaction relaxes
back to its reduced (orange) state and can be re-triggered. Previous
studies \cite{Paoletti05, Paoletti05b, Cencini03} have indicated that
the behavior of the leading edge of these pulse-like fronts in a fluid
flow is identical to the behavior of the leading edge of a burn-type
reaction.  (Burn-type reactions do not relax back, rather $A + B \to
2A$ and stays that way.)


\subsection{Experimental results}

We focus on the behavior of the leading edge of the reaction front
that propagates \emph{against} the imposed wind.  (In the lab frame,
these fronts propagate in the direction of the imposed motion of the
vortex cores.)  An example of a typical experiment is shown in
Fig.~\ref{fig:both_ref_frames}.  As viewed in the laboratory reference
frame (Fig.~\ref{fig:both_ref_frames}a), the front continually
propagates in both directions; in the reference frame moving with the
vortices (Fig.~\ref{fig:both_ref_frames}b) the right-most edge of the
reaction front converges to a steady-state stationary shape that
remains fixed for the duration of the run. From here on, we use the
expression ``wind'' $V_w$ to refer to either the translational speed
of the vortices in the lab frame or the speed of the uniform wind in
the vortex reference frame.


\begin{figure}[bt]
\includegraphics[width=\linewidth]{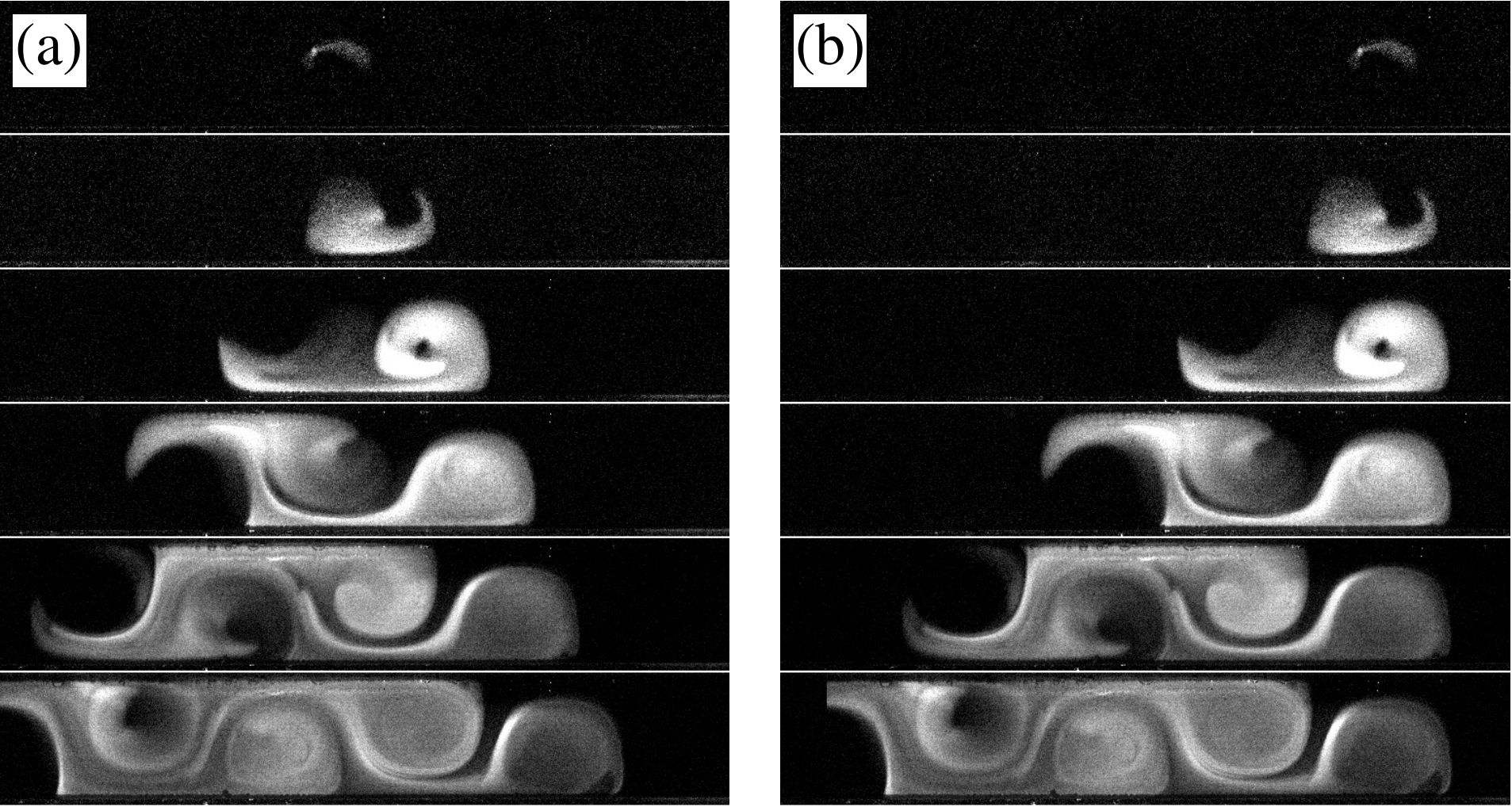}
\caption{Sequences showing the evolution of a reaction front in a
  vortex chain. (a) Lab frame, with the vortices moving to the right.
  (b) Reference frame moving with the vortices.  In this frame, the
  vortices are stationary and a wind blows across the vortices toward
  the left.  $U = 1.4$~mm/s, $V_w=0.30$~mm/s.  Images in the sequences
  are separated by $20$~s.}
\label{fig:both_ref_frames}
\end{figure}

The propagation of a reaction front in the alternating vortex flow in
the absence of an imposed wind has been discussed in detail in
previous papers \cite{Paoletti05, Paoletti05b, Pocheau06, Pocheau08}.
The reaction front is carried around each vortex with the flow and
``burns'' across the separatrix from one vortex to the next, resulting
in long-range propagation that is significantly faster than the
reaction-diffusion speed $V_0$ in a static fluid.  The long-term
average front speed is independent of the initial stimulation.

If a uniform wind $V_w < V_0$ is applied (i.e., the wind speed is
smaller than the reaction-diffusion speed), the reaction front still
propagates to the right against the wind, although the long-range
propagation speed is reduced.  At $V_w = V_0$, there is a transition
where the front neither advances against the wind nor is blown
backwards \cite{Schwartz08}.  Figure \ref{fig:onset_of_pinning_exp}
shows a sequence for a reaction front triggered in a flow with wind
$V_w$ just below $V_0$
(Fig.~\ref{fig:onset_of_pinning_exp}a-\ref{fig:onset_of_pinning_exp}c)
and $V_w$ just above $V_0$
(Fig.~\ref{fig:onset_of_pinning_exp}d-\ref{fig:onset_of_pinning_exp}f).
%
\begin{figure}[bt]
\includegraphics[width=\linewidth]{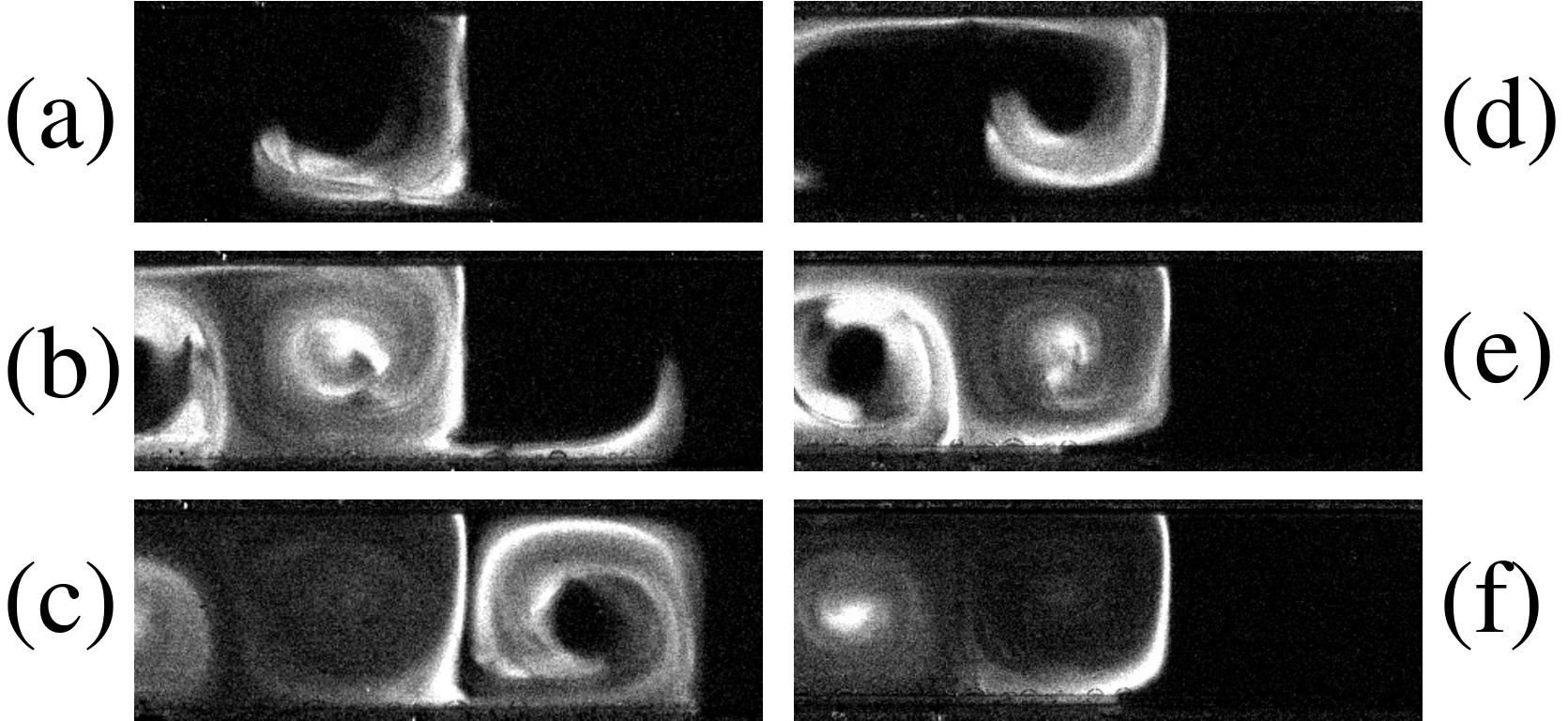}
\caption{Two sequences demonstrate front evolution near critical wind
  speed.  The maximum fluid vortex speed (in the absence of wind) is
  $U = 0.7$~mm/s. For wind value $V_w = 0.085$ mm/s, the front is (a)
  nearly vertical, (b) finds a small passage into the next right
  vortex, and (c) fills in the right vortex continuing down the
  channel. For wind value $V_w = 0.090$ mm/s, the front is (d) nearly
  vertical, (e) does not find passage to the right, and (f) remains
  unchanged from (e)---a frozen front. The time between images is
  $40$~s in both sequences.  Note that there is a small amount of
  experimental noise that increases the transition slightly above
  $V_w = V_0 = 0.07$~mm/s.}
\label{fig:onset_of_pinning_exp}
\end{figure}
The shape of the FF is not arbitrary; rather a wide range of initial
stimulations will result in fronts that converge onto the same
structure.  For $V_w = V_0$, the shape of the FF corresponds well with
the advective separatrix having $V_w = 0$.

As the strength of the imposed wind is increased, the shape of the FF
evolves considerably.  Figure \ref{fig:pinfronts3} shows time-averaged
images of the steady-state reaction fronts for several different wind
speeds.  With increasing wind speed, the contact point of the FF with
the upper boundary does not move much.  There is also a shift-flip
symmetry apparent in Fig.~\ref{fig:pinfronts3}b-\ref{fig:pinfronts3}g;
for every FF originating from a contact point there is a flipped
version of the same structure originating from a contact point one
vortex width leftward. Consequently, for any wind speed, the leading
edge of the front could be pinned to any one of these contact points;
i.e., any FF could be replaced by the same shape, shifted by one
vortex and flipped vertically.

\begin{figure}[bt]
\includegraphics[width=\linewidth]{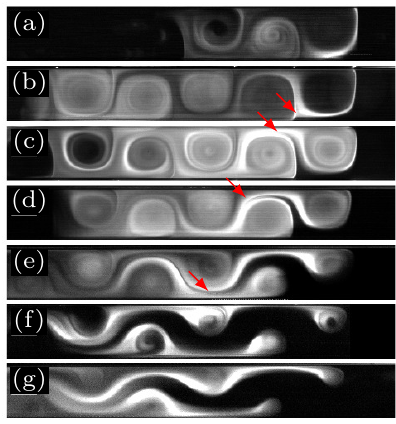}
\caption{(Color online.)  Time-averaged images of steady-state
  reactions for several wind speeds.  $U = 1.4$~mm/s for all.  $V_w =$
  (a) 0.15~mm/s, (b) 0.16~mm/s, (c) 0.20~mm/s, (d) 0.30~mm/s, (e)
  0.60~mm/s, (f) 0.90~mm/s, and (g) 1.2~mm/s.  Arrows indicate apparent
  discontinuities in the FF tangent direction. }
\label{fig:pinfronts3}
\end{figure}

The shift-flip symmetry is also relevant to a change in the structure
of the FF as the wind speed is increased.  The front develops a point,
or corner, with an apparently discontinuous derivative
(Fig.~\ref{fig:pinfronts3}b). This point moves leftward for larger and
larger wind speeds (Fig.~\ref{fig:pinfronts3}c-\ref{fig:pinfronts3}g).
This concave corner first appears near the downwind contact point (one
vortex width downwind in Fig.~\ref{fig:pinfronts3}b) and moves away
from the channel wall.  In this situation, the FF is composed of a
combination of smooth curves that originate at different contact
points.

Above a minimum wind speed, the shape of the FF is no longer uniquely
determined (modulo the flip-shift symmetry); rather, more than one
front shape is possible, depending on the manner in which the front is
triggered (Fig.~\ref{fig:diff_triggers}).  It is possible to trigger a
reaction front that pins only to the structure emanating from a single
contact point, as in Fig.~\ref{fig:diff_triggers}a.  But the same flow
allows for other FFs, such as in Fig.~\ref{fig:diff_triggers}b.  The
number of different possible FF shapes increases with the wind. As can
be seen in both Fig.~\ref{fig:pinfronts3} and
Fig.~\ref{fig:diff_triggers}, the front shapes are stretched out
significantly with increasing wind speed, spanning more and more
vortex cells.  For all except the smallest wind speeds, a FF can be
composed of structures pinned onto adjacent vortex contact points, as
in Figs.~\ref{fig:pinfronts3}b-\ref{fig:pinfronts3}g and
Figs.~\ref{fig:diff_triggers}b and \ref{fig:diff_triggers}c.  For
larger wind speeds, additional FF shapes are possible. As an example,
Fig.~\ref{fig:diff_triggers}d shows a FF composed of two structures
originating from contact points separated by 5 vortex widths.

Experimentally, the more complex steady-state front shapes are often
found by simultaneously triggering the reaction in multiple
locations. However, these complex shapes appear to be sometimes
accessible with even a single, well-placed trigger. A more detailed
theoretical treatment of these ``basins of attraction'' is in
preparation.

\begin{figure}[bt]
\includegraphics[width=\linewidth]{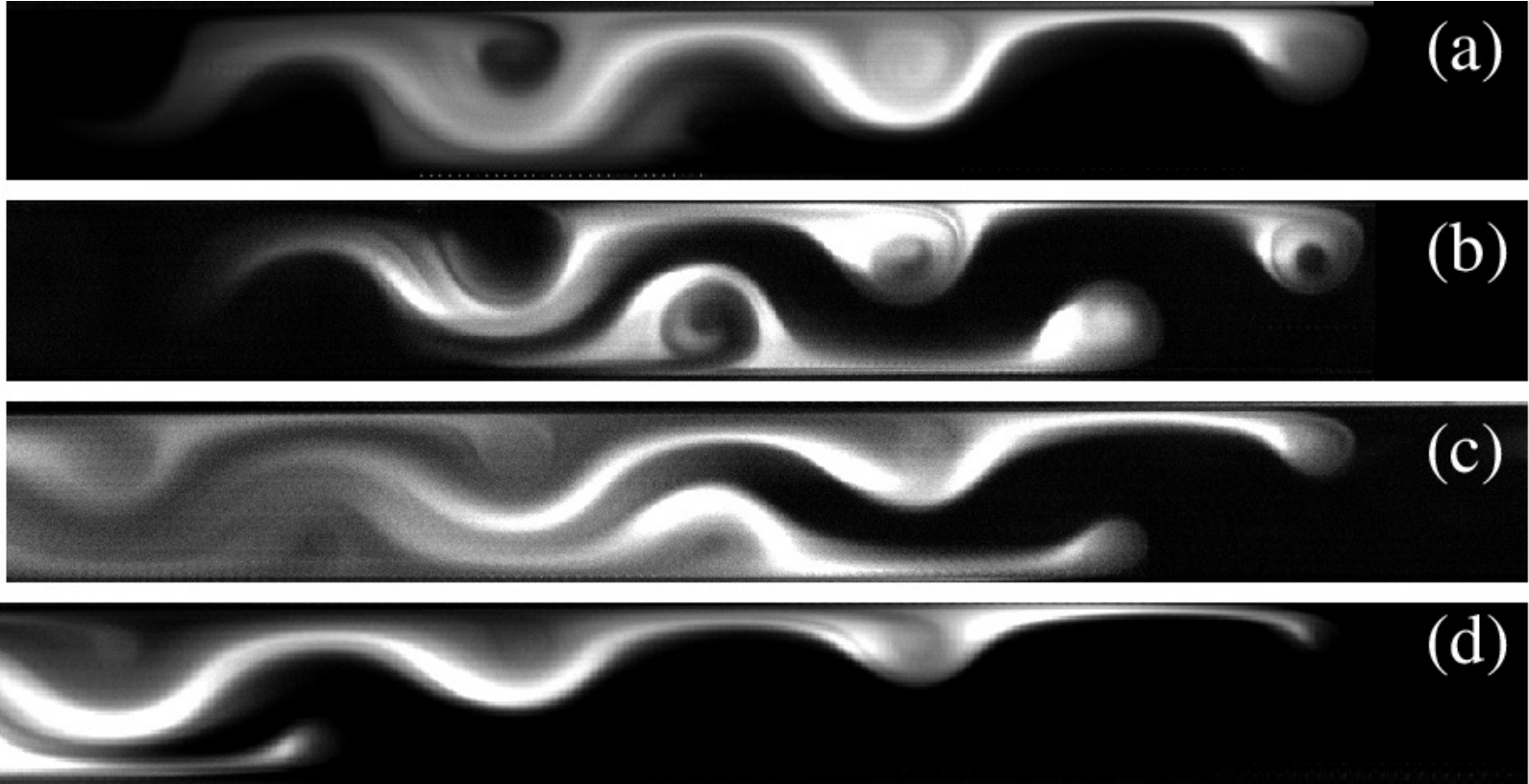}
\caption{Multiple FFs are realized with the same flow, depending on
  how the reaction is triggered.  $U = 1.4$~mm/s for all. $V_w$ is
  0.90~mm/s for (a,b) and 1.2~mm/s for (c,d).}
\label{fig:diff_triggers}
\end{figure}

For large enough wind, the stable state is lost completely, with the front
being ``blown backwards'' downwind.  A complete parameter space showing
the range of wind speeds for FFs can be found in Ref.~\cite{Schwartz08}.

\section{BIM review}
\label{sec:BIM}


We model advection-reaction-diffusion systems, such as the above
experiments, by considering only the front.  This amounts to taking
the so-called ``sharp-front'', or geometric-optics limit.  While some
other studies have made use of a grid-based computational scheme
\cite{Abel02, Cencini03}, focusing on the front is numerically
economical and theoretically insightful.  By assuming that the front
progresses in a curvature-independent way~\cite{Note1}, 
%
%
the front may be regarded as the collection of independent front
elements that comprise it.  Although not crucial to the basic ideas
here, we also assume that the ``burning speed''~\cite{Note2}
%
%
(i.e. front propagation speed in the local fluid frame) is homogenous
and isotropic.

A front is the oriented boundary of a burned region with orientation
defined by the normal vector $\nhat$ pointing away from the burned
region. (We can also refer to the orientation using the tangent vector
$\ghat$ where $\nhat \times \ghat = +1$, i.e. pointing out of the
plane.)  Denoting by $\mathbf{r}$ the $xy$-position of a front element
and by $\theta$ the angle from the $x$-axis to $\ghat$, a front is a
curve in $xy\theta$-space that satisfies the \emph{front-compatibility
  criterion},
\begin{align}
\label{eq:front_compatibility_criterion}
\frac{d \rvec}{d \lambda} \propto \ghat(\theta),
\end{align}
where $\lambda$ is some smooth parameterization of the curve.  The
above assumptions lead to the following three-dimensional ODE
governing the evolution of an individual front element $(\mathbf{r}(t),\theta(t))$.
\begin{subequations}
\begin{align}
\dot{\mathbf{r}} &= \mathbf{u} + v_0 \hat{\mathbf{n}}, \label{eq:3DODEa} \\ 
\dot{\theta} &= - \hat{n}_i u_{i,j} \hat{g}_j, \label{eq:3DODEb}
\end{align}
\label{eq:3DODE}
\end{subequations}
where $\mathbf{u}$ is the prescribed fluid velocity field, which is
nondimensionalized by dividing by $U$, the maximum fluid vortex speed
in the absence of wind.  That is, in the absence of wind, the maximum
value of $u$ is unity.  Here, $v_0 = V_0/U$ is the nondimensionalized
front propagation speed in the comoving fluid frame.  The position
variable $\mathbf{r}$ is scaled so that the width of each vortex and
of the channel is unity.  Time is scaled by the advection time $D/U$,
where $D$ is the (dimensionful) vortex width.  Note that
$\hat{\mathbf{g}} = (\cos \theta, \sin \theta)$ and $\hat{\mathbf{n}}
= (\sin \theta, -\cos \theta)$ indicate the tangent to the front
element and the normal direction (propagation direction),
respectively. Furthermore, $u_{i,j} = \partial u_i / \partial r_j$ and
repeated indices are summed.  The total translational motion of a
front element is the vector sum of the fluid velocity and the front
propagation velocity in the fluid frame, Eq.~(\ref{eq:3DODEa}).  The
change in orientation is determined entirely kinematically;
Eq.~(\ref{eq:3DODEb}) describes the angular velocity of a material
line embedded in the fluid.

\begin{figure}[bt]
\includegraphics[width=1\linewidth]{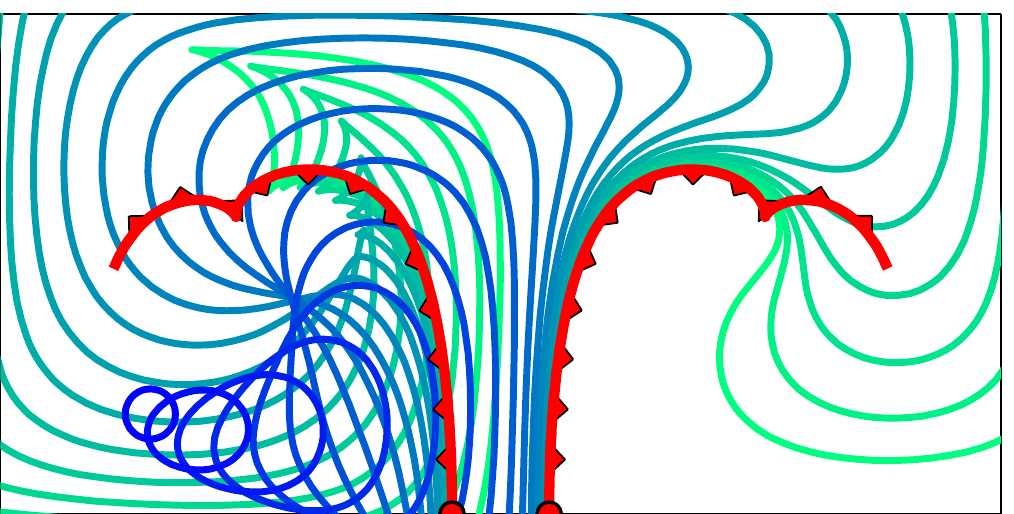}
\caption{(Color online.)  Evolution of reaction front (blue to green)
  in two counter-rotating vortices. Stimulation on lower left grows
  while being acted on by the flow. Two BIMs (red) emanate from BFPs
  on the bottom channel wall. The ``burning direction'' of each BIM is
  indicated by red triangles. The reaction passes through oppositely
  oriented BIM, but is blocked by cooriented BIM. Finally the reaction
  front wraps around cusp of right BIM.}
\label{fig:BIM_barrier}
\end{figure}

Invariant manifolds of the full 3D ($xy\theta$) dynamics,
Eq.~(\ref{eq:3DODE}), depend upon both the fluid flow and front
propagation, and therefore differ from the invariant manifolds of the
underlying advection dynamics.  We focus on the 1D unstable manifolds
attached to the burning fixed points (BFPs)---i.e. fixed points of
Eq.~(\ref{eq:3DODE})---that are of stability type stable-stable-unstable
(SSU).  We call these \emph{burning invariant manifolds} (BIMs).  It
has been demonstrated theoretically and experimentally that these BIMs
are ``one-way'' barriers to front propagation in flows
(Fig.~\ref{fig:BIM_barrier}).  That is, they prevent reactions from
crossing in one direction but allow them to cross in the other.  It is
somewhat surprising that these codimension-two manifolds are in fact
barriers.  BIMs are not generic curves through $xy\theta$-space; they
obey the front compatibility criterion
\ref{eq:front_compatibility_criterion} \cite{Mitchell12b}.  All
fronts, including BIMs, obey the front no-passing lemma: no front can
overtake another front from behind.

An interesting consequence of the front propagation dynamics is the
ability to create cusps in fronts and in the BIMs.  In
time-independent flows, cusps mark a change in the bounding nature of
BIMs.  Figure.~\ref{fig:BIM_barrier} illustrates the evolution of a
small circular front (lower left, blue).  During its evolution (blue
to green), it passes through the left BIM (red) because of their
opposite orientation.  It then presses up against the right BIM
(cooriented) and follows closely until reaching the BIM cusp where the
BIM's relative orientation changes, thus allowing passage of the
reaction front.  We define the \emph{BIM core} as the BIM segment that
includes the BFP and extends in both directions until reaching either
a cusp, a new BFP, or infinity.


\section{Frozen fronts: basic theory}
\label{sec:FF}

Consider a fluid domain $D$ that is connected, but not necessarily
simply connected.  In this paper, we focus on a channel flow where
$D = \mathbb{R} \otimes [0,1]$, but the results obtained in this
section are general.  We now introduce a more precise mathematical
definition of frozen front than the more intuitive definition used
thus far.  First, we define \emph{frozen domain} as a burned subdomain
of $D$ that is invariant under the burning dynamics and stable to
perturbation~\cite{Note3}.
%
%
(See App.~\ref{app:stability} for a precise discussion of this notion
of stability).  Since the fluid is incompressible, neither the frozen
domain nor its complement may be of finite area.  A \emph{frozen
  front} (FF) is the oriented boundary of a frozen domain that
separates the burned from the unburned fluid.  (The frozen domain
boundary that coincides with the boundary of $D$, i.e. a domain wall,
is then not considered part of the frozen front.)  As with any front,
we choose the orientation of the FF to be a unit vector normal to the
FF pointing outward from the burned region.  Since the frozen domain
is unbounded, the FF cannot be a closed curve.

Consider a particular FF $F$ as a curve in $xy\theta$-space.  An
individual front element on $F$ can evolve into the interior of the
frozen domain, but not vice versa
(Fig.~\ref{fig:front_element_lifecycle}).  Since the frozen domain is
invariant, the time evolution of $F$ under Eq.~(\ref{eq:3DODE})
includes $F$ for any time t.  In other words, the backward trajectory
of any point on $F$ remains on $F$.  Thus the FF must be the union of
segments of front element trajectories, and is hence a piece-wise
smooth curve.  Each segment follows a trajectory from $t = -\infty$ to
some $t = t_f$.  This implies each segment lies within the unstable
manifold emanating from a fixed point, which may be at infinity (see
App.~\ref{app:far_field}).

\begin{figure}[bt]
\includegraphics[width=1\linewidth]{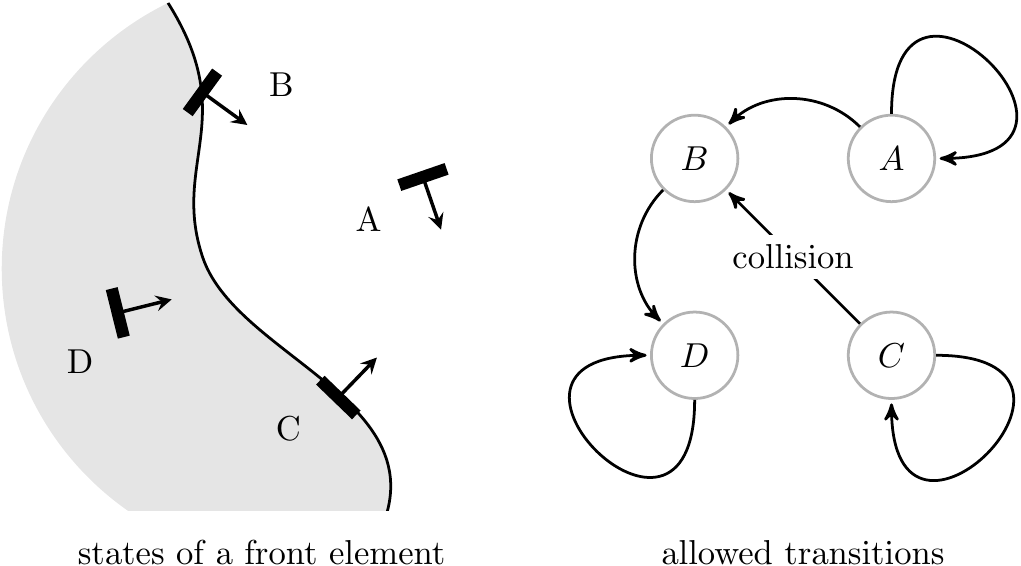}
\caption{An arbitrary front element exists in one of four states with
  respect to a burned region: $A$, unburned region; $B$, on the
  boundary of the burned region with non-outward-normal burning
  direction; $C$, on the boundary and oriented in the outward normal
  direction; $D$, inside the burned region.  The diagram on the right
  indicates how the state of a front element may change as it
  coevolves with the burned region.  These same dynamics hold between
  a front element and the fluid domain boundary.}
\label{fig:front_element_lifecycle}
\end{figure}

%
%
%

On a smooth segment of FF a front element is either a fixed point of
the flow, or it ``slides'' along the segment satisfying
$\rdot \propto \ghat$.  Any FF can thus be decomposed into a
collection of these \emph{sliding fronts} (App.~\ref{app:sliding_fronts}).
Here we summarize the geometry of sliding fronts detailed in
App.~\ref{app:sliding_fronts}.  First, sliding fronts only exist in
the domain where $|\uvec| \ge v_0$.  We refer to this domain as the
fast zone FZ, and the complementary domain as the slow zone SZ.  In
the FZ, the structure of the sliding fronts can be simply
characterized.  At every point in the FZ interior, there are two
allowed sliding front orientations characterized by the angle,
\begin{align}
\beta = \arccos(-v_0/|\uvec|),
\end{align}
between the front propagation direction $\nhat$ and the fluid flow
$\uvec$ (Fig.~\ref{fig:generic_intersection_of_sliding_fronts},
Lemma~\ref{lemma:SFsCrossStreamlinesAngle}).  In the limit
$v_0/|\uvec| \to 0$, the two sliding fronts become parallel (burning
in opposite directions) and align with the streamlines, thus
recovering the advective case.  We refer to these two choices of
orientation as ``$+$'' and ``$-$'' corresponding to
$\sgn(\rdot \cdot \ghat)$.  Each choice of orientation defines a set
of sliding fronts whose projection foliates the FZ.
When the sliding fronts are considered as curves in $xy\theta$-space,
they foliate a two-dimensional surface which is a double-branched
covering of the FZ.  (See
Figs.~\ref{fig:hyperbolic_wpm_surface_and_flow} and
\ref{fig:elliptic_wpm_surface_and_flow} for examples.)

\begin{figure}[bt]
\includegraphics[width=0.6\linewidth]{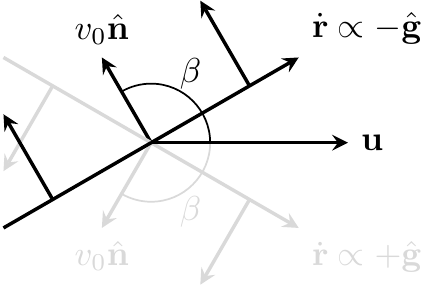}
\caption{A generic intersection of two sliding fronts (one black, one gray).  Each sliding front's propagation vector $v_0 \nhat$ cancels the normal component of the fluid velocity $\uvec$, leaving only motion tangent to the front.  The two orientations (black and gray) are symmetric about $\uvec$.}
\label{fig:generic_intersection_of_sliding_fronts}
\end{figure}

Consider a burned region bounded by two sliding fronts (on different
branches) that meet at a point as in
Fig.~\ref{fig:generic_intersection_of_sliding_fronts}.  In principle,
the burned region may be either locally concave or locally convex at
this point.  However, the convex case is not relevant to FFs because
any convex corner will be smoothed out after an arbitrarily short
evolution.  Therefore, in the interior of the FZ, a FF is simply a
union of smooth curves that meet at concave angles specified by the
local burning-to-fluid-speed ratio $v_0/|\uvec|$.  In the limit
$|\uvec| \to^+ v_0$, the two branches meet on the boundary of the SZ.
At all such points of the boundary, two sliding fronts meet with
burning directions $\nhat$ aligned.  There are two cases to consider.

In the first case, assume $\nhat$ is not perpendicular to the SZ.
Then the sliding front trajectory passes through the fold joining the two branches in such a way that it forms a cusp in the $xy$-plane (Fig.~\ref{fig:cusp_in_3D}).
\begin{figure}[bt]
\includegraphics[width=\linewidth]{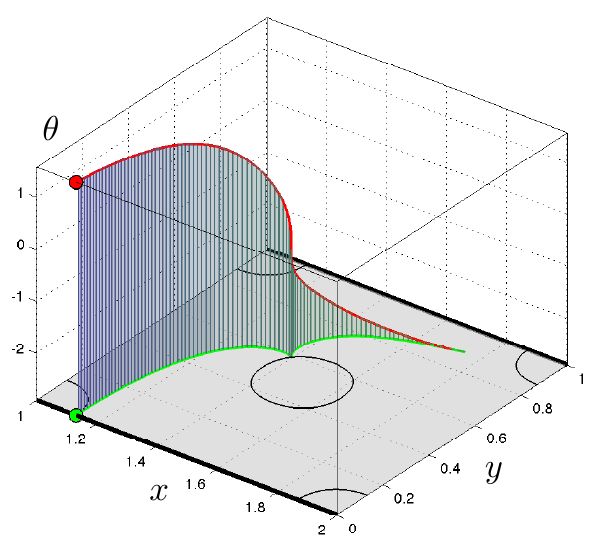}
\caption{(Color online.)  The BIM (red) is a smooth curve in $xy\theta$-space. Its
  projection (green) onto the $xy$-plane has a cusp on the boundary of
  the SZ.}
\label{fig:cusp_in_3D}
\end{figure}
We observed above that cusps mark a change in the bounding behavior of
BIMs.  This change occurs at cusps along any sliding front (including
BIMs), which implies that \emph{a FF cannot contain a cusp}.
Figure~\ref{fig:cusp_not_on_pinning_front} illustrates why; it shows
the two possible burned regions that would be bounded by such a cusp.
In both cases, one segment of the sliding front has a burning
direction incompatible with, i.e. pointing into, the proposed burned
region.
%
\begin{figure}[bt]
\includegraphics[width=\linewidth]{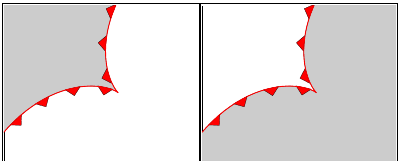}
\caption{(Color online.)  A sliding front (red) with a cusp cannot bound a burned
  region (gray). Either choice of shading leads to an incompatibility
  in front orientation in which one piece of the front points into the
  burned region.}
\label{fig:cusp_not_on_pinning_front}
\end{figure}

Referring to Fig.~\ref{fig:cusps_come_together}, as $\nhat$ becomes
perpendicular to the SZ at the point $\mathbf{x}$, the cusp becomes
tangent to the SZ.  By symmetry, a cusp also approaches $\mathbf{x}$
from the other side.

\begin{figure}[bt]
\includegraphics[width=1\linewidth]{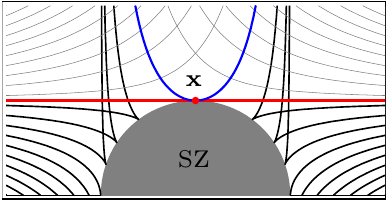}
\caption{(Color online.)  Sliding fronts (black) strike the SZ,
  forming cusps on either side of an SSU BFP $\mathbf{x}$.  As the
  cusps on either side approach $\mathbf{x}$, they become more
  horizontal, eventually joining tangent to each other at
  $\mathbf{x}$.}
\label{fig:cusps_come_together}
\end{figure}

In the second case, where $\nhat$ is perpendicular to the SZ,
Ref.~\cite{Mitchell12b} showed that the sliding front must meet the SZ
at a BFP $\mathbf{x}$.  This could be thought of as the meeting of two
cusps (Fig.~\ref{fig:cusps_come_together}).  Each segment of the cusp
on the left pairs with its symmetric segment on the right to form a
smooth curve in $xy$-space passing through $\mathbf{x}$.  Each of
these two combined curves is a 1D stable or unstable manifold of
$\mathbf{x}$.  There are four possible stability types of BFPs in
$xy\theta$-space: SSS, SSU, SUU, and UUU.  These are illustrated in
Fig.~\ref{fig:stability_table}.  For SSU and SUU BFPs, the dynamics
restricted to the sliding surface is of stability SU
(Lemma~\ref{cor:restricted_bfp_stability}).
Figure~\ref{fig:stability_table} illustrates the 1D stable and
unstable manifolds attached to such BFPs.  For SSS and UUU points, the
dynamics within the constraint surface is of stability SS and UU
respectively (Lemma~\ref{cor:restricted_bfp_stability}).  Since the
BFP is either a sink or source in this case, it is met by an infinite
number of sliding trajectories.


Only two of the four stability types can occur on a frozen front.
Suppose a frozen front is tangent to a SZ at a BFP where the burning
direction is into the SZ, as for SUU or UUU stability types.  Though
the burned region behind the BFP does not intersect the SZ, a small
perturbation of the burned region at the BFP can intersect the SZ.
Once any of the SZ is burned, the entirety of the SZ must eventually
be burned and remain burned forever
(Lemma~\ref{lem:any_SZ_is_all_SZ}).  Since we require frozen fronts to
be stable under small perturbations (App.~\ref{app:stability}), SUU and
UUU BFPs cannot occur on a FF.

The two remaining stability types SSU and SSS can exist on a FF.  We
previously showed that the FF consists of unstable manifolds.  Only
the SSU points have unstable manifolds.  Finally, since cusps are not
allowed on FFs (shown earlier), we have one of the main results of
this paper.

\begin{proposition}
\label{prop:FF}
Frozen fronts are built from BIM cores.  More precisely, each frozen
front is generated by some set $\mathcal{S}_{FF}$ of SSU BFPs.  The
frozen front is obtained by tracing the unstable manifold from each
point in $\mathcal{S}_{FF}$ until one of three things occurs: it
intersects any other BIM core emanating from $\mathcal{S}_{FF}$; it
intersects any domain boundary; or it terminates at an SSS BFP.
\end{proposition}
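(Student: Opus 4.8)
The plan is to assemble the structural facts already in hand---that a frozen front is a union of backward-complete trajectory segments, that it can carry no cusps, and that stability forbids all but two BFP types---and to show that these three ingredients force the claimed BIM-core decomposition together with exactly the three termination alternatives.

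First I would fix a frozen front $F$ and recall that, because the frozen domain is invariant, the backward trajectory of every point of $F$ stays on $F$; hence $F$ is a union of front-element trajectory segments, each defined on $(-\infty, t_f]$, and is piecewise smooth. Each maximal smooth piece therefore lies in the unstable manifold of some fixed point (possibly at infinity, to be treated via App.~\ref{app:far_field}). I would then locate the non-sliding points of $F$: on the boundary of the SZ a sliding front either forms a cusp or meets the SZ perpendicularly at a BFP. Since a FF contains no cusps (shown above), every such contact must be a BFP with $\nhat$ perpendicular to the SZ. The stability requirement (App.~\ref{app:stability}) excludes SUU and UUU points, because a perturbation there ignites the SZ and, by Lemma~\ref{lem:any_SZ_is_all_SZ}, the whole SZ must then burn forever; this leaves only SSU and SSS. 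Because $F$ is built from unstable manifolds and only SSU points possess one, the generating fixed points are precisely SSU BFPs. I define $\mathcal{S}_{FF}$ to be this set, so that each maximal smooth piece of $F$ is (part of) the unstable manifold of a point of $\mathcal{S}_{FF}$, i.e.\ a BIM core.

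Next I would analyze how each such BIM core, traced forward from its SSU BFP, leaves $F$. Three exits are possible. (b) If the manifold reaches $\partial D$, the barrier terminates against the wall, which itself bounds the frozen domain. (c) Since the restricted dynamics at an SSS point is a sink (stability SS by Lemma~\ref{cor:restricted_bfp_stability}), the manifold may asymptote into such a point and end there. (a) Otherwise, to keep $F$ a complete, consistently oriented barrier, the smooth piece must join a neighboring smooth piece; this junction is an intersection of two sliding fronts on opposite branches, which must be a concave corner (convex corners smooth out after arbitrarily short evolution and so cannot persist on a FF), i.e.\ the meeting of two BIM cores from $\mathcal{S}_{FF}$. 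Finally I would exclude the one remaining apparent possibility---running off the end of a BIM core at its bounding cusp---using the no-cusp result directly: were the core to reach that cusp while still part of $F$, then $F$ would contain the cusp, a contradiction; hence one of (a), (b), (c) must intervene first.

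The main obstacle I anticipate is the rigorous treatment of case (a): showing that wherever a smooth FF segment ends without reaching $\partial D$ or an SSS sink, it must in fact cross a second BIM core at a concave angle, and that the two branches involved carry compatible burning directions so that the burned region remains well defined on both sides of the corner. This is where the global ``closure'' of the barrier enters---one must rule out a segment simply stopping in the FZ interior, or connecting through a convex corner---and it rests on the two-branch foliation of the sliding surface together with the orientation bookkeeping of Fig.~\ref{fig:generic_intersection_of_sliding_fronts}. The far-field behavior, needed to make precise the phrase ``unstable manifold emanating from a fixed point, which may be at infinity,'' is the secondary technical point, and I would defer it to App.~\ref{app:far_field}.
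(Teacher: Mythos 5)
Your proposal is correct and follows essentially the same route as the paper: the backward-invariance decomposition of the FF into trajectory segments lying in unstable manifolds, the no-cusp result, Lemma~\ref{lem:any_SZ_is_all_SZ} together with the stability requirement of App.~\ref{app:stability} to exclude SUU and UUU BFPs, the observation that only SSU points carry unstable manifolds (hence $\mathcal{S}_{FF}$ consists of SSU BFPs), and the same three termination alternatives (concave corner with another BIM core, domain wall, SSS point with the real-eigenvalue caveat of App.~\ref{app:SSS}). The closure difficulty you flag in case (a)---ruling out a segment simply stopping in the FZ interior---is left at the same informal level in the paper's own presentation, so your account is faithful to, and no less rigorous than, the original.
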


So far we have focused our attention on the generation of the FF from BFPs.
Here we shift our attention to consider how the sliding segments of a FF end.
We have already discussed the most common case where segments intersect at a concave corner.
There exist two other possibilities, termination on an SSU or SSS BFP.

An SSU BFP has a stable manifold that contains an \emph{incoming}
sliding front.  A FF can therefore contain a segment which is a
heteroclinic connection consisting of a sliding front between SSU
points.  Figure~\ref{fig:SSU_to_SSU_connection}b shows two SSU BFPs
joined by such a connection flowing from the upper to the lower BFP.
This configuration is a FF; in particular, it is stable to
perturbations of the burned region (App.~\ref{app:stability}).  In one
sense, the FF is also structurally stable because generic
perturbations of the flow yield frozen domains with a similar shape
(Figs.~\ref{fig:SSU_to_SSU_connection}a and
\ref{fig:SSU_to_SSU_connection}c).  In another sense, it is not
structurally stable, because generic perturbations break the
heteroclinic connection, thus altering the dynamics along the front.
Some of these perturbations cause the lower SSU BFP to fall behind the
FF (Fig.~\ref{fig:SSU_to_SSU_connection}a), while other perturbations
cause it to push through, and in doing so contribute a segment of
unstable sliding front to the FF
(Fig.~\ref{fig:SSU_to_SSU_connection}c).
As seen in Figs.~\ref{fig:SSU_to_SSU_connection}b and
\ref{fig:SSU_to_SSU_connection}c, both of these perturbations return
the system to the generic case.  So while SSU BFPs can exist as
``termination points'' along a FF, this is not generic.

\begin{figure}[bt]
\includegraphics[width=\linewidth]{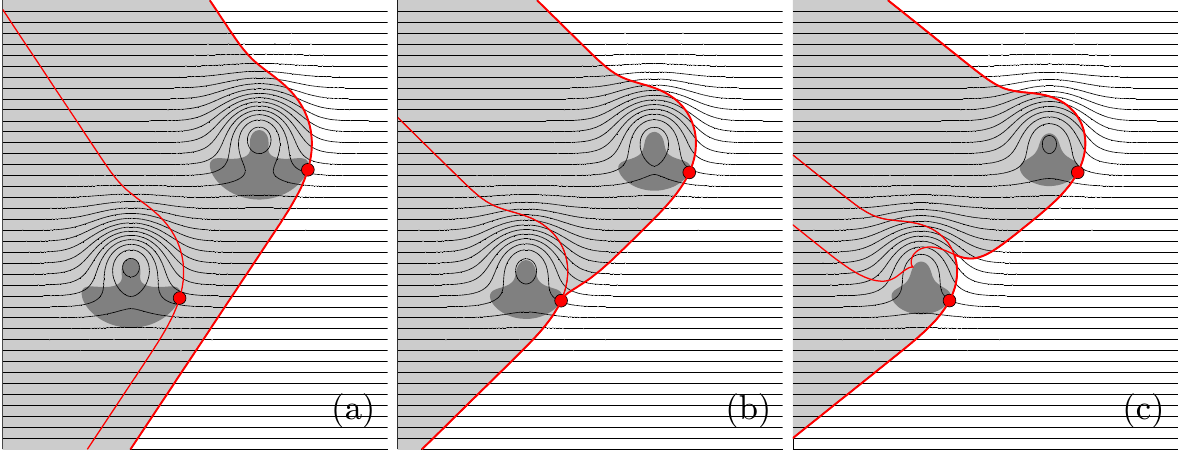}
\caption{(Color online.)  The SSU--SSU connection is not structurally
  stable as the wind speed is varied.  Nevertheless, the frozen domain
  (light gray) varies continuously. SZs are dark gray.  (a) The
  relation $v_w < v_{c}$ places the lower SSU point behind the FF attached
  to the upper SSU point.  (b) The equality $v_w = v_{c}$ makes the unstable
  manifold from the upper SSU point coincide with the stable manifold
  of the lower point.  (c) The relation $v_w > v_{c}$ pushes the lower
  SSU point ahead, placing it and its BIM on the FF. The FF is now composed
  of two BIMs meeting at a  concave corner.  }
\label{fig:SSU_to_SSU_connection}
\end{figure}

Finally, we consider the SSU to SSS connection.  The SSS point
attracts all points within a 3D neighborhood and, therefore, it
attracts all sliding fronts within some neighborhood on the invariant
sliding surface.  It might then seem that this SSS point can be on a
FF containing any of these incoming sliding fronts.  However, the
sliding front must reach the SSS point without having formed a cusp.
This can only happen if the eigenvalues of the SSS point are real (see
App.~\ref{app:SSS}).  Such SSS points do exist, albeit for what
appears to be a small parameter range.

\section{Theory: windy alternating vortex chain flow}
\label{sec:theory}

We continue our discussion of FFs using a simple numerical model of the experimental fluid flow.

\subsection{Numerical model}

The stream function that describes the flow is
\begin{align}
\label{eq:WAVC}
\Psi = \frac{1}{\pi} \sin(\pi x) \sin(\pi y) - v_w y,
\end{align}
where $u_x = d \Psi/dy$ and $u_y = - d \Psi/ dx$.
This model has been used in several previous studies, on both fluid mixing and reacting flows, yielding reasonable agreement with experiment.
Our intent here is to illustrate the theory of frozen fronts for a particular fluid flow, and to reproduce basic features of the experimental flow in Sec.~\ref{sec:experiments}.

There is a weak three-dimensional component to the vortex flow due to
Ekman pumping that carries fluid toward the vortex centers at the
bottom of the fluid layer and up through the vortex cores
\cite{Solomon03}.  This effect is not included in the model.  Also,
while the model has free-slip boundary conditions, this is certainly
not true in the experiment.  Nevertheless, the simplified free-slip
model of Eq.~(\ref{eq:WAVC}) has been used successfully in modelling
several experiments on passive transport and front propagation in
vortex flows \cite{Paoletti05, Solomon88, Camassa91, Cencini03,
  Abel01}.  The last term in Eq.~(\ref{eq:WAVC}) numerically models
the fluid wind observed in the moving frame of the vortices.

\subsection{Dynamical systems analysis}

We begin by considering a flow where the wind is of insufficient
strength to produce a FF (Fig.~\ref{fig:sim_no_pinning}a).  The
streamlines indicate that this is essentially a vortex flow, but with
a sinuous, left-moving jet.  In Fig.~\ref{fig:sim_no_pinning}b a small
circular stimulation (purple) is made in the lower left.  This circle
evolves outward to the left and right while being deformed by the
flow.  The rightward moving front is able to make slow progress
``upwind''.  Notice though that it is blocked at the vortex boundary
near the bottom and middle, and must wind around through the top of
the channel.  In this way, the reaction continues winding rightward
through the channel indefinitely (Fig.~\ref{fig:sim_no_pinning}c).

Figure~\ref{fig:sim_no_pinning}d illustrates all SZs (gray), SSU BFPs
(red), and BIMs (red with arrows indicating the burning direction) in
this system.  Two of the SZs contain the elliptic advective fixed
points in the vortex centers.  The others contain hyperbolic advective
fixed points on the channel walls.  Note that the SZs are slightly
offset from a square lattice.  This is due to the small wind added.
We show only the SSU BFPs since, as we will show
(Prop.~\ref{prop:FF}), they are the generators of the unstable
manifolds which combine to form FFs.  Each BFP lies on a SZ boundary
and, because it is SSU, is oriented away from the SZ.  The BIMs
emanating from these BFPs are similarly oriented.  Each BIM spirals
into a vortex center where it forms a cusp on an elliptic SZ (i.e. a
SZ that contains an elliptic advective fixed point).  Only the
incoming portion of the cusp is pictured because, as shown earlier,
the FF cannot contain cusps, and so the remainder of the BIM will not
be relevant.

Figure~\ref{fig:sim_no_pinning}e summarizes the dynamical structures
relevant to the behavior observed in Fig.~\ref{fig:sim_no_pinning}b
and~\ref{fig:sim_no_pinning}c.  The BIM core shown is responsible for
blocking front propagation at the bottom and center of the channel in
Fig.~\ref{fig:sim_no_pinning}b.  The transverse stability of the BIM
leads to the front's rapid convergence upon it
(Fig.~\ref{fig:sim_no_pinning}b).  As the front evolves further
(Fig.~\ref{fig:sim_no_pinning}c), it reaches the cusp at the end of
the BIM core and winds around it.  The BIM core does not form a
complete span across the channel, and thus does not form a
\emph{global} barrier to the propagation of fronts.  This is the
situation seen in experimental images
Figs.~\ref{fig:onset_of_pinning_exp}a, b, c.
\begin{figure}[bt]
\includegraphics[width=\linewidth]{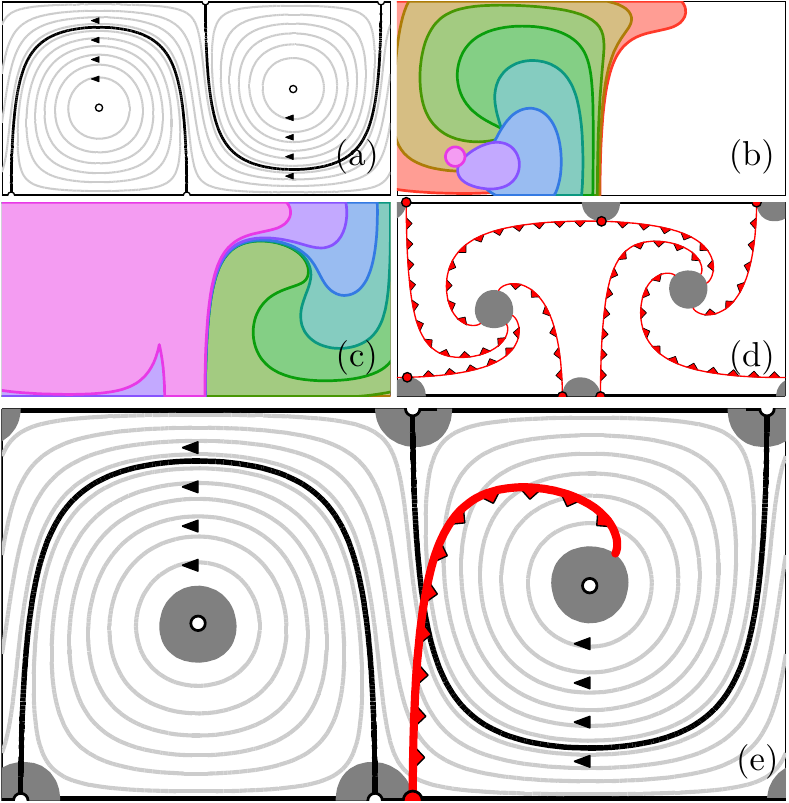}
\caption{(Color online.)  Small wind speed ($v_0 = 0.3, v_w =
  0.15$).
  (a) Fluid flow streamlines, fixed points and attached invariant
  manifolds. (b) Sequence of fronts shows preliminary convergence near
  bottom to roughly vertical curve. (c) Further evolution; lower edge
  converges to curved line while the rest proceeds around and to the
  right. (d) BIMs attached to BFPs. SZs shaded gray. (e) The one BIM
  most important for above front evolution - shown against advective
  structure.}
\label{fig:sim_no_pinning}
\end{figure}

Now we increase the wind speed until it precisely balances the burning
speed, $v_w = v_0$ (Fig.~\ref{fig:sim_critical_pinning}).  Stimulating
in the lower left (purple) we find that the reaction approaches a
vertical curve (Fig.~\ref{fig:sim_critical_pinning}b), and so the
reaction is confined to the left side.  This appears to be a candidate
for a frozen domain.  In Fig.~\ref{fig:sim_critical_pinning}c, we test
the stability of this region by introducing a small sinusoidal
perturbation.  The rightward component of this perturbation grows,
eventually filling in the entire cell to the right, demonstrating that
this region is not stable and therefore not a frozen domain.

Let us examine the dynamical structures in
Fig.~\ref{fig:sim_critical_pinning}d,e.  The increase in wind has
caused the SZs to shift slightly relative to
Fig.~\ref{fig:sim_no_pinning}d,e; the two on the lower boundary move
together, as do the two on the upper boundary; those in the vortices
move up or down depending on their rotational sense.  The central BIM
is now a straight vertical line.  It is important to note that this
BIM spans the entire channel with no cusp thus creating a
\emph{global} barrier to front propagation.  Symmetry of the flow
indicates that this BIM terminates at an SUU BFP at the top of the
channel.  We have previously argued that such a fixed point could not
be on a FF, and it is this fixed point that leads to the instability
demonstrated in Fig.~\ref{fig:sim_critical_pinning}c.

\begin{figure}[bt]
\includegraphics[width=\linewidth]{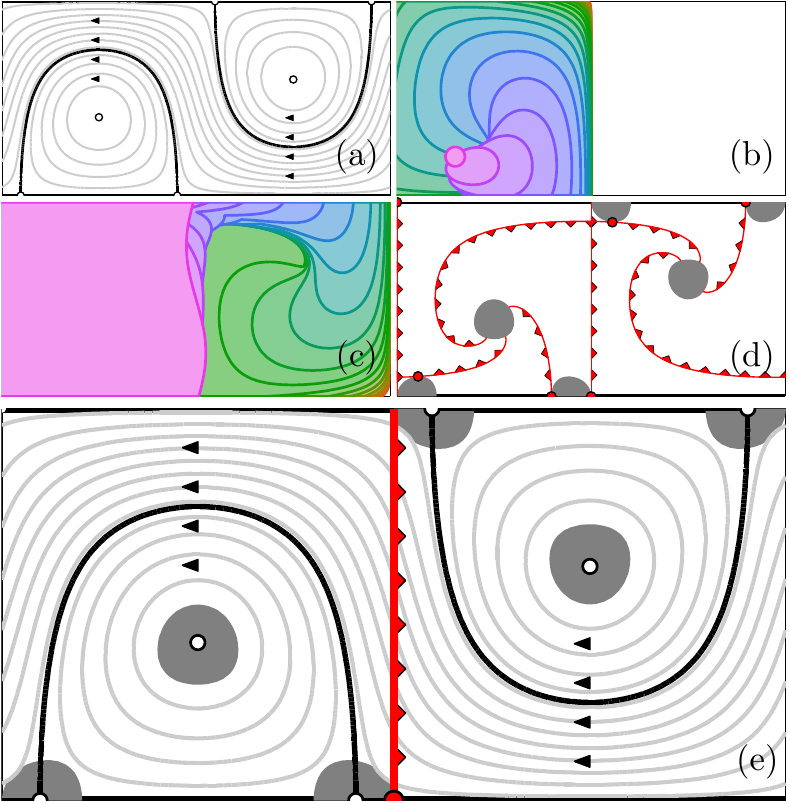}
\caption{(Color online.)  Critical wind speed ($v_0 = v_w = 0.3$). (a)
  Advective structure; similar to previous case. (b) This time, front
  progress (from the left) is completely blocked. (c) Perturbation of
  burned region shows instability. (d) Several BIMs, BFPs and SZs. (e)
  BIM of interest is a straight vertical line - coincides with
  separatrix of non-windy flow.}
\label{fig:sim_critical_pinning}
\end{figure}

Now we increase the wind beyond the critical value.  In
Fig.~\ref{fig:sim_type_1_pinning}a a stimulation on the left converges
to a burned region bounded by a smooth curve spanning the channel.
Unlike in Fig.~\ref{fig:sim_critical_pinning}c, a small rightward
perturbation in Fig.~\ref{fig:sim_type_1_pinning}b converges back to
this smooth curve, and hence this curve is a FF.
Figure~\ref{fig:sim_type_1_pinning}c shows that the smooth bounding
curve is the BIM emanating from the bottom BFP.  Note that this BIM
terminates at a point on the boundary that is not a BFP.  This
explains the situation seen in experimental images
Fig.~\ref{fig:onset_of_pinning_exp}d, e, f as well as
Fig.~\ref{fig:pinfronts3}a.

\begin{figure}[bt]
\includegraphics[width=\linewidth]{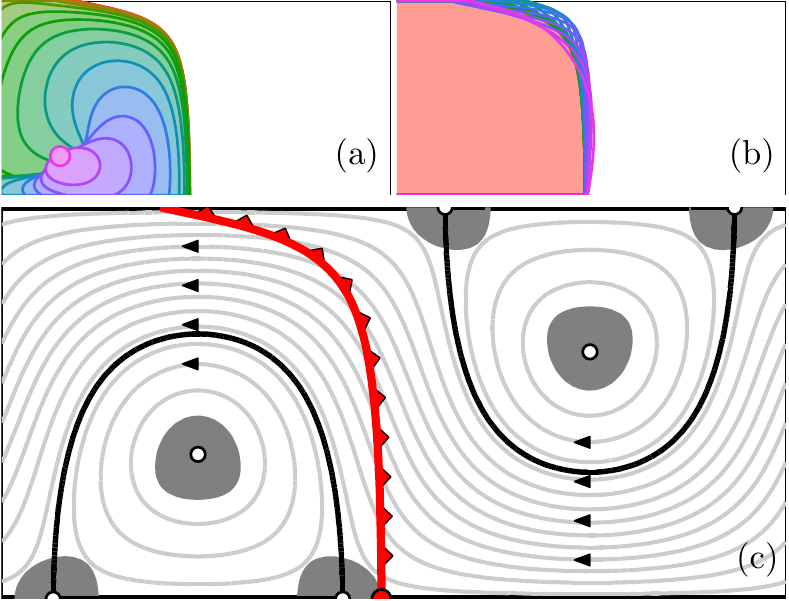}
\caption{(Color online.)  Wind greater than critical
  ($v_0 = 0.3, v_w = 0.4$). A stimulation on the left (a) converges
  onto a smooth curve that spans the channel.  In (b) a sinusoidal
  perturbation of this curve converges back to the curve, implying
  that it is stable.  (Only the last front is filled). (c) The BIM
  responsible for the FF spans the channel with no cusps.}
\label{fig:sim_type_1_pinning}
\end{figure}

Now that we have seen BIMs act as both local and global barriers, we
would like to understand the transition between these two cases in
more detail.  Imagine a deformation that takes the BIM in
Fig.~\ref{fig:sim_no_pinning}e to the BIM in
Fig.~\ref{fig:sim_type_1_pinning}c; What might this deformation look
like?  Lemma~\ref{lemma:SFsCrossStreamlinesAngle} ensures that the
angle between BIMs and streamlines is nonzero throughout the interior
of the FZ.  Therefore a BIM cannot form a tangency with the channel
wall (which must coincide with a streamline) in the interior of the
FZ.  Note, however, that a BIM \emph{cusp}, on the boundary of a SZ,
may encounter the channel wall without forming a tangency.  In fact,
this occurs when the cusp is perpendicular to the channel wall
(Lemma~\ref{lemma:SFOnFluidBoundary}).  This observation suggests two
deformation strategies: either move the existing cusp on the elliptic
SZ to the wall, or create a new cusp on the hyperbolic SZ and slide
the cusp to the wall.  While the first mechanism seems more
straightforward, and has not been ruled out theoretically, it has not
yet been observed.  However, the second mechanism is observed here.

In Fig.~\ref{fig:sim_cusp_formation} we increase $v_w$ through the critical value $v_w = v_0$ and follow the transformation of the BIM.
Beginning with a subcritical $v_w$ value in Fig.~\ref{fig:sim_cusp_formation}a, we see the BIM (green) that comes up from BFP $A$ (not shown) on the bottom wall and veers off to the right to form a cusp on the elliptic SZ (not shown).
This cusp marks the end of the BIM core.

Increasing the wind, the BIM is ``blown backward'' developing a
tangency (red and blue dashed) with the upper SZ.  This tangency is
not forbidden, because the SZ is not defined by a streamline.  Since
the front is burning away from the SZ, the tangency must occur at
either an SSU or SSS BFP on the upper SZ (according to
Lemma~\ref{lemma:SFIntersectsSZ} and Fig.~\ref{fig:stability_table}.)
Because the SZ is convex in this case, the BFP must be SSU.  The
heteroclinic connection is illustrated by the coincidence of the
unstable BIM of BFP $A$ and the stable BIM of BFP $B$ (red and blue
dashed).

Continuing to increase the wind, the BIM is blown further backward.
Now it does not meet the SZ at a tangency, and so the heteroclinic
connection is broken, giving way to a cusp, the other option allowed
by Lemma~\ref{lemma:SFIntersectsSZ}.  This cusp slides along the SZ,
with its angle changing to remain perpendicular to the fluid flow
(Lemma~\ref{lemma:SFIntersectsSZ}).  The cusp must rotate
counterclockwise, at least initially, so that its tangent points into
the SZ, as rotating clockwise would require the BIM to enter the SZ.

The BIM soon arrives at another tangency with the SZ
(Fig.~\ref{fig:sim_cusp_formation}b).  Here, however, the BIM is
burning into, rather than away from, the SZ.  This tangency implies a
heteroclinic connection with the SUU BFP $C$.  (Again, the UUU case
cannot occur because the SZ is convex; see
Fig.~\ref{fig:stability_table}.)  In a generic three-dimensional
dynamical system, heteroclinic connections between SSU and SUU fixed
points are codimension-two occurrences.  In this sytem, however, the
BIMs are constrained to the two-dimensional sliding surface, and so
the heteroclinic connection is a codimension-one occurrence.  Said
loosely, if a BIM is to sweep from one side of a SZ to the other, the
BIM cannot avoid connecting with at least two BFPs on the SZ boundary.

\begin{figure}[bt]
\includegraphics[width=0.8\linewidth]{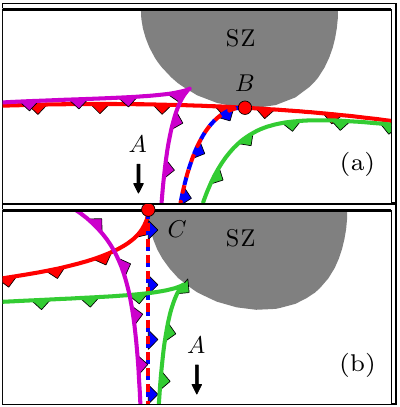}
\caption{(Color online.)  The basic mechanism in the transition to the
  first FF.  We increase the wind speed, showing the interplay between
  the BIM from BFP $A$ (not shown) and the upper SZ and its BFPs. (a)
  An SSU BFP $B$ lies on the bottom of the SZ.  Attached to it are
  BIMs (red) going left and right, both of which end in cusps on
  elliptic SZs (not shown).  (i) A BIM (green) comes up from the SSU
  BFP $A$ below (not shown) and then shadows the unstable BIM (red)
  going to the right.  (ii) The BIM (red and blue dashed) forms a
  tangency/heteroclinic connection with the BFP $B$.  (iii) The BIM
  (purple) is blown behind the heteroclinic connection, forming a
  cusp.  (b) An SUU BFP $C$ is shown at the top of the channel.  A BIM
  (red) lies within its unstable manifold and goes off to the left.
  (iv) A BIM (green) slides leftward along the SZ, approaching the BFP
  $C$.  (v) The BIM (red and blue dashed) forms a second
  tangency/heteroclinic connection with BFP $C$.  (vi) The BIM
  (purple) is blown beyond this heteroclinic connection, forming a
  complete span across the channel.  Since the BFP positions and SZs
  change slightly with $v_w$, the specific BFPs $B$ and $C$ shown, as
  well as their SZs, are calculated for the parameter values of the
  heteroclinic connections.}
\label{fig:sim_cusp_formation}
\end{figure}

Increasing the wind still further, the BIM, blown entirely clear of
the SZ, spans the entire channel, uninterrupted by cusps.  We have now
arrived at the FF configuration in Fig.~\ref{fig:sim_type_1_pinning}c.
This FF topology persists for a significant range of wind values.  As
seen in Fig.~\ref{fig:sim_type_1_pinning_multi}, the shape of this
front can be nearly straight, or more boomerang-shaped, depending on
the applied wind.  Note that it is only due to the symmetry of the
flow that the second heteroclinic connection in
Fig.~\ref{fig:sim_cusp_formation}b occurs exactly when the BIM core
first spans the channel


%
\begin{figure}[bt]
\includegraphics[width=1\linewidth]{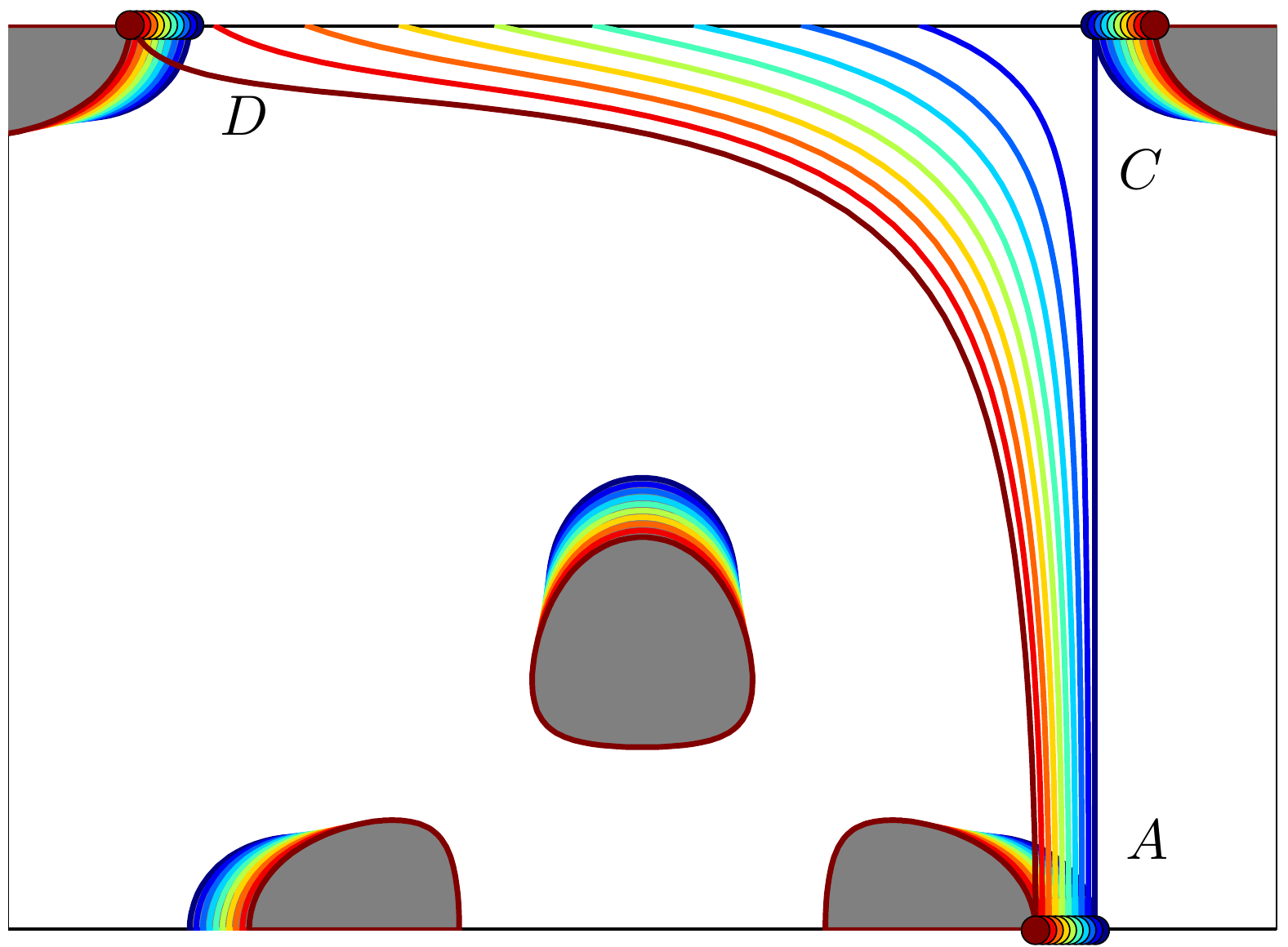}
\caption{(Color online.)  A series of FFs for increasing wind values
  ($v_0 = 0.3, v_0 < v_w < 1.7 v_0$, blue to red). The BIM attached to
  BFP $A$ is swept backward until it intersects BFP $D$. The SZs also
  shift and are colored accordingly.}
\label{fig:sim_type_1_pinning_multi}
\end{figure}

At approximately wind value $v_w = 0.34 = 1.7 v_0$, the BIM encounters
the upper left SZ in Fig.~\ref{fig:sim_type_1_pinning_multi}.  Just
like the BIM/SZ transition in Fig.~\ref{fig:sim_cusp_formation}, we
must form a tangency/heteroclinic connection
(Lemma~\ref{lemma:SFIntersectsSZ} and Fig.~\ref{fig:stability_table})
to a new SSU BFP denoted BFP $D$.  (Note the latter is rotated by
roughly $\pi/2$ CCW in comparison to
Fig.~\ref{fig:sim_cusp_formation}.)  Foretelling this tangency, the
red curves in Fig.~\ref{fig:sim_type_1_pinning_multi} begin to curve
upward near the upper channel wall.  Once again, symmetry of the flow
requires that BFP $D$ be on the upper channel wall.

As $v_0$ is increased still further, the BIM forms a cusp just behind
the unstable BIM attached to BFP $D$
(Fig.~\ref{fig:sim_type_2_pinning}b), as seen in the mechanism in
Fig.~\ref{fig:sim_cusp_formation}a.  Note that while a front may wrap
around the newly formed cusp attempting to bypass the initial BIM, it
will shortly encounter the BIM attached to BFP $D$ which has closed
off this pathway (Fig.~\ref{fig:sim_type_2_pinning}a).  Here we have a
FF that is composed of two distinct BIMs.  Note that the burning
region has a concave corner, with opening angle given by
Lemma~\ref{lemma:SFsCrossStreamlinesAngle}.  The appearance of this
concave corner is exactly what was observed in the experimental FF
(Fig.~\ref{fig:pinfronts3}b).

\begin{figure}[bt]
\includegraphics[width=1\linewidth]{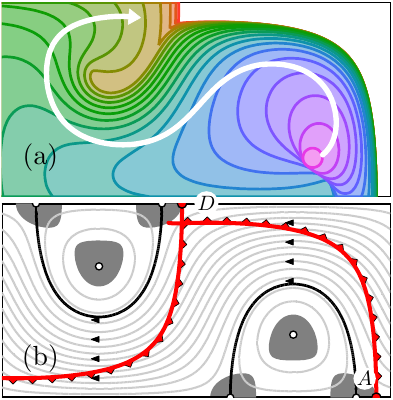}
\caption{(Color online.)  A composite FF formed from two BIMs.
  ($v_0 = 0.3, v_w = 0.525$).  The evolving front rapidly converges to
  BIM $A$ and then winds around its cusp. However, it is prevented
  from going further rightward by the short segment of BIM $D$.}
\label{fig:sim_type_2_pinning}
\end{figure}
\begin{figure}[bt]
\includegraphics[width=1\linewidth]{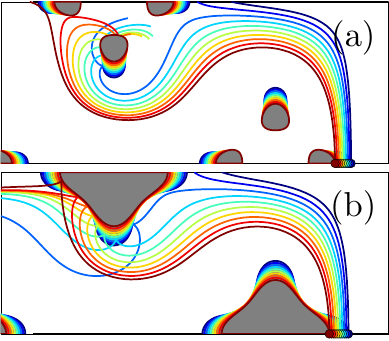}
\caption{(Color online.)  Increasing the wind beyond first instance of
  FF generates new transitions. (a) ($v_0 = 0.2, 0.3 < v_w < 0.6$)
  Blue FF rapidly attaches to and detaches from SZ. Upon detaching,
  the BIM ``jumps'' straight to a cusp on an elliptic SZ. With even
  higher wind, another attachment / detachment leads again to a
  complete span. (b) ($v_0 = 0.3, 0.45 < v_w < 0.8$) Illustration of
  similar transitions where ``jumping'' occurs all within a single
  connected SZ.}
\label{fig:sim_type_2_pinning_multi}
\end{figure}

The sequence in Fig.~\ref{fig:sim_type_2_pinning_multi}a takes the BIM
through a series of encounters with SZs as the wind speed is
increased.  (Here we consider $v_0 = 0.2$ for simplicity.)  The first
encounter is the attachment/detachment mechanism with the upper right
SZ, analogous to that in Fig.~\ref{fig:sim_cusp_formation}.  Here,
however, the BIM detachment does not result in a BIM core that spans
the channel.  Rather, the BIM continues for some distance and then
spirals in toward the elliptic SZ in the upper left, where it forms a
cusp.  As the wind speed increases, the cusp slides clockwise around
the SZ until the BIM forms a new tangency with the upper left SZ.  The
cusp on the elliptic SZ is ``cut off'' by this tangency, which
dynamically precedes it along the BIM.  This begins the mechanism of
Fig.~\ref{fig:sim_cusp_formation} again, after which the BIM core
forms a complete span and defines a FF.

Figure~\ref{fig:sim_type_2_pinning_multi}b shows a similar sequence as
Fig.~\ref{fig:sim_type_2_pinning_multi}a for $v_0 = 0.3$.  The main
difference between these two images is that three SZs have merged into
one in Fig.~\ref{fig:sim_type_2_pinning_multi}b.  Consequently, the
initial detachment of the BIM from the upper right of the SZ results
in a new cusp formed near the bottom of the same SZ.  Furthermore, as
the cusp moves clockwise around the SZ, it is never ``cut off'', but
instead slides along the SZ to the channel wall.

By flip-shift symmetry of the flow, the BIM attached to BFP $D$ has
undergone the same transition as the BIM attached to BFP $A$ and so
forms a FF as well (Fig.~\ref{fig:sim_type_3_pinning}a).  Importantly
these two FFs intersect.  Consequently, in addition to the frozen
domains defined by single BIMs, the \emph{union} of two neighboring
frozen domains defines a distinct frozen domain.  This union is
continuously related to the frozen domain observed in
Fig.~\ref{fig:sim_type_2_pinning}a as $v_w$ is increased.  In
Figs.~\ref{fig:sim_type_1_pinning} and \ref{fig:sim_type_2_pinning},
there is a 1-to-1 correspondence between frozen domains and vortices
in the channel.  Now in Fig.~\ref{fig:sim_type_3_pinning}, the
diversity of frozen domains (at fixed $v_0$ and $v_w$) has increased.
We can have either a FF formed by a single BIM core
(Fig.~\ref{fig:sim_type_3_pinning}b), or by two intersecting BIM cores
(Fig.~\ref{fig:sim_type_3_pinning}c,d,e).  Note that the diversity of FFs
in Figs.~\ref{fig:sim_type_3_pinning}(b-e) is produced by small
changes to the initial stimulation point.

\begin{figure}[bt]
\includegraphics[width=1\linewidth]{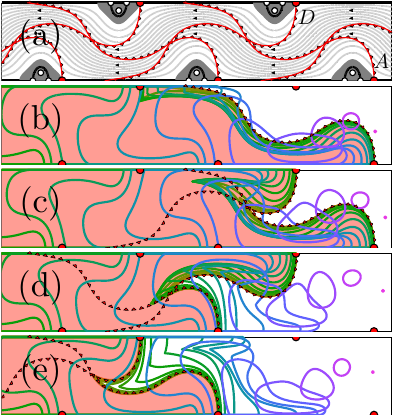}
\caption{(Color online.)  FF diversity and sensitivity to initial
  stimulation. ($v_0 = 0.3, v_w = 0.95$). (a) BIMs $A$ and $D$
  (related by flip-shift symmetry) each form a complete span, and
  intersect. Nearby stimulations (small pink dots near the right side)
  lead to different asymptotic frozen domains.  The frozen domains
  fall into two classes: (b) and (c-e). (b) The FF is composed of a
  single BIM, which spans the channel. (c-e) The FF is composed of two
  BIMs.}
\label{fig:sim_type_3_pinning}
\end{figure}

As the wind is increased, the process in
Fig.~\ref{fig:sim_type_2_pinning_multi} is repeated.  The BIM slides
along the upper channel wall until it encounters an SSU BFP on a SZ.
It moves around the SZ until it moves completely to the left of the SZ
and reconnects to the channel wall.  This process occurs once for each
vortex pair.  After each such occurrence, the BIM acquires a new
intersection with another BIM emanating from the opposite side of the
channel.  We can thereby enumerate all frozen domains of this system
for given values of $v_0$ and $v_w$.
(Fig.~\ref{fig:WAVC_front_enumeration}).  Finally, when
$\min(|\uvec|) > v_0$, there are no SZs, and therefore no BFPs, and
therefore no FFs.

\begin{figure}[bt]
\includegraphics[width=\linewidth]{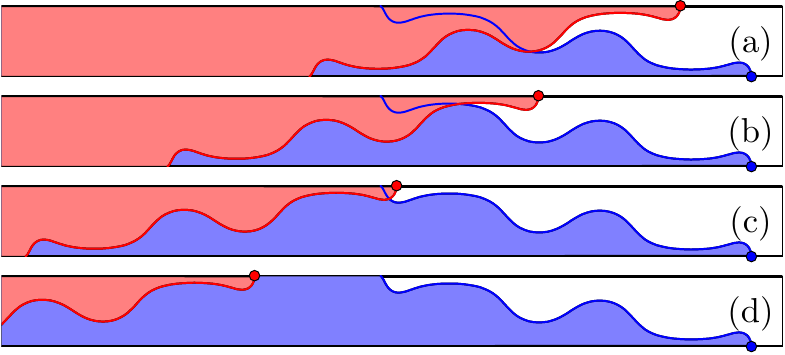}
\caption{(Color online.)  For the windy alternating vortex chain flow
  we can enumerate the increasing number of possible frozen domains
  that occur with increasing wind speed. In this example, there are
  four FF shapes (up to flip-shift symmetry).
  ($v_0 = 0.2, v_w = 1.18$)}
\label{fig:WAVC_front_enumeration}
\end{figure}
%

\section{Conclusions}
\label{sec:conclusion}

The ability of a heterogeneous flow to freeze reaction fronts in the
presence of an imposed wind appears to be quite general.  Frozen
fronts (``sustained patterns'') have been observed numerically in
simulations of oceanic plankton blooms \cite{plankton2004}.  Frozen
fronts have also been seen both experimentally and numerically in
reacting flows in a porous media with a through-flow
\cite{kaern02,porouspinning13}.  We have also conducted experiments on
frozen fronts in extended flows composed of two-dimensional arrays of
vortices, either ordered or disordered~\cite{Megson15}.  As is the
case in this paper, the frozen fronts in an extended flow with a wind
are due to patterns of overlapping BIMs.

This work suggests several directions of future research.
In the context of design and control, this analysis could be used to develop a reacting fluid flow with some desired property.
An obvious example is a system with maximal reaction rate.
Given some class of accessible fluid flows, the reaction rate can be readily maximized by computing the lengths of FFs.
Another example is reaction rate stability. 
We might be given a particular flow perturbation and seek the base flow that minimizes reaction rate fluctuation.

It may be desirable to generate a FF with a particular geometry.
For instance, there may be a region in the neighborhood of the FF that we wish to keep strictly separated from the front
(e.g. a sensor in the vicinity of a combustion front that cannot withstand the temperatures of the front itself).
The analysis here provides a detailed connection between the stream function and FF shape making these questions accessible.

\acknowledgements The present work was supported by the US National
Science Foundation under grants PHY-0748828 and CMMI-1201236
(Mitchell) and grants DMR-1004744, DMR-1361881 and PHY-1156964
(Solomon).

\appendix

\section{Sliding fronts}
\label{app:sliding_fronts}

Although we study fronts propagating in time-independent fluid flows,
the fronts themselves certainly need not be time-invariant.  For
instance, a fast-propagating front in a weak flow will evolve
approximately as a circle of increasing radius.  Loosely speaking,
this is because each front element in the circle ``burns beyond
itself''.  For a front to be time-invariant, each element must instead
``slide along itself''.  In this section, we make this statement clear
and derive several consequences.
\begin{definition}
A front element, i.e. a point in $xy\theta$-space, is said to be \emph{sliding} when $\rdot \propto \ghat$, where $\ghat = [\cos\theta, \sin\theta]$. Equivalently, 
\begin{align}
\label{eq:sliding_constraint}
\rdot \cdot \nhat = 0,
\end{align}
where $\nhat = [\sin\theta, -\cos\theta]$.
\end{definition}
While the sliding property is defined for any fluid flow, it is of most use when the flow is time-independent, as we have assumed throughout this paper and its appendices.

The ``sliding'' constraint Eq.~(\ref{eq:sliding_constraint}) is
illustrated geometrically in
Fig.~\ref{fig:generic_intersection_of_sliding_fronts}.  For a given
$xy$ location, there are either zero, one or two solutions for
$\theta$ satisfying this constraint.  Where the fluid speed is small
($|\uvec| < v_0$), there is no solution; we call such a region a ``slow
zone'' (SZ). 
\begin{lemma}
\label{lemma:NoSFinSZ}
There are no sliding elements inside a SZ.
\end{lemma}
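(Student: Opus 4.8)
The plan is to collapse the vector sliding condition into a single scalar equation for the flow speed and then invoke the Cauchy--Schwarz inequality. First I would take the sliding constraint Eq.~(\ref{eq:sliding_constraint}), $\rdot \cdot \nhat = 0$, and substitute the front-element velocity from Eq.~(\ref{eq:3DODEa}), $\rdot = \uvec + v_0 \nhat$. Because $\nhat$ is a unit vector, the term $v_0\, \nhat \cdot \nhat$ equals $v_0$, so the sliding condition is equivalent to the purely algebraic requirement $\uvec \cdot \nhat = -v_0$.

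Next I would bound the left-hand side. Since $\nhat$ has unit length, the Cauchy--Schwarz inequality gives $|\uvec \cdot \nhat| \le |\uvec|$. Combining this with the sliding equation yields $v_0 = |\uvec \cdot \nhat| \le |\uvec|$, so any sliding element must sit where $|\uvec| \ge v_0$. Equivalently, this is precisely the condition under which the orientation angle $\beta = \arccos(-v_0/|\uvec|)$ is well defined, which serves as a useful consistency check on the computation.

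Finally I would apply the definition of the slow zone. By definition the SZ is the set where $|\uvec| < v_0$, which directly contradicts the necessary condition $|\uvec| \ge v_0$ just derived. Hence no front element located in the SZ can satisfy the sliding constraint, establishing the lemma. The argument is entirely pointwise and algebraic, so I do not anticipate a genuine obstacle; the only care needed is in the bookkeeping of the relation $v_0\, \nhat \cdot \nhat = v_0$ and in noting that the bound $|\uvec \cdot \nhat| \le |\uvec|$ is saturated exactly when $\nhat$ is (anti)parallel to $\uvec$, which is what forces the two admissible sliding orientations to coincide on the SZ boundary $|\uvec| = v_0$.
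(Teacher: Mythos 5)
Your proof is correct and follows essentially the same route as the paper: both substitute the front-element dynamics Eq.~(\ref{eq:3DODEa}) into the sliding constraint Eq.~(\ref{eq:sliding_constraint}) to obtain $\uvec \cdot \nhat = -v_0$, and then observe this is impossible where $|\uvec| < v_0$. Your explicit invocation of Cauchy--Schwarz merely spells out the step the paper leaves implicit.
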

\begin{proof}
  Combining the sliding constraint Eq.~(\ref{eq:sliding_constraint})
  with Eq.~(\ref{eq:3DODE}), we find $|\uvec \cdot \nhat| = |v_0|$.
  This cannot be satisfied for $|\uvec| < v_0$.
\end{proof}
Where the fluid speed is large ($|\uvec| > v_0$), there are two
solutions to Eq.~(\ref{eq:sliding_constraint}).  Where the fluid speed
equals the burning speed ($|\uvec| = v_0$), these two solutions are
degenerate.  We call a region where $|\uvec| \ge v_0$ a ``fast zone''
(FZ).  The sliding constraint Eq.~(\ref{eq:sliding_constraint})
defines a two-dimensional submanifold of $xy\theta$-space, called the
\emph{sliding surface}, which can be viewed as a double-branched
surface over the FZ.
Figures~\ref{fig:hyperbolic_wpm_surface_and_flow}a and
\ref{fig:elliptic_wpm_surface_and_flow}a show the sliding surface for
a hyperbolic and an elliptic flow, respectively~\cite{Note4}.
%
%
In Figs.~\ref{fig:hyperbolic_wpm_surface_and_flow}a and
\ref{fig:elliptic_wpm_surface_and_flow}a we see that, when viewed from
above, these sliding surfaces have a hole in the middle exactly where
the SZ is.
\begin{figure}[bt]
\includegraphics[width=1.0\linewidth]{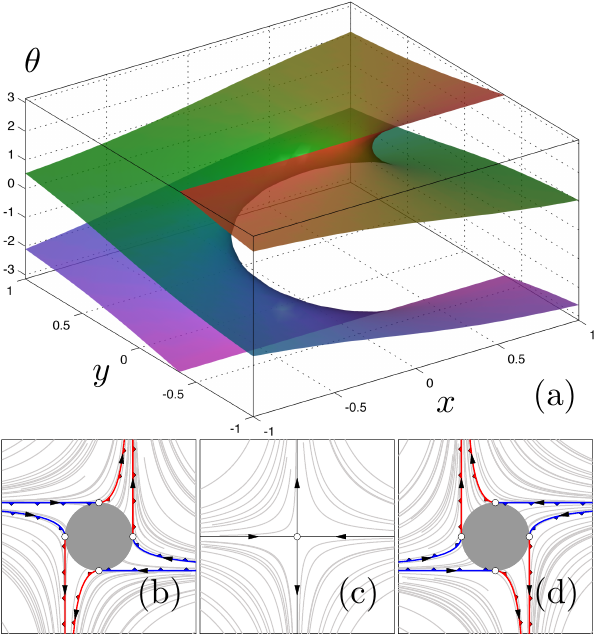}
\caption{(Color online.)  Hyperbolic fluid flow.
  $\dot{\mathbf{x}} = -A \mathbf{x}, \dot{\mathbf{y}} = +A
  \mathbf{y}$.
  ($v_0 = 0.35$, $A = 1$) (a) Sliding
  surface. (b) Streamlines of the $\mathbf{w}_+$ field. (c)
  Streamlines of the advective fluid flow. (d) Streamlines of the
  $\mathbf{w}_-$ field.}
\label{fig:hyperbolic_wpm_surface_and_flow}
\end{figure}
\begin{figure}[bt]
\includegraphics[width=1.0\linewidth]{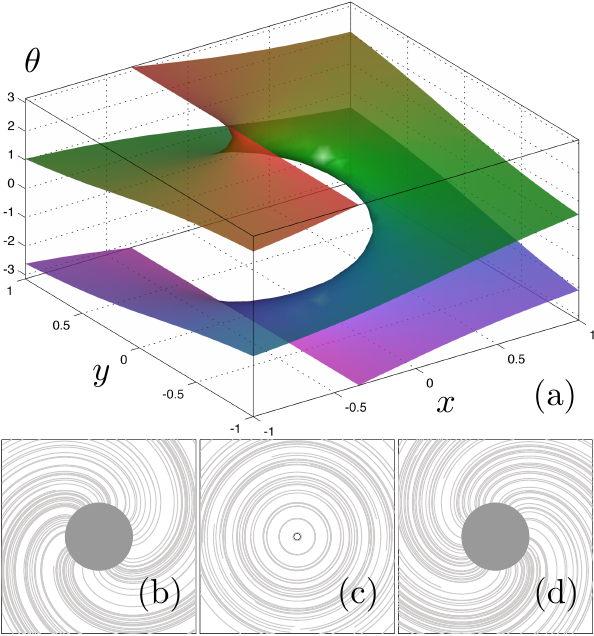}
\caption{(Color online.)  Elliptic fluid flow.
  $\dot{\mathbf{x}} = -A \mathbf{y}, \dot{\mathbf{y}} = +A
  \mathbf{x}$.
  ($v_0 = 0.35$, $A = 1$) (a) Sliding surface. (b) Streamlines of the
  $\mathbf{w}_+$ field. (c) Streamlines of the advective fluid
  flow. (d) Streamlines of the $\mathbf{w}_-$ field.}
\label{fig:elliptic_wpm_surface_and_flow}
\end{figure}

\begin{lemma}
\label{lem:any_SZ_is_all_SZ}
If at any time some portion of a SZ is burned, the asymptotic burned
domain will include that entire SZ.
\end{lemma}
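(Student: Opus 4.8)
The plan is to exploit the fact that inside a slow zone the burning speed $v_0$ strictly exceeds the local fluid speed $|\uvec|$, so that the front can never be held stationary and the burned region can only expand. First I would compute the outward normal speed of an arbitrary front element directly from Eq.~(\ref{eq:3DODEa}),
\begin{align}
\rdot \cdot \nhat = \uvec \cdot \nhat + v_0 \ge v_0 - |\uvec|.
\end{align}
Since $|\uvec| < v_0$ everywhere in the SZ, this quantity is strictly positive, and on any compact subset of the SZ interior it is bounded below by a positive constant $\delta = v_0 - \max|\uvec| > 0$. Hence every front element lying in the SZ advances strictly into the unburned fluid. Combined with the fact that the burned region is monotone increasing in time (once burned, always burned), this shows the unburned portion of the SZ can only shrink.

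Next I would argue that the unburned portion shrinks all the way to nothing. The key point is that a front element in the SZ can never stop: by the computation above, stopping would require $\rdot \cdot \nhat = 0$, i.e.\ the element would be \emph{sliding}, which Lemma~\ref{lemma:NoSFinSZ} forbids. Consequently the asymptotic burned domain cannot possess a boundary running through the interior of the SZ, so the interior of each connected SZ must be either entirely burned or entirely unburned in the limit; since by hypothesis some portion is burned, the whole SZ is burned asymptotically. Equivalently, I can track the unburned area $A(t)$ of the SZ, whose rate of decrease equals the integral of $\rdot \cdot \nhat \ge \delta$ over the portion of the front lying inside the SZ. Thus $A(t)$ cannot settle at a positive value unless the front eventually leaves the SZ interior altogether, at which point connectedness forces the remaining unburned set to be both open and closed in the SZ, hence empty.

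One subtlety I would address explicitly is the possibility of an unburned pocket that becomes surrounded by burned fluid through a topological pinch-off. On the boundary of such a pocket the outward normal $\nhat$ points into the unburned island, but the normal speed $\rdot \cdot \nhat = \uvec \cdot \nhat + v_0$ is still positive inside the SZ, so the front advances into the island and the pocket collapses to a point; islands are therefore consumed just like the connected unburned region. The main obstacle is precisely this analytic step of ruling out a persistent unburned set of positive area, and I expect the cleanest route is the sliding-element contradiction via Lemma~\ref{lemma:NoSFinSZ}, which converts the geometric claim into the already-established statement that the SZ admits no stationary front.
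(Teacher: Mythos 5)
Your proposal is correct, and it rests on the same key fact as the paper's proof---inside a SZ the burning speed strictly dominates the fluid speed---but it completes the argument by a genuinely different route. The paper's proof is a two-sentence reachability argument: since the front speed exceeds the fluid speed in the SZ, \emph{no direction of motion is forbidden} to the front, so the front can be steered to any point of the connected SZ; this burns the entire SZ, and in fact burns any compact subset in finite time. You never use this ``any direction'' property; you use only the weaker fact that the outward normal speed $\rdot \cdot \nhat = \uvec \cdot \nhat + v_0 \ge v_0 - |\uvec| > 0$ from Eq.~(\ref{eq:3DODEa}), combine it with monotonicity of burning, and then rule out an asymptotic boundary inside the SZ because a stationary boundary element would be sliding, contradicting Lemma~\ref{lemma:NoSFinSZ}; connectedness finishes the job. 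What your route buys: it ties the lemma directly into the machinery the paper develops in Sec.~\ref{sec:FF} (invariant burned domains have boundaries built from sliding trajectories), it makes the invariance of the asymptotic burned domain explicit, and it treats topological complications such as unburned islands. What the paper's route buys: brevity and a stronger, finite-time conclusion rather than a purely asymptotic one. One caution on your supplementary area-rate variant: the fact that the dissipation integral tends to zero forces the front \emph{length} inside the SZ to tend to zero, not that the front ``leaves the SZ interior altogether''; to exclude a persistent unburned set of positive area by that route you would need something like a relative isoperimetric inequality bounding the interface length from below when both burned and unburned parts have positive area. Since your primary sliding-front contradiction does not rely on this step, the proof stands as given.
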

\begin{proof}
Within the SZ, the velocity of the front is everywhere greater than the fluid.
Therefore, no direction of motion is forbidden to the front, and so the front will eventually access all parts of the SZ.
\end{proof}

\begin{lemma}
\label{lemma:sliding_invariant}
Sliding is an invariant property. That is, if a front element is sliding, every element along its trajectory under Eq.~(\ref{eq:3DODE}) is also sliding. Hence we may speak of \emph{sliding trajectories}.
\end{lemma}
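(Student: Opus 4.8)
The plan is to reduce the claim to a scalar linear ODE for the \emph{sliding defect}
\[
S \equiv \rdot\cdot\nhat = \uvec\cdot\nhat + v_0 ,
\]
where the second equality uses Eq.~(\ref{eq:3DODEa}) together with $\nhat\cdot\nhat = 1$. A front element slides exactly when $S = 0$, so invariance of the sliding property is equivalent to showing that the set $\{S = 0\}$ is preserved by the flow Eq.~(\ref{eq:3DODE}).

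First I would differentiate $S$ along a trajectory. Two observations keep this tractable. Because the flow is time-independent, $\dot u_i = u_{i,j}\dot r_j$ with $\dot r_j$ supplied by Eq.~(\ref{eq:3DODEa}); and because $\nhat = (\sin\theta,-\cos\theta)$, one has $\dot{\nhat} = \dot{\theta}\,\ghat$. Substituting these gives
\[
\dot S = \hat{n}_i u_{i,j}\big(u_j + v_0 \hat{n}_j\big) + \dot{\theta}\,(u_i \hat{g}_i) .
\]

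The main step is to expose a cancellation by passing to the orthonormal moving frame $\{\ghat,\nhat\}$. Writing $\uvec = u_g\,\ghat + u_n\,\nhat$ with $u_n = \uvec\cdot\nhat = S - v_0$, and inserting Eq.~(\ref{eq:3DODEb}) in the form $\dot{\theta} = -\hat{n}_i u_{i,j}\hat{g}_j$, the two terms carrying the tangential component $u_g$ cancel identically, leaving the homogeneous form
\[
\dot S = \big(\hat{n}_i u_{i,j}\hat{n}_j\big)\, S .
\]
I expect this cancellation to be the crux of the argument: it is what upgrades the weak statement ``$\dot S = 0$ on $\{S=0\}$'' to a clean proportionality, and it is not at all obvious before the frame decomposition is carried out. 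It is worth noting that no appeal to incompressibility is required here.

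Finally, writing $T(t) = \hat{n}_i u_{i,j}\hat{n}_j$ for the continuous (hence locally bounded) scalar obtained by evaluating this contraction along the trajectory, the relation $\dot S = T(t)\,S$ is a homogeneous linear ODE in $S$. Uniqueness of its solutions (equivalently, Gr\"onwall's inequality) forces $S(t_0) = 0 \Rightarrow S(t) \equiv 0$. Hence the sliding condition persists along every trajectory, and one may legitimately speak of \emph{sliding trajectories}.
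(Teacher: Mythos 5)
Your proof is correct, and it takes a genuinely different route from the paper's. The paper also differentiates the sliding constraint along a trajectory, but it substitutes the sliding assumption $\rdot = \pm|\rdot|\,\ghat$ \emph{into} that derivative, concluding only that $\frac{d}{dt}(\rdot\cdot\nhat) = 0$ at points where the element is already sliding; this is a tangency statement (the $xy\theta$ vector field is tangent to the sliding surface), from which invariance follows by uniqueness arguments that the paper leaves implicit. You never impose the sliding condition during the differentiation: decomposing $\uvec$ in the $\{\ghat,\nhat\}$ frame and using $\hat{n}_i u_{i,j}\hat{g}_j = -\dot{\theta}$, the tangential contributions cancel and you obtain the identity
\begin{align*}
\dot{S} = \left(\hat{n}_i u_{i,j} \hat{n}_j\right) S, \qquad S = \rdot\cdot\nhat = \uvec\cdot\nhat + v_0,
\end{align*}
valid everywhere in phase space, not just on the zero set. (I checked the cancellation; it is exactly as you claim, with the residue $(u_n+v_0)\,\hat{n}_i u_{i,j}\hat{n}_j = S\,\hat{n}_i u_{i,j}\hat{n}_j$.) What your version buys: it closes the logical step from ``the derivative vanishes on the zero set'' to ``the zero set is invariant,'' which does not follow from continuity alone (compare $\dot{S} = |S|^{1/2}$, whose right-hand side vanishes at $S=0$ yet whose solutions escape), but requires either your factorization or the submanifold-restriction-plus-uniqueness argument the paper tacitly assumes. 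Your scalar linear ODE runs along trajectories of the smooth three-dimensional system, so it is also insensitive to the fold of the sliding surface over the SZ boundary, where the paper itself notes that the projected flows $\mathbf{w}_{\pm}$ lose uniqueness. As a bonus, the identity is quantitative: the sliding defect grows or decays at the normal-normal strain rate $\hat{n}_i u_{i,j}\hat{n}_j$, i.e., it measures exponential attraction or repulsion transverse to the sliding surface. What the paper's version buys is brevity; and, as you note, neither argument uses incompressibility.
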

\begin{proof}
We examine the time derivative of Eq.~(\ref{eq:sliding_constraint}).
%
\begin{align*}
  \frac{d}{dt} (\rdot \cdot \nhat) 
&= (u_{i,j} \dot{r}_j + v_0  \dot{\theta} \hat{g}_i) \hat{n}_i  + \dot{r}_k \hat{g}_k  \dot{\theta} \\
&= \hat{n}_i u_{i,j} \dot{r}_j +  \dot{r}_k  \hat{g}_k  \dot{\theta}  \\
&= \pm  |\dot{\mathbf{r}}| \hat{n}_i u_{i,j} \hat{g}_j \pm |\dot{\mathbf{r}}|
  \hat{g}_k \hat{g}_k \dot{\theta}\\  
&= \pm |\dot{\mathbf{r}}| (-\dot{\theta}) \pm |\dot{\mathbf{r}}| \dot{\theta} = 0,
\end{align*}
where the first equality follows from Eq.~(\ref{eq:3DODE}a) and the
fact that $d \hat{\mathbf{n}}/dt = \hat{\mathbf{g}} \dot{\theta}$, the
second from the orthogonality of $\nhat$ and $\ghat$, the third from
the sliding assumption $\rdot = \pm |\dot{\mathbf{r}}| \ghat$, and the
fourth from Eq.~(\ref{eq:3DODE}b).
\end{proof}
\noindent
A consequence of this lemma is that the sliding surface is dynamically
invariant.

Recall that a front is a curve
$(\mathbf{r}(\lambda), \theta(\lambda))$ that everywhere satisfies the
front compatibility
criterion~Eq.~(\ref{eq:front_compatibility_criterion}), which is
expressed equivalently as
\begin{align}
\label{eq:front_compatibility_criterion_nhat}
\drdlambda \cdot \nhat = 0.
\end{align}
\begin{lemma}
\label{lemma:sliding_trajectory}
A trajectory of Eq.~(\ref{eq:3DODE}) is sliding if and only if the
curve it sweeps out is a front.
\end{lemma}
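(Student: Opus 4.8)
The plan is to recognize that this lemma is essentially a statement about parametrization: the sliding constraint and the front-compatibility criterion are the \emph{same} algebraic condition, written with different parameters. Sliding asks that $\rdot \cdot \nhat = 0$ (Eq.~\ref{eq:sliding_constraint}), where $\rdot = d\rvec/dt$ is the dynamical time derivative along a trajectory of Eq.~(\ref{eq:3DODE}); being a front asks that $\drdlambda \cdot \nhat = 0$ (Eq.~\ref{eq:front_compatibility_criterion_nhat}) for some smooth parameter $\lambda$ of the curve. So the entire argument rests on a single bridging observation, which I would establish first: for a genuine (non-constant) trajectory, the dynamical time $t$ is itself an admissible smooth parametrization of the image curve, so that choosing $\lambda = t$ gives $\drdlambda = \rdot$.

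With that in hand, the forward direction ($\Rightarrow$) is immediate. Assuming the trajectory is sliding, Lemma~\ref{lemma:sliding_invariant} guarantees $\rdot\cdot\nhat = 0$ at every point, and parametrizing the image by $\lambda = t$ turns this verbatim into the front-compatibility criterion Eq.~(\ref{eq:front_compatibility_criterion_nhat}); hence the swept-out curve is a front. No computation beyond this substitution is needed.

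For the reverse direction ($\Leftarrow$) I would lean on the reparametrization-invariance of the compatibility criterion: Eq.~(\ref{eq:front_compatibility_criterion}) constrains only the \emph{direction} of the $xy$-tangent (that it lie along $\ghat$), a property unaffected by how the curve is parametrized. Thus if the image is a front, so that $\drdlambda \propto \ghat$ for some $\lambda$, then re-expressing the same curve in the time parametrization merely rescales the tangent, yielding $\rdot \propto \ghat$ and therefore $\rdot\cdot\nhat = 0$. The trajectory is sliding, and again the step is structural rather than computational.

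The one place requiring genuine care, and the step I expect to be the main obstacle, is the degenerate set where $\rdot = \uvec + v_0\nhat = 0$ and $t$ ceases to be a regular parametrization; such points lie only on the SZ boundary $|\uvec| = v_0$. I would dispatch this by splitting cases: if $\rdot \equiv 0$ along the whole orbit, then $\uvec(\rvec) = -v_0\nhat(\theta)$ forces $\theta$ to be constant, the orbit is a BFP, and both statements hold trivially; otherwise $\rdot$ vanishes only at isolated instants (the cusp points of Fig.~\ref{fig:cusp_in_3D}), where $\rdot\cdot\nhat = 0$ holds automatically and the front property extends by continuity, while the regular-parametrization argument above covers the remaining open dense set of times. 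Verifying that the isolated-zero case really does extend continuously—rather than producing a spurious direction mismatch at the cusp—is the subtle point to get right.
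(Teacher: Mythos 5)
Your proof is correct and is essentially the paper's own argument: the paper's entire proof is the single observation that choosing $\lambda = t$ makes Eq.~(\ref{eq:sliding_constraint}) identical to Eq.~(\ref{eq:front_compatibility_criterion_nhat}). Your extra care at degenerate points where $\rdot = 0$ is sound but not needed under the paper's definitions, since both conditions hold trivially there (the zero vector satisfies $\rdot \cdot \nhat = 0$ and is proportional to $\ghat$), which is why the paper omits that discussion.
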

\begin{proof}
  Choosing $\lambda = t$, Eq.~(\ref{eq:sliding_constraint}) is
  equivalent to Eq.~(\ref{eq:front_compatibility_criterion_nhat}).
\end{proof}
In light of Lemma~\ref{lemma:sliding_trajectory}, we may refer to a sliding trajectory as a \emph{sliding front}.
More generally, we make the following definition.
\begin{definition}
  A \emph{sliding front} is a smooth curve that everywhere satisfies
  Eq.~(\ref{eq:sliding_constraint}), or equivalently
  Eq.~(\ref{eq:front_compatibility_criterion_nhat}).
\end{definition}
Note that a sliding front may be composed of multiple trajectories
joined at fixed points.  Also, any segment of a sliding front is also
referred to as a sliding front.

\begin{lemma}
  BIMs are sliding fronts, and thus lie within the sliding surface.
\end{lemma}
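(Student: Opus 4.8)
The plan is to show that a BIM, being a one-dimensional unstable manifold of a BFP under Eq.~(\ref{eq:3DODE}), satisfies the front-compatibility criterion at every point, and then invoke Lemma~\ref{lemma:sliding_trajectory} to conclude it is a sliding front. The excerpt has essentially set up every ingredient for a short argument, so the work is to assemble them in the right order rather than to prove anything computationally hard.

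First I would recall that a BIM is the image of a single trajectory of the three-dimensional ODE Eq.~(\ref{eq:3DODE}) emanating from an SSU burning fixed point along its unstable eigendirection. By Lemma~\ref{lemma:sliding_invariant}, sliding is an invariant property of the flow: if one element of a trajectory slides, every element does. Hence it suffices to establish the sliding condition $\rdot \cdot \nhat = 0$ at a single point of the BIM, and the natural choice is the limiting behavior as one approaches the BFP itself. At the BFP we have $\rdot = \mathbf{0}$, so the sliding constraint Eq.~(\ref{eq:sliding_constraint}) is satisfied trivially there. The subtlety is that the BFP is a fixed point, not a generic sliding element, so I must check that the constraint genuinely propagates along the unstable manifold rather than holding only in the degenerate limit.

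The cleanest way to handle this is to verify that the sliding constraint is satisfied along the unstable eigendirection. At the BFP, $\dot\theta = 0$ forces $\hat n_i u_{i,j}\hat g_j = 0$ by Eq.~(\ref{eq:3DODEb}), and $\rdot = \mathbf{0}$ gives $\uvec = -v_0\nhat$ by Eq.~(\ref{eq:3DODEa}). Linearizing Eq.~(\ref{eq:3DODE}) about the BFP, I would show that the unstable eigenvector of the SSU point lies tangent to the sliding surface defined by Eq.~(\ref{eq:sliding_constraint})—equivalently, that the function $f = \rdot\cdot\nhat$ has vanishing directional derivative along that eigendirection. Since $f$ vanishes at the BFP and its time derivative $\dot f$ vanishes identically on the sliding surface (the content of Lemma~\ref{lemma:sliding_invariant}), the unstable manifold, which is the forward image of an initial segment along the eigendirection, stays on the zero set $f = 0$. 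Thus every point of the BIM slides.

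The main obstacle is precisely this linearization step: I must confirm that the unstable eigendirection is tangent to the sliding surface rather than transverse to it. This is where the specific SSU stability type matters—the sliding surface is the two-dimensional invariant manifold of Lemma~\ref{lemma:sliding_invariant}, and Lemma~\ref{cor:restricted_bfp_stability} (referenced earlier) identifies the restricted dynamics at an SSU BFP as having stability type SU, so the single unstable direction of the full three-dimensional system coincides with the unstable direction \emph{within} the sliding surface. Once that identification is made, tangency is immediate and the rest follows by invariance. Having established that the BIM satisfies Eq.~(\ref{eq:sliding_constraint}) everywhere, Lemma~\ref{lemma:sliding_trajectory} and the definition of a sliding front give the conclusion that the BIM is a sliding front, and therefore lies within the sliding surface.
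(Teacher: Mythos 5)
Your proof is correct in substance but takes a genuinely different route from the paper's. The paper's proof is essentially a citation: a BIM is the unstable manifold of an SSU BFP, which in a time-independent flow is also a trajectory, and Ref.~\cite{Mitchell12b} already established that BIMs satisfy the front-compatibility criterion; Lemma~\ref{lemma:sliding_trajectory} then immediately converts ``trajectory that sweeps out a front'' into ``sliding front.'' You instead derive containment in the sliding surface internally, from the linearized dynamics: the sliding surface is invariant (Lemma~\ref{lemma:sliding_invariant}), the SSU BFP lies on it trivially, Corollary~\ref{cor:restricted_bfp_stability} shows the restricted dynamics there is of type SU, so the single unstable eigendirection of the full three-dimensional system is tangent to the surface, and hence the unstable manifold lies in the surface; Lemma~\ref{lemma:sliding_trajectory} finishes as before. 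Your approach buys self-containedness---no appeal to the external front-compatibility result---while the paper's buys brevity; note, though, that Corollary~\ref{cor:restricted_bfp_stability} appears \emph{after} this lemma in the appendix and is itself stated without proof as a specialization of Theorem~\ref{t4}, so your ordering is legitimate only because that corollary is a pure linear-algebra statement about the linearization whose proof does not use the present lemma (no circularity, but this deserves an explicit remark). One step you should tighten: the claim that the unstable manifold is ``the forward image of an initial segment along the eigendirection'' and therefore stays on the zero set of $f = \rdot\cdot\nhat$ is not right as written---a straight segment along the eigendirection is tangent to, but not contained in, that zero set, so its forward image need not lie on it. The correct closing move is the one your final paragraph gestures at: the restricted dynamics on the invariant sliding surface has its own one-dimensional unstable manifold tangent to the same eigendirection, and by the uniqueness (backward-convergence) characterization of unstable manifolds this curve coincides with the full system's unstable manifold, i.e., with the BIM, which therefore lies in the sliding surface.
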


\begin{proof}
  A BIM is the unstable invariant manifold of an SSU BFP. Since we
  consider time-independent flows, this invariant manifold is also a
  trajectory. As shown in Ref.~\cite{Mitchell12b}, BIMs satisfy the
  front compatibility criterion. Therefore, by
  Lemma~\ref{lemma:sliding_trajectory}, BIMs are sliding fronts.
\end{proof}

Since the sliding surface is invariant, it is natural to restrict
Eq.~(\ref{eq:3DODE}) to this surface.  We next derive an explicit
expression for this 2D flow.  Applying the sliding constraint
Eq.~(\ref{eq:sliding_constraint}) to the front element dynamics
Eq.~(\ref{eq:3DODE}), we have $\uvec \cdot \nhat = -v_0$. Using this
to resolve the unit vector $\nhat$ into components, we have
$\nhat \cdot \hat{\uvec} = -v_0 / |\uvec|$ and
$\nhat \cdot \hat{\uvec}^{\perp} = \mp \sqrt{1 - (v_0 / |\uvec|)^2}$,
where $\uvec^{\perp}$ or $\hat{\uvec}^{\perp}$ is a righthanded
rotation by $\pi/2$ of $\uvec$ or $\hat{\uvec} = \uvec / |\uvec|$.
Inserting the resolved form of $\nhat$ into Eq.~(\ref{eq:3DODE}) we
have,
\begin{align}
\label{eq:wpm_fields}
\rdot = \mathbf{w}_{\pm} \equiv \left[1 - \left( \frac{v_0}{|\uvec|} \right)^2 \right] \uvec \mp \frac{v_0}{|\uvec|} \sqrt{1 - \left( \frac{v_0}{|\uvec|} \right)^2} \uvecperp
\end{align}
This defines two flows over the FZ, one for each of the two branches
of the sliding surface.  Our sign convention is such that
$\mathbf{w}_+$ is the flow on the branch where
$\rdot = + |\rdot| \ghat$ (the $+$ branch), and $\mathbf{w}_-$ is the
flow on the branch where $\rdot = - |\rdot| \ghat$ (the $-$ branch).

Equation~(\ref{eq:wpm_fields}) shows that the $\mathbf{w}_{\pm}$
fields are undefined (complex-valued) within the SZ and are zero on
its boundary, confirming Lemma~\ref{lemma:NoSFinSZ}.  All fixed points
of Eq.~(\ref{eq:3DODE}) are fixed points of Eq.~(\ref{eq:wpm_fields})
because all BFPs trivially satisfy the sliding constraint.  These BFPs
lie on the SZ boundary.  Equation~(\ref{eq:wpm_fields}) also has a set
of spurious fixed points at all other points along the SZ boundary,
i.e. where $|\mathbf{u}| = v_0$.  However, we ignore these spurious
fixed points as they are not physically relevant fixed points of
Eq.~(\ref{eq:3DODE}), but rather result from the square-root
singularity in Eq.~(\ref{eq:wpm_fields}), obtained by projecting
Eq.~(\ref{eq:3DODE}) onto $xy$-space.  This square-root singularity
also invalidates the uniqueness of solutions to
Eq.~(\ref{eq:wpm_fields}) at the SZ boundary.  Thus, there are other
physically relevant trajectories that pass through the SZ boundary.

Figures~\ref{fig:hyperbolic_wpm_surface_and_flow}b,d and
\ref{fig:elliptic_wpm_surface_and_flow}b,d illustrate the
$\mathbf{w}_{\pm}$ flows for the cases of hyperbolic and elliptic
fluid flow, respectively.  The gray regions are the SZs.  BFPs are
indicated on the boundary of the hyperbolic SZ.  The stable and
unstable manifolds of these BFPs are shown in blue and red,
respectively.

Reference~\cite{Mitchell12b} proved that for any BFP, $\ghat$ is an
eigenvector of $u_{i,j}$, i.e.
\begin{align}
\label{eq:mu}
u_{i,j} \hat{g}_j = \mu \hat{g}_i,
\end{align}
where $\mu$ is the eigenvalue.  Reference~\cite{Mitchell12b} also
defined the quantity $\mu'$,
\begin{align}
\label{eq:muprime}
\mu' = \mu + v_0 \kappa,
\end{align}
where $\kappa$ is the signed curvature of the SZ boundary at the BFP.
($\kappa < 0$ means that $\nhat$ points toward the center of
curvature.)  Below, we reproduce Theorem~4 from
Ref.~\cite{Mitchell12b}, \cite{Note5}  
%

\begin{theorem}
\label{t4}
   For a time-independent, incompressible flow $\mathbf{u}$, the eigenvalues
   about a BFP are
\begin{align}
\lambda_0 &= -\mu, \label{r27}\\
\lambda_\pm &= \frac{1}{2}\left(
-\mu \pm \sqrt{\mu^2 + 4 \mu \mu'} \right), \label{r28}
\end{align}
where $\mu$ and $\mu'$ are given by Eqs.~(\ref{eq:mu}) and
(\ref{eq:muprime}).  The linear stability of a BFP is thus determined
by the signs of $\mu$ and $\mu'$ according to the following table.
\begin{center}
\begin{tabular}{l|cc}
& $\mu > 0$ & $\mu < 0$ \\
\hline
$\mu' > 0$ & SUS & UUU \\
$\mu' < 0$ & SSS & SUU
\end{tabular}
\end{center}
\end{theorem}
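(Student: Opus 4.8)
The plan is to linearize the three-dimensional flow, Eq.~(\ref{eq:3DODE}), about the BFP and compute its three eigenvalues, using the invariance of the sliding surface to split off Eq.~(\ref{r27}) cleanly. At the BFP one has $\rdot=0$, hence $\uvec=-v_0\nhat$, together with $\dot\theta=0$. Writing the Jacobian $J$ of Eq.~(\ref{eq:3DODE}) in the state $(\rvec,\theta)$, the position block is $\partial\dot r_i/\partial r_j=u_{i,j}$, the $\theta$-column of the $\rvec$-equations is $\partial\dot r_i/\partial\theta=v_0\ghat_i$ (since $d\nhat/d\theta=\ghat$), the bottom row is $P_k\equiv\partial\dot\theta/\partial r_k=-\nhat_i u_{i,jk}\ghat_j$, and the corner entry is $\partial\dot\theta/\partial\theta=-\ghat_i u_{i,j}\ghat_j+\nhat_i u_{i,j}\nhat_j$. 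Using the eigenvector relation~(\ref{eq:mu}), $u_{i,j}\ghat_j=\mu\ghat_i$, together with incompressibility $u_{i,i}=0$ (which forces the trace of $u_{i,j}$ in the orthonormal frame $\{\ghat,\nhat\}$ to vanish, so that $\nhat_i u_{i,j}\nhat_j=-\mu$), I get $\partial\dot\theta/\partial\theta=-2\mu$ and $\operatorname{tr}J=-2\mu$.

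Next I would use that the sliding surface $\Sigma=\{\uvec\cdot\nhat+v_0=0\}$ is invariant (Lemma~\ref{lemma:sliding_invariant}) to peel off $\lambda_0$. At the BFP, $\partial_\theta(\uvec\cdot\nhat)=\uvec\cdot\ghat=-v_0\,\nhat\cdot\ghat=0$ and $\ghat_k\,\partial_k(\uvec\cdot\nhat)=\nhat_i u_{i,k}\ghat_k=\mu\,\nhat\cdot\ghat=0$, so the tangent plane $T_p\Sigma$ is spanned by $(\ghat,0)$ and $(0,0,1)$. Because $\Sigma$ is invariant and $p$ is a fixed point, $J$ maps $T_p\Sigma$ into itself; in this basis the restriction reads
\begin{align*}
J|_\Sigma=\begin{pmatrix}\mu & v_0\\[2pt] P_k\ghat_k & -2\mu\end{pmatrix},
\end{align*}
whose two eigenvalues are $\lambda_\pm$, while the remaining transverse eigenvalue is $\lambda_0=\operatorname{tr}J-\operatorname{tr}(J|_\Sigma)=-2\mu-(-\mu)=-\mu$, which is Eq.~(\ref{r27}).

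The hard part is evaluating the second-derivative contraction $P_k\ghat_k=-\nhat_i u_{i,jk}\ghat_j\ghat_k$ and identifying it with the geometric curvature $\kappa$. I would do this by differentiating the SZ-boundary equation $|\uvec|^2=v_0^2$ twice along the boundary curve, whose unit tangent at the BFP is $\ghat$ and whose signed curvature is defined by $\ddot\rvec=-\kappa\nhat$, matching the stated convention. Writing $f=u_iu_i$, the first derivative gives $f_{,k}\ghat_k=0$ and $f_{,k}\nhat_k=2v_0\mu$ (again via $u_{i,j}\ghat_j=\mu\ghat_i$ and incompressibility), and the second derivative gives $f_{,kl}\ghat_k\ghat_l=\kappa f_{,k}\nhat_k=2v_0\mu\kappa$. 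Expanding $f_{,kl}\ghat_k\ghat_l=2(u_{i,k}\ghat_k)(u_{i,l}\ghat_l)+2u_i u_{i,kl}\ghat_k\ghat_l=2\mu^2+2v_0P_k\ghat_k$ then yields $P_k\ghat_k=\mu\kappa-\mu^2/v_0$. Substituting into $\det(J|_\Sigma)=-2\mu^2-v_0P_k\ghat_k$ gives $\det(J|_\Sigma)=-\mu(\mu+v_0\kappa)=-\mu\mu'$ with $\mu'$ as in Eq.~(\ref{eq:muprime}); since $\operatorname{tr}(J|_\Sigma)=-\mu$, the pair $\lambda_\pm$ solves $\lambda^2+\mu\lambda-\mu\mu'=0$, which is exactly Eq.~(\ref{r28}).

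Finally, the stability table follows from a sign count: $\lambda_0=-\mu$ is stable iff $\mu>0$, while $\lambda_+\lambda_-=-\mu\mu'$ and $\lambda_++\lambda_-=-\mu$ show that $\lambda_\pm$ have opposite-sign real parts when $\mu\mu'>0$ and same-sign real parts equal to $\sgn(-\mu)$ when $\mu\mu'<0$. Tabulating the four sign combinations of $(\mu,\mu')$ reproduces the entries SUS, UUU, SSS, and SUU. The only genuine obstacle is the curvature identification in the third paragraph; everything else reduces to linear algebra together with the eigenvector relation~(\ref{eq:mu}) and incompressibility.
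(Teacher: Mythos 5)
Your proof is correct, and every step checks out: the Jacobian entries, the corner entry $-2\mu$ via incompressibility, the tangent plane of the sliding surface at the BFP, the block-triangular splitting, the curvature identity $P_k\ghat_k=\mu\kappa-\mu^2/v_0$, and the resulting characteristic polynomial $\lambda^2+\mu\lambda-\mu\mu'=0$ all hold, as does the final sign count for the table. It is worth emphasizing that this paper offers no proof at all of this statement: the theorem is reproduced verbatim (modulo a corrected typo, see the accompanying note) from Ref.~\cite{Mitchell12b}, so your derivation is genuinely independent rather than a retracing. It also differs in structure from the route the cited reference takes, insofar as that route can be reconstructed from what this paper quotes in App.~\ref{app:SSS}: there the eigenvalues come from the full $3\times3$ linearization, with the second-derivative contraction $\mathbf{a}\cdot\mathbf{g}=n_i g_j g_k u_{i,jk}$ entering the characteristic polynomial and the geometric statement $\mu'=\mu+v_0\kappa$ obtained separately, via $\mu'=(2\mu^2-v_0\,\mathbf{a}\cdot\mathbf{g})/\mu$. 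You instead use the dynamical invariance of the sliding surface (Lemma~\ref{lemma:sliding_invariant}) to split the spectrum: since $T_p\Sigma=\mathrm{span}\{(\ghat,0),(0,0,1)\}$, the Jacobian is block-triangular in an adapted basis, the $2\times2$ block gives $\lambda_\pm$, and the trace deficit gives $\lambda_0=-\mu$. This buys two things. First, Corollary~\ref{cor:restricted_bfp_stability} (that $\lambda_\pm$ are exactly the sliding-surface eigenvalues, with eigenvector $xy$-projections along $\ghat$) falls out for free instead of requiring a separate specialization. Second, the curvature enters transparently by differentiating $|\uvec|^2=v_0^2$ twice along the SZ boundary; indeed your $P_k\ghat_k=\mu\kappa-\mu^2/v_0$ is equivalent to $\mathbf{a}\cdot\mathbf{g}=\mu^2/v_0-\mu\kappa$, which reproduces Eq.~(\ref{eq:muprime}) when inserted into the quoted relation, so your computation is fully consistent with the reference. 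Two points you should state explicitly: the construction requires $\mu\neq 0$, since $\nabla(\uvec\cdot\nhat+v_0)=(-\mu\nhat,0)$ at the BFP, and only then is the sliding surface a smooth invariant manifold there (the table implicitly assumes this nondegeneracy anyway); and the table entry ``SUS'' is merely the paper's letter ordering for the configuration you derive, one unstable sliding direction plus two stable directions, called SSU elsewhere in the text.
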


We next specialize this result to the dynamics on the sliding
surface.  
\begin{corollary}
\label{cor:restricted_bfp_stability}
The eigenvalues for a BFP of the dynamics Eq.~(\ref{eq:3DODE})
restricted to the sliding surface are given by $\lambda_{\pm}$ from
Eq.~(\ref{r28}).  The $xy$-projection of each of the corresponding
eigenvectors is proportional to $\ghat$.
\end{corollary}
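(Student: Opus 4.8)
The plan is to exploit the invariance of the sliding surface (Lemma~\ref{lemma:sliding_invariant}) to split the three-dimensional linearization $DF$ at the BFP into a two-dimensional part tangent to the surface and a one-dimensional transverse part, then to identify the transverse eigenvalue as $\lambda_0=-\mu$ and thereby leave $\lambda_{\pm}$ for the restricted dynamics. Routing the argument through a well-chosen transverse coordinate lets me lean on Theorem~\ref{t4} for the actual eigenvalue formulas rather than recomputing the second-derivative quantities hidden in $\mu'$.

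First I would introduce the scalar $g \equiv \rdot\cdot\nhat = \uvec\cdot\nhat + v_0$, whose zero set is the sliding surface. Differentiating $g$ along the full flow Eq.~(\ref{eq:3DODE}) \emph{without} imposing the sliding constraint---that is, repeating the computation of Lemma~\ref{lemma:sliding_invariant} but inserting the orthonormal decomposition $\rdot = (\rdot\cdot\ghat)\,\ghat + g\,\nhat$ in place of setting $g=0$---I expect the exact identity $\dot g = B\,g$, where $B \equiv \hat{n}_i u_{i,j}\hat{n}_j$. This makes $g$ a clean transverse coordinate: the surface $g=0$ is manifestly invariant, and the linearized growth rate of off-surface deviations is just the fixed-point value of $B$.

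Next I would evaluate $B$ at the BFP. Since $\{\ghat,\nhat\}$ is orthonormal and $u_{i,j}$ is trace-free by incompressibility, $\hat{g}_i u_{i,j}\hat{g}_j + \hat{n}_i u_{i,j}\hat{n}_j = u_{i,i} = 0$; combined with Eq.~(\ref{eq:mu}), which gives $\hat{g}_i u_{i,j}\hat{g}_j = \mu$, this yields $B=-\mu$. Hence the transverse linearization is $\dot{(\delta g)} = -\mu\,\delta g$, or, in terms of $DF$, the covector $\nabla g|_{\mathrm{BFP}}$ is a left eigenvector with eigenvalue $-\mu=\lambda_0$. Standard linear algebra then forces the right eigenvectors of the other two eigenvalues into $\ker\nabla g|_{\mathrm{BFP}}$, which is precisely the tangent plane to the sliding surface; because $\lambda_{\pm}\neq\lambda_0$ whenever $\mu'\neq 0$ (true for every stability type in Theorem~\ref{t4}), those two eigenvalues must be $\lambda_{\pm}$, and they are the eigenvalues of the restricted flow.

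Finally, for the eigenvector directions I would note that $\nabla g|_{\mathrm{BFP}} = (u_{i,j}\hat{n}_i,\ \uvec\cdot\ghat) = (u_{i,j}\hat{n}_i,\ 0)$, using $\uvec=-v_0\nhat$ at the BFP. Thus every tangent vector has its $xy$-part orthogonal to the nonzero vector with components $u_{i,j}\hat{n}_i$ (nonzero since its contraction with $\nhat$ equals $B=-\mu\neq0$); and because $\hat{n}_i u_{i,j}\hat{g}_j = \mu(\nhat\cdot\ghat)=0$, the orthogonal complement of that vector in the plane is spanned by $\ghat$. Hence the $xy$-projection of any tangent vector---in particular of the $\lambda_{\pm}$ eigenvectors---is proportional to $\ghat$. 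The main obstacle I anticipate is purely bookkeeping: carrying the decomposition of $\rdot$ through the derivative of $g$ carefully enough to confirm that the cross terms cancel and leave exactly $\dot g = B\,g$ with no residual dependence on the off-surface component; after that, everything is linear algebra powered by Theorem~\ref{t4}.
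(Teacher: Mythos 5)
Your proposal is correct, and it follows a route the paper never writes down: the paper states Corollary~\ref{cor:restricted_bfp_stability} without any proof, treating it as an immediate specialization of Theorem~\ref{t4}, with the underlying Jacobian and eigenvector computation outsourced to Ref.~\cite{Mitchell12b}. Your argument is self-contained and never computes that Jacobian. Its key ingredient is the exact factorization $\dot g = B\,g$, with $g \equiv \rdot\cdot\nhat$ and $B \equiv \hat{n}_i u_{i,j}\hat{n}_j$, which does hold: repeating the cancellation in the paper's proof of Lemma~\ref{lemma:sliding_invariant} with the decomposition $\rdot = (\rdot\cdot\ghat)\,\ghat + g\,\nhat$ in place of the sliding constraint leaves exactly the $B g$ term. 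This is a genuine strengthening of that lemma --- the sliding surface $\{g=0\}$ is not merely invariant but has a well-defined transverse rate $B$. Evaluating $B$ at the BFP using incompressibility and Eq.~(\ref{eq:mu}) gives $B=-\mu=\lambda_0$, so $\nabla g$ (nonzero there since its contraction with $\nhat$ is $-\mu\neq 0$) is a left eigenvector of the linearization carrying the transverse eigenvalue $\lambda_0$; left--right eigenvector orthogonality for distinct eigenvalues then confines the $\lambda_\pm$ eigenvectors to $\ker\nabla g$, the tangent plane of the sliding surface, and your computation $\nabla g|_{\mathrm{BFP}}=\bigl(u_{i,j}\hat{n}_i,\,0\bigr)\propto(\nhat,0)$ correctly shows that every vector in that plane has $xy$-projection along $\ghat$. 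What each approach buys: the paper's presentation is shorter because it leans on the reference; yours is independent of it and identifies structurally which eigenvalue is transverse. Two small points to tighten: the argument tacitly needs $\mu\neq 0$ and $\mu'\neq 0$ (so that $\nabla g\neq 0$ and $\lambda_\pm\neq\lambda_0$), which is harmless since Theorem~\ref{t4}'s classification already assumes both signs are definite; and at the nongeneric value $\mu'=-\mu/4$, where $\lambda_+=\lambda_-$, the two eigenvectors need not span $\ker\nabla g$, so the final step is cleaner phrased via the block-triangular structure induced by the invariant plane --- the quotient by $\ker\nabla g$ carries eigenvalue $\lambda_0$, hence the restriction's characteristic polynomial is $(\lambda-\lambda_+)(\lambda-\lambda_-)$ regardless of degeneracy.
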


The sliding surface stability information is summarized in Fig.~\ref{fig:stability_table}.
For each of the four stability types, the first two stabilities (in bold) describe the dynamics within the invariant sliding surface.
Equation~\ref{eq:muprime} places restrictions on the local convexity of the SZ at the BFP.
These possibilities are illustrated in Fig.~\ref{fig:stability_table}.
\begin{figure}[bt]
\includegraphics[width=1.0\linewidth]{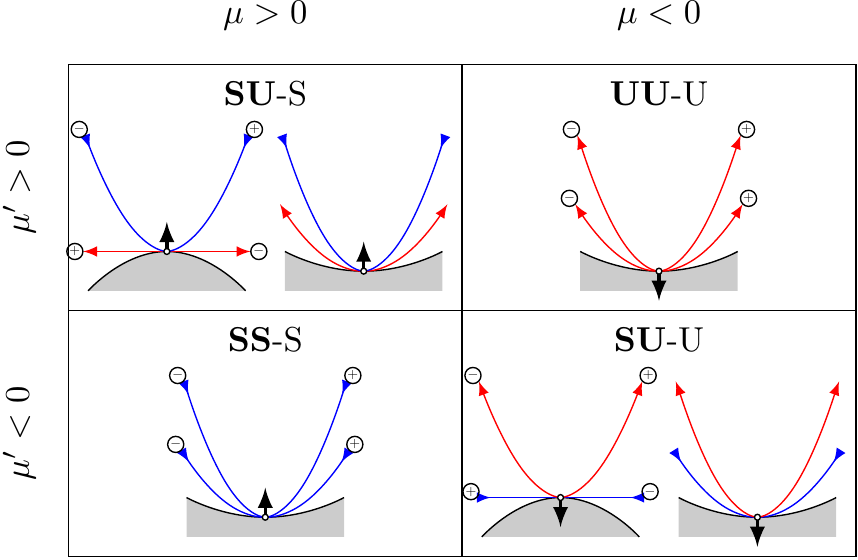}
\caption{(Color online.)  Four BFP stability types. Black arrow
  indicates burning direction. Gray regions are SZs. Two un/stable
  manifolds of each BFP are within the sliding surface. Each manifold
  is labeled with $+/-$ indicating its corresponding branch of the
  sliding surface.}
\label{fig:stability_table}
\end{figure}

\begin{lemma}
At an intersection $\mathbf{p}$ between a sliding front and the boundary of a SZ, the fluid flow is perpendicular to the sliding front. If the sliding front is tangent to the boundary, $\mathbf{p}$ is a BFP. Otherwise, $\mathbf{p}$ is a cusp along the sliding trajectory.
All BFPs and cusps of sliding fronts occur at the intersection between a sliding front and SZ boundary.
\label{lemma:SFIntersectsSZ}
\end{lemma}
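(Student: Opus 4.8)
The plan is to exploit the combination of the sliding constraint with the front-element dynamics Eq.~(\ref{eq:3DODE}), which together force $\uvec \cdot \nhat = -v_0$ at every point of a sliding front. First I would establish the perpendicularity claim and, as a bonus, that the $xy$-velocity vanishes at $\mathbf{p}$. At an intersection with the SZ boundary we have $|\uvec| = v_0$, so
\begin{align}
|\rdot|^2 = |\uvec + v_0 \nhat|^2 = |\uvec|^2 + 2 v_0\, \uvec \cdot \nhat + v_0^2 = v_0^2 - 2 v_0^2 + v_0^2 = 0 .
\end{align}
Hence $\rdot = \uvec + v_0 \nhat = 0$, which gives $\nhat = -\hat{\uvec}$ (equivalently, this is the Cauchy--Schwarz equality case of $\uvec \cdot \nhat = -v_0 = -|\uvec|$). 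Since $\ghat \perp \nhat$ and $\uvec \parallel \nhat$, the flow is perpendicular to the sliding front, which is the first claim. Because $\rdot = 0$ holds automatically, whether $\mathbf{p}$ is a BFP (a fixed point of the full Eq.~(\ref{eq:3DODE})) reduces entirely to whether $\dot\theta = 0$.

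Next I would compute $\dot\theta$ at $\mathbf{p}$ and identify its vanishing with tangency. Substituting $\nhat = -\uvec/v_0$ and $\ghat = s\, \uvecperp / v_0$ (with $s = \pm 1$ fixing the branch, using $|\uvecperp| = v_0$) into $\dot\theta = -\hat{n}_i u_{i,j} \hat{g}_j$ gives
\begin{align}
\dot\theta = \frac{s}{v_0^2}\, u_i\, u_{i,j}\, u^{\perp}_j = \frac{s}{2 v_0^2}\, \uvecperp \cdot \nabla |\uvec|^2 .
\end{align}
The SZ boundary is the level set $|\uvec|^2 = v_0^2$, whose tangent is orthogonal to $\nabla |\uvec|^2$, while the sliding front arrives at $\mathbf{p}$ with $xy$-tangent $\ghat \parallel \uvecperp$. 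Thus $\dot\theta = 0$ is precisely the statement that the sliding front is tangent to the SZ boundary; combined with $\rdot = 0$ this is exactly the BFP condition, proving the ``tangent $\Rightarrow$ BFP'' case. When instead $\dot\theta \neq 0$, the underlying smooth 3D trajectory passes through the fold of the sliding surface while $\rdot$ momentarily vanishes; since the branch label $\sgn(\rdot \cdot \ghat)$ flips across the fold, $\rdot$ reverses direction along the nearly constant tangent $\ghat \parallel \uvecperp$, producing a cusp in the $xy$-projection (cf.\ Fig.~\ref{fig:cusp_in_3D}). This gives the ``otherwise, cusp'' case.

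Finally, for the converse statement I would note that every cusp and every BFP of a sliding front is a point where $\rdot$ vanishes; but $\rdot = \uvec + v_0 \nhat = 0$ forces $|\uvec| = v_0$, so such points necessarily lie on the SZ boundary, and a BFP trivially satisfies the sliding constraint so sits on a sliding front. Hence all BFPs and cusps occur exactly at sliding-front/SZ-boundary intersections. I expect the main obstacle to be the cusp case: one must argue rigorously that $\dot\theta \neq 0$ yields a genuine cusp rather than a smooth point or higher-order singularity, which requires tracking the branch switch of the sliding surface across the fold and confirming the reversal of $\rdot$. The algebra underlying the perpendicularity claim and the $\dot\theta$ identity, by contrast, is short and essentially forced.
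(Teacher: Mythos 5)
Your proof is correct, and while its skeleton matches the paper's, the key middle step is handled by a genuinely different (and self-contained) route. The paper's own proof proceeds as follows: (i) the sliding constraint gives $\uvec\cdot\nhat = -v_0$, and $|\uvec| = v_0$ on the SZ boundary then forces $\uvec = -v_0\nhat$, hence perpendicularity; (ii) for ``tangent $\Rightarrow$ BFP'' it simply cites Thm.~2 of Ref.~\cite{Mitchell12b}, which characterizes BFPs as points of the SZ boundary where $\nhat$ is perpendicular to that boundary; (iii) for the cusp case it gives only the one-line topological argument that the sliding trajectory cannot enter the SZ and so must reverse direction; (iv) the converse follows because BFPs and cusps both require $\rdot=0$. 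You replace the citation in (ii) by a direct computation: the formula $\dot\theta = \frac{s}{2v_0^2}\,\uvecperp\cdot\nabla|\uvec|^2$, combined with the observation that the SZ boundary is the level set $|\uvec|^2 = v_0^2$, turns the equivalence ``tangent $\Leftrightarrow$ fixed point'' into the transparent statement that $\ghat\parallel\uvecperp$ is orthogonal to $\nabla|\uvec|^2$ exactly when $\dot\theta=0$. Your treatment of (iii) is also more mechanistic than the paper's: since $\rdot$ vanishes at the fold while $d\rdot/dt = v_0\dot\theta\,\ghat \neq 0$ there, the projected velocity reverses along the fixed direction $\ghat$, which is precisely a cusp. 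What you lose is brevity; what you gain is independence from the cited theorem, plus the useful intermediate fact---implicit but never stated in the paper's proof---that $\rdot = 0$ at \emph{every} intersection of a sliding front with the SZ boundary, so that the BFP/cusp dichotomy reduces purely to whether $\dot\theta$ vanishes. Two small caveats: with the paper's orientation conventions ($\ghat$ is $\nhat$ rotated counterclockwise by $\pi/2$, and $\uvecperp$ a counterclockwise rotation of $\uvec$), the sign at the fold is forced to be $s=-1$ rather than a free $\pm 1$ (harmless, since your conclusion is sign-independent); and your tangency argument tacitly assumes $\nabla|\uvec|^2 \neq 0$ at $\mathbf{p}$ so that the SZ boundary has a well-defined tangent there---a nondegeneracy assumption the paper's proof also makes implicitly.
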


\begin{proof}
  Combining the sliding constraint Eq.~(\ref{eq:sliding_constraint})
  with the front element dynamics Eq.~(\ref{eq:3DODE}) gives $\uvec
  \cdot \nhat = -v_0$. Since $|\uvec| = v_0$ on the SZ boundary,
  $\uvec = -v_0 \nhat$. Thus $\uvec$ is perpendicular to $\ghat$.

In Thm.~2 of Ref.~\cite{Mitchell12b}, it was shown that a necessary and sufficient condition for a BFP was for it to be on the boundary of the SZ with $\nhat$ perpendicular to the boundary. Thus $\ghat$ tangent to the boundary implies a BFP.

If $\ghat$ is not tangent to the boundary, then since the sliding trajectory cannot enter the SZ, it reaches the boundary and then must reverse direction forming a cusp.

Finally, a BFP and cusp both require $\rdot = 0$. 
This satisfies the sliding constraint and also implies $|\uvec| = v_0$.
\end{proof}

\begin{lemma}
A BFP or cusp on the boundary of the fluid domain, i.e. at a wall, must have $\ghat$ perpendicular to that boundary.
\label{lemma:SFOnFluidBoundary}
\end{lemma}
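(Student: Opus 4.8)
The plan is to combine the local relation between the fluid velocity and the front orientation at a BFP or cusp---already extracted in the proof of Lemma~\ref{lemma:SFIntersectsSZ}---with the no-penetration boundary condition at the wall. The whole argument is geometric: I will show that the wall forces $\uvec$ to be tangent to it, while the BFP/cusp condition forces $\uvec$ to be parallel to $\nhat$, and these two facts together pin down $\ghat$.

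First I would recall that, by Lemma~\ref{lemma:SFIntersectsSZ}, every BFP and every cusp of a sliding front lies on a SZ boundary, where $|\uvec| = v_0$. Inserting the sliding constraint Eq.~(\ref{eq:sliding_constraint}) into the front dynamics Eq.~(\ref{eq:3DODE}) gives $\uvec \cdot \nhat = -v_0$, and since $|\uvec| = v_0$ on the SZ boundary this forces $\uvec = -v_0 \nhat$. In particular $\uvec$ is antiparallel to $\nhat$ and nonzero (as $v_0 > 0$), so its direction is well defined and $\uvec$ is perpendicular to $\ghat$.

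Next I would invoke the physical character of the wall. Since the wall is an impermeable boundary of the fluid domain $D$, the normal component of $\uvec$ must vanish there; equivalently, the wall coincides with a streamline. (This is immediate for the model flow Eq.~(\ref{eq:WAVC}), where $u_y = 0$ on $y = 0$ and $y = 1$.) Hence at any point of the wall the fluid velocity $\uvec$ is tangent to the wall. Combining the two facts at a point that is simultaneously a BFP or cusp and on the wall: there $\uvec$ is parallel to $\nhat$ (from the first step) and tangent to the wall (from the second). Because $\uvec \neq 0$, this means $\nhat$ is tangent to the wall, and therefore $\ghat$, being orthogonal to $\nhat$, is perpendicular to the wall, which is the claim.

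I expect the only delicate point to be the boundary-condition step. The argument hinges on the wall being a streamline so that $\uvec$ is tangent to it, and on $\uvec$ being nonzero there---guaranteed by $|\uvec| = v_0 > 0$---so that the parallelism between $\uvec$ and $\nhat$ can legitimately be transferred to $\nhat$ and the wall. Everything else is a direct reuse of the relation $\uvec = -v_0 \nhat$ established for Lemma~\ref{lemma:SFIntersectsSZ}.
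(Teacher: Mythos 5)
Your proposal is correct and follows essentially the same route as the paper's proof, which simply cites Lemma~\ref{lemma:SFIntersectsSZ} (giving $\uvec = -v_0\nhat$ at a BFP or cusp) together with the tangency of the fluid velocity to the domain boundary. Your write-up just makes explicit the details the paper leaves implicit, including the useful observation that $|\uvec| = v_0 > 0$ is what allows the direction of $\uvec$ to be transferred to $\nhat$.
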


\begin{proof}
This follows from the previous Lemma~\ref{lemma:SFIntersectsSZ} and that the fluid velocity of an incompressible fluid is tangent to the fluid domain boundary.
\end{proof}

%


An incompressible 2D fluid flow can be specified by a stream function
$\Psi(\rvec)$, with $u_x = d\Psi/dy, u_y = -d\Psi/dx$.  Each fluid
element follows a level set, or streamline, of $\Psi$.  Front
elements, on the other hand, do not follow streamlines, but generally
cross them one way or the other depending on their relative
orientation.
\begin{lemma}
Sliding fronts cross streamlines such that $\theta_{\nhat, \uvec}$, the angle between $\nhat$ and $\uvec$, satisfies $\cos(\theta_{\nhat, \uvec}) = - v_0 / |\uvec|$.
\label{lemma:SFsCrossStreamlinesAngle}
\end{lemma}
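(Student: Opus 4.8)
The plan is to reduce this geometric statement to the single scalar identity $\uvec \cdot \nhat = -v_0$, which has already appeared in the derivation of Eq.~(\ref{eq:wpm_fields}) and in the proof of Lemma~\ref{lemma:SFIntersectsSZ}. First I would recall that a streamline is, by construction, everywhere tangent to the fluid velocity $\uvec$; hence the quantity $\theta_{\nhat,\uvec}$ defined in the statement---the angle between the front normal $\nhat$ and $\uvec$---is exactly the angle at which the front's propagation direction meets the streamline it crosses. The entire content of the lemma is therefore the evaluation of this one angle.

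Next I would combine the sliding constraint Eq.~(\ref{eq:sliding_constraint}), $\rdot \cdot \nhat = 0$, with the translational dynamics Eq.~(\ref{eq:3DODEa}), $\rdot = \uvec + v_0 \nhat$. Substituting the latter into the former gives $(\uvec + v_0\nhat)\cdot\nhat = 0$; since $\nhat\cdot\nhat = 1$, this is simply $\uvec\cdot\nhat = -v_0$. Finally I would convert this inner product into the claimed form: writing $\uvec\cdot\nhat = |\uvec|\cos\theta_{\nhat,\uvec}$ (using $|\nhat|=1$) and equating with $-v_0$ yields $\cos\theta_{\nhat,\uvec} = -v_0/|\uvec|$.

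I do not expect a genuine obstacle---the result is an immediate corollary of the sliding relation---so the only points worth stating with care are two consistency checks. First, the formula is well-defined precisely on the FZ, where $|\uvec|\ge v_0$ forces $v_0/|\uvec|\le 1$ and hence $\cos\theta_{\nhat,\uvec}\in[-1,0)$; in the SZ there is no admissible angle, matching Lemma~\ref{lemma:NoSFinSZ}. Second, although two sliding orientations exist at each FZ point, they are mirror images about $\uvec$ (Fig.~\ref{fig:generic_intersection_of_sliding_fronts}) and so share the common cosine computed above, consistent with the single formula in the lemma.
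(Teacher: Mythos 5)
Your proposal is correct and follows essentially the same route as the paper: both arguments combine the sliding constraint $\rdot \cdot \nhat = 0$ with Eq.~(\ref{eq:3DODEa}) to obtain the scalar identity $\uvec \cdot \nhat = -v_0$, and then read off the cosine since $|\nhat| = 1$. The additional consistency checks you note (well-definedness only in the FZ, and the two sliding orientations sharing the same cosine) are not in the paper's proof but are consistent with its surrounding discussion.
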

\begin{proof}
  From Eq.~(\ref{eq:3DODE}), $\uvec = \rdot - v_0 \nhat$. The sliding
  front condition implies $\uvec = \pm |\rdot| \ghat - v_0
  \nhat$. Dotting with $\nhat$, $\nhat \cdot \uvec = - v_0$.
\end{proof}
Lemma~\ref{lemma:SFsCrossStreamlinesAngle} means that sliding fronts are never tangent to streamlines, except in the $v_0 / |\uvec| \to 0$ limit.
Physical boundaries of the fluid (channel walls) are particularly important streamlines at which this lemma can be utilized.

In addition to the angle at which sliding fronts cross streamlines, we can examine how rapidly they are crossed.
To this end, we calculate the rate at which $\Psi$ changes when viewed from the frame of an individual front element.
\begin{align}
\frac{D \Psi}{D t} &\equiv \frac{\partial \Psi}{\partial x} \frac{\partial x}{\partial t} + \frac{\partial \Psi}{\partial y} \frac{\partial y}{\partial t} + \frac{\partial \Psi}{\partial t}\\
&= -u_y \dot{x} + u_x \dot{y}\\
&= -(\dot{y} + v_0 \cos \theta) \dot{x} + (\dot{x} - v_0 \sin \theta) \dot{y}\\
&= -v_0 \rdot \cdot \ghat\\
&= - \sgn(\rdot \cdot \ghat) v_0 |\rdot|,
\end{align}
where the last equality makes use of the sliding constraint.  Scaling
by the front element speed, we find the simple relation
\begin{align}
\frac{D \Psi}{D s} = - \sgn(\rdot \cdot \ghat) v_0,
\label{eq:SFConstantChangeInPsi}
\end{align}
where $s$ measures the euclidean $xy$-length along the trajectory.
Equation~(\ref{eq:SFConstantChangeInPsi}) shows that the sliding
trajectories on the $+$ ($-$) branch of the sliding surface are those
that climb the stream function with the constant rate of descent
(ascent) $v_0$.  It is straightforward to show that the only front
elements that ascend or descend at a constant rate are sliding.

%
\begin{figure}[bt]
\includegraphics[width=1.0\linewidth]{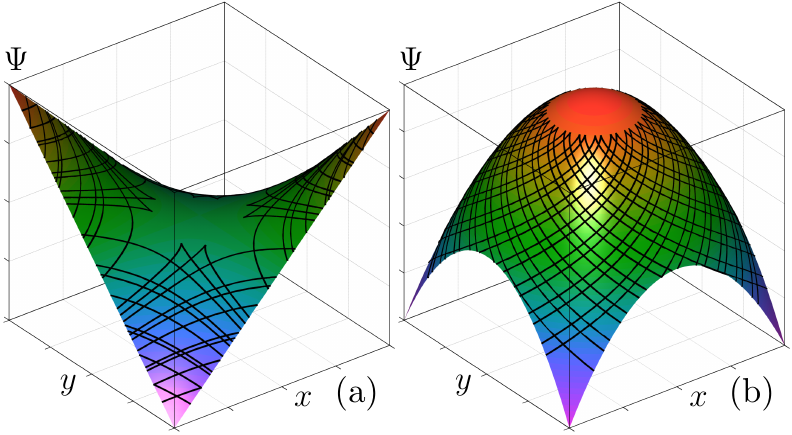}
\caption{(Color online.)  Sliding fronts represented on the graph of
  the stream function $\Psi$ near: (a) a hyperbolic point, (b) an
  elliptic point. Sliding fronts are curves of constant increase (or
  decrease) in the value of the stream function.}
\end{figure}

Consider two streamlines with values $\Psi_a$ and $\Psi_b$ and a
sliding front $F_{a,b}$ that connects one to the other with no
intervening cusps.  The $xy$-length of the segment $F_{a,b}$ follows
directly from Eq.~(\ref{eq:SFConstantChangeInPsi}),
\begin{align}
\label{eq:FFLength}
|F_{a,b}| &= \frac{|\Psi_b - \Psi_a|}{v_0}
\end{align}
This expression is particularly useful when thinking about FFs in
channel flows of arbitrary geometry.  Since the channel wall enforces
a boundary condition of constant $\Psi$, the FF length is found
through Eq.~(\ref{eq:FFLength}).  Thus the length of a FF that spans a
channel depends only on this ``energy difference'' between the two
walls and the burning speed, and not on other details of the flow.
While Eq.~(\ref{eq:FFLength}) was derived for a single sliding front, it
also holds for FFs that are composed of multiple BIM cores.  This can
be seen by applying Eq.~(\ref{eq:FFLength}) to each BIM segment
separately.  Interestingly, this implies that multiple FFs existing in
the same flow must have the same length, even in the absence of any
flow symmetry.

Equation~(\ref{eq:FFLength}) also implies that a channel of width $W$
cannot support a FF if $\Delta \Psi < v_0 W$.  Furthermore, if we
assume that a flow $\mathbf{u}$ \emph{without} wind gives no net flow
down the channel, then for the flow $\mathbf{u} + \mathbf{v}_w$,
Eq.~(\ref{eq:FFLength}) becomes
%
%
\begin{equation}
|F_{a,b}| = W v_w / v_0.
\label{r30}
\end{equation}
This can be interpreted as the equality of fluid flux across the FF
and across the channel width.  Eq.~(\ref{r30}) shows that
$v_w \ge v_0$ is necessary but not sufficient.  In the case that FFs
do occur at $v_w = v_0$, they must be straight lines that meet the
channel walls at right angles.  This occurs exactly when the original
fluid flow ($v_w=0$) has a vertical advective separatrix.  This
condition is met for the windy alternating vortex chain model;
additionally Ref.~\cite{Schwartz08} experimentally demonstrated that
$v_w = v_0$ marked the onset of FFs.  However, it is not difficult to
construct flows where $v_w = v_0$ is not sufficient for the existence
of FFs.

\section{Stability of Frozen Domains}
\label{app:stability}

At the beginning of Sec.~\ref{sec:FF}, we specified that frozen
domains should be stable under small perturbations.  Here, we define
this stability more precisely.  For a given invariant burned domain,
with boundary $F$, we define an \emph{allowable} distortion of $F$ at
a point $\mathbf{r} \in F$ to be a distortion such that $F$ remains
unchanged outside a ball of radius $\epsilon(\mathbf{r})>0$ centered
at $\mathbf{r}$.  Note that the value of $\epsilon(\mathbf{r})$ is not
fixed but may vary with the point $\mathbf{r}$.
\begin{definition}[Stability of frozen domains/fronts]
  A frozen domain/front is required to be \emph{stable} in the
  following sense.  There must exist a function $\epsilon(\mathbf{r})$
  of each point $\mathbf{r}$ along the front $F$, i.e. the domain
  boundary, such that after any allowable distortion, the front
  remains pointwise close to $F$ and converges pointwise to $F$ as
  time goes to infinity.
\end{definition}
Here pointwise close is in the ``Lyapunov'' sense, in that the maximum
(over all time) distance from the time-evolved distorted front to the
original front $F$ remains bounded and goes to zero as $\epsilon$ goes
to zero.

With this definition, one can easily verify the argument of
Sec.~\ref{sec:FF} proving that a FF cannot contain an SUU or UUU BFP.
One also sees that a FF can contain an SSU or SSS BFP, and that the
curves constructed in Prop.~\ref{prop:FF} are stable.  Regarding the
latter, it is interesting to note how a perturbation of the FF $F$
returns to $F$.  First, a perturbation localized to the neighborhood
of an SSS point simply shrinks in size, back into the original FF, due
to the SSS point's being a sink.  Consider now a perturbation
localized at some point $\mathbf{r}$ of the FF that is not an SSS BFP.
This perturbation will be ``swept'' along the front, away from the
unstable BFP that generates the BIM; the perturbation might even
initially grow in size.  The localized perturbation will continue to
follow the BIM segment from which it was perturbed, and will
subsequently encounter either an SSS point, a domain wall, another BIM
segment of the FF, or it will be swept to infinity
(Fig.~\ref{fig:stability}).  In the initial three cases, it is clear
that the perturbation will disappear as it either shrinks into the SSS
BFP (Fig.~\ref{fig:stability}a), strikes the wall
(Fig.~\ref{fig:stability}b), or runs into the already burned region
(Fig.~\ref{fig:stability}c), assuming the initial size of the
perturbation $\epsilon(\mathbf{r})$ is sufficiently small.  The case
in which the perturbation is swept to infinity
(Fig.~\ref{fig:stability}d) requires an additional assumption on the
far field nature of the fluid flow, addressed in
Sec.~\ref{sec:stabilityFFinfinity}.

\begin{figure}[bt]
\includegraphics[width=1\linewidth]{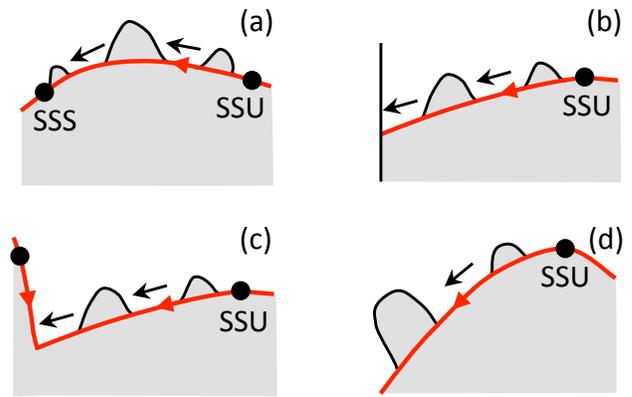}
\caption{ \label{fig:stability} (Color online.)  The fates of
  localized perturbations to a FF.  a) The perturbation eventually
  shrinks into an SSS BFP on the FF.  b) The perturbation strikes a
  wall of the fluid domain. c) The perturbation strikes a second BIM
  core at a concave corner of the FF.  d) The perturbation goes to
  infinity along a BIM core that stretches to infinity.  Though the
  size of the perturbation could grow indefinitely, the perturbed
  front still returns to the FF pointwise.}
\end{figure}
 
\section{Dynamics at infinity}
\label{app:far_field}

In this paper, we assume that the fluid velocity field
$\mathbf{u}(\mathbf{r})$ is smooth and either (i) has a bounded stream
function or (ii) is ``localized'' with a simple far-field behavior.

More precisely, case (i) assumes that $\mathbf{u}(\mathbf{r})$ is
generated by a stream function $\Psi(\mathbf{r})$ that has a global
maximum and minimum over the fluid domain.  This case applies, for
example, to the stream function in Eq.~(\ref{eq:WAVC}), since in our
restricted domain $y$ is bounded between $0$ and $1$.  Under case (i),
no sliding front may be infinitely long without striking a SZ, since a
sliding front element increases or decreases its stream function value
at the constant rate $v_0$, Eq.~(\ref{eq:SFConstantChangeInPsi}).
Thus, for case (i) there are no BIM cores that stretch all the way to
infinity.

The remainder of this appendix concerns case (ii), stated more
precisely as the requirement that, as $r$ goes to infinity,
$\mathbf{u}(\mathbf{r})$ behaves as a homogeneous polynomial
$\mathbf{P}^k(x,y)$ of power $k \ge 0$ in the variables $(x,y)$, i.e.
\begin{equation}
\mathbf{u}(\mathbf{r}) = \mathbf{P}^k(x,y) + \mathbf{E}(x,y),
\label{r6}
\end{equation}
where $\mathbf{E}(x,y)$ is a term that grows more slowly than $r^k$.  More
precisely, we require that
\begin{equation}
\lim_{r \rightarrow \infty} r^{-k} \mathbf{E}(r \cos \phi, r\sin \phi)
= 0,
\label{r10}
\end{equation}
with the polar angle $\phi$ held constant in the limit.  Furthermore,
we require that radial derivatives of $\mathbf{E}$ grow more slowly,
according to
\begin{equation}
\lim_{r \rightarrow \infty} r^{-k+\ell} 
  \frac{\partial^\ell}{\partial r^\ell}\mathbf{E}(r \cos \phi, r\sin \phi)
= 0, \quad 0 \le \ell.
\label{r14}
\end{equation}
This assumption on the far field behavior eliminates, for example,
infinite arrays of vortices [though such arrays could be allowed under
case (i)], while allowing many important cases, including flows with
constant far field velocities, linear hyperbolic and elliptic flows,
other polynomial flows, and flows constructed from arbitrary
configurations of a finite number of vortices.

\subsection{Existence criteria for fixed points at infinity}

By a burning fixed point ``at infinity'', we intuitively mean a front
element trajectory that attains a constant orientation $\theta$ and
polar angle $\phi$ at an infinite value of $r$.  We can formalize this
definition by adapting the standard Poincar\'{e} compactification of
the plane~\cite{Perko01}.  First, map the radial distance $r$, $0 \le
r < \infty$, to a new radial variable $\rho$, $0 \le \rho < \pi/2$,
defined by
\begin{equation}
\rho = \arctan r. \label{r1}
\end{equation}
Similarly, introduce scaled Cartesian coordinates
${\bm \rho} = (\rho_x, \rho_y) = (x, y) \rho/r$.  This transformation
maps the $xy$-plane to an open disk of radius $\pi/2$ in the
${\bm \rho}$-plane.  We then include in our phase space the boundary
of the disk at $\rho = \pi/2$, representing the ``circle at
infinity''.  Equation~(\ref{r1}) yields
\begin{align}
\dot{\bm \rho} = 
& \frac{\rho}{\tan \rho}(\mathsf{I} - \hat{\bm \rho} \otimes \hat{\bm \rho} )
\dot{\mathbf{r}} +
(\cos^2 \rho)(\hat{\bm \rho} \otimes \hat{\bm \rho} ) \dot{\mathbf{r}}
\label{r3} \\
= &\frac{\rho}{\tan \rho}(\mathsf{I} - \hat{\bm \rho} \otimes \hat{\bm \rho} )
({\mathbf{u}} + v_0 \hat{\mathbf{n}})  \label{r7} \\
&+
(\cos^2 \rho)(\hat{\bm \rho} \otimes \hat{\bm \rho} ) ({\mathbf{u}} + v_0
\hat{\mathbf{n}}), \nonumber 
\end{align}
where the second equality follows from Eq.~(\ref{eq:3DODEa}).  Here,
$\mathsf{I}$ is the identity matrix and $\hat{\bm \rho} \otimes
\hat{\bm \rho}$ is the tensor product with components $(\hat{\bm \rho}
\otimes \hat{\bm \rho})_{ij} = \rho_i \rho_j /\rho^2$.  Equation
(\ref{r7}) is smooth in $(\rho_x, \rho_y, \theta)$ everywhere except
possibly at the boundary $\rho = \pi/2$.

Note that Eq.~(\ref{r6}) can be reexpressed in $(\rho, \phi)$
coordinates as
\begin{equation}
\mathbf{u}(\rho, \phi) = \cos^{-k} (\rho) \; \mathbf{Q}(\phi) + \bar{\mathbf{E}}(\rho,\phi),
\label{r8}
\end{equation}
where $\mathbf{Q}$ is smooth in $\phi$ and where
\begin{equation}
\lim_{\rho \rightarrow \pi/2} \cos^{k-\ell} (\rho)
\frac{\partial^\ell}{\partial \rho^\ell} \; \bar{\mathbf{E}}(\rho,\phi)
= 0, \quad 0 \le \ell,
\label{r11}
\end{equation}
with fixed $\phi$.  Thus as $\rho$ goes to $\pi/2$, Eq.~(\ref{r7})
scales as $1/\cos^{k-1} \rho$, and is thereby singular for $k > 1$.
To remove this singularity, the ODE time-parameter $t$ can be replaced
by a new scaled parameter $s$ defined by 
\begin{equation}
\frac{ds}{dt} = \frac{1}{\cos^{k-1} \rho}, \quad k \ge 1.
\label{r18}
\end{equation}
When $k=0$, we make no scaling.  This yields
%
%
\begin{align}
\frac{d}{ds}{\bm \rho} = &
\frac{\rho \cos^{k} \rho}{\sin \rho}(\mathsf{I} - \hat{\bm \rho} \otimes \hat{\bm \rho} )
 ({\mathbf{u}} + v_0 \hat{\mathbf{n}}) \label{r4} \\
& +
(\cos^{k+1} \rho)(\hat{\bm \rho} \otimes \hat{\bm \rho} ) ({\mathbf{u}} +
v_0 \hat{\mathbf{n}}), \quad k \ge 1.
\nonumber
\end{align}
The right-hand side of Eq.~(\ref{r4}) is now smooth over the entire
closed disk $\rho \le \pi/2$ and all $\theta$ values, except possibly
at $\rho = 0$, which does not concern us since our focus is at
$\rho = \pi/2$.  One can also see that
\begin{equation}
\frac{d}{ds}{\theta} = -(\cos^{k-1} \rho) \hat{n}_i
u_{i,j} \hat{g}_j, \quad k \ge 1, 
\label{r5}
\end{equation}
is smooth in $(\rho_x, \rho_y, \theta)$ for all $\rho < \pi/2$, but
might be singular at $\rho = \pi/2$.

We define a fixed point at infinity of Eq.~(\ref{eq:3DODE}) to be a
fixed point $(\rho_x^*, \rho_y^*, \theta^*)$ at $\rho^* = \pi/2$ for
Eqs.~(\ref{r4}) and (\ref{r5}) ($k \ge 1$) or Eqs.~(\ref{r3}) and
(\ref{eq:3DODEb}) ($k=0$).  We next restrict attention to those fixed
points at infinity that are the limits of sliding fronts extending to
infinity.  The tangent direction of a sliding front that converges
upon a fixed point at infinity must thus converge to
$\pm \hat{\bm \rho}$.  Thus, such a fixed point must yield
\begin{gather}
\hat{\mathbf{g}}^* = \pm \hat{\bm \rho},
\label{r13} \\
\lim_{r \rightarrow \infty} \frac{\dot{r}^2}{|\dot{\mathbf{r}}|^2} = 1.
\label{r15}
\end{gather}
Furthermore, notice that at $\rho = \pi/2$, the term
$(\cos^{k-1} \rho) {u}_{i,j} \hat{g}_j$ in Eq.~(\ref{r5}) will be
noninfinite if $\hat{\mathbf{g}} = \pm \hat{\bm \rho}$ [see
Eqs.~(\ref{r8}) and (\ref{r11})], a fortuitous consequence of
searching for fixed points that are the limits of sliding fronts.
 
Equation~(\ref{r15}) can be rewritten as
\begin{align}
1 &= \lim_{r \rightarrow \infty} \frac{(u_r + v_0 \hat{n}_r)^2}{|\mathbf{u}+v_0
  \hat{\mathbf{n}}|^2} 
= \lim_{r \rightarrow \infty} \frac{u_r^2}
{|\mathbf{u} \pm v_0  \hat{\bm \phi}|^2}  \label{r17} \\
&= \lim_{\rho \rightarrow \pi/2} \frac{\cos^{-2k}(\rho) Q_r^2(\phi)}
{\cos^{-2k}(\rho) Q_r^2(\phi) +  [\cos^{-k}(\rho) Q_\phi(\phi) \pm
  v_0]^2 } \\
&= \lim_{\rho \rightarrow \pi/2} \frac{ Q_r^2(\phi)}
{ Q_r^2(\phi) +  [Q_\phi(\phi) \pm v_0 \cos^k(\rho)]^2 },
\end{align}
where the second equality follows from Eq.~(\ref{r13}) and the third
from Eqs.~(\ref{r8}) and (\ref{r11}).  This implies that for a fixed
point at infinity (that is the limit of a sliding front) in polar
coordinates $(\pi/2, \phi^*, \theta^*)$
\begin{equation}
Q_\phi(\phi^*) = \pm v_0 \cos^k(\pi/2) = 
\begin{cases}
\pm v_0 & \text{ if } k = 0, \\
0 & \text{ if } k \ge 1.
\end{cases}
\label{r16}
\end{equation}

Assume now that $k \ge 1$.  We seek necessary and sufficient
conditions for $d {\bm \rho} / ds = 0$ at $\rho = \pi/2$.
Equations~(\ref{r8}), (\ref{r11}), and (\ref{r4}) imply that at
$\rho = \pi/2$
\begin{align}
\frac{d}{ds}{\bm \rho} (\pi/2, \phi, \theta) = 
\frac{\pi}{2} (\mathsf{I} - \hat{\bm \rho} \otimes \hat{\bm \rho} )
 {\mathbf{Q}}(\phi) = \frac{\pi}{2} \hat{\bm \phi} Q_\phi(\phi), 
\end{align}
which is independent of $\theta$.  Thus, a necessary and sufficient
condition for $d {\bm \rho} / ds = 0$ at $\rho = \pi/2$ is
\begin{equation}
Q_\phi(\phi^*) = 0,  \quad k \ge 1,
\label{r9}
\end{equation}
consistent with the sliding front condition Eq.~(\ref{r16}).  

We now seek necessary and sufficient conditions for
$d \theta/ ds = 0$.  Based on Eq.~(\ref{r5}), we define the scaled
Jacobian matrix
\begin{equation}
\mathsf{J}(\rho,\phi) = \cos^{k-1} \rho 
\left[
\begin{array}{cc}
u_{r,r} & u_{r,\phi}/r \\
u_{\phi,r} & u_{\phi,\phi}/r 
\end{array}
\right]
\end{equation}
expressed in the $(\hat{\mathbf{r}}, \hat{\bm \phi})$ basis.  Under
assumptions (\ref{r8}) and (\ref{r11}), one can show that $\mathsf{J}$
takes on the following form at $\rho = \pi/2$.
\begin{equation}
\mathsf{J}(\pi/2, \phi) = 
\left[
\begin{array}{cc}
k Q_r(\phi) & Q_{r,\phi}(\phi) \\
k Q_\phi(\phi)  & Q_{\phi,\phi}(\phi)
\end{array}
\right].
\end{equation}
For a fixed point at $\rho = \pi/2$, we apply Eq.~(\ref{r9}) to obtain
\begin{equation}
\mathsf{J}(\pi/2, \phi^*) = 
\left[
\begin{array}{cc}
k Q_r(\phi^*) & Q_{r,\phi}(\phi^*) \\
0  & Q_{\phi,\phi}(\phi^*)
\end{array}
\right].
\label{r12}
\end{equation}
Setting $d \theta/ d s= 0$ in Eq.~(\ref{r5}) is equivalent to
$\hat{\mathbf{g}}$ being an eigenvector of $\mathsf{J}$.  From
Eqs.~(\ref{r13}) and (\ref{r12}), however, we see that
$\hat{\mathbf{g}} = \hat{\bm \rho}$ is already guaranteed to be an
eigenvector of $\mathsf{J}(\pi/2,\phi^*)$.  Thus, $d \theta/ d s= 0$
is already implied by Eqs.~(\ref{r13}) and (\ref{r9}).

Assume now that $k=0$.  Then $\mathbf{P}^k(x,y)$ is constant, and it
is easy to see from Eq.~(\ref{r7}) and Eq.~(\ref{eq:3DODEb}) that
$\dot{\bm \rho} = 0$ and $\dot{\theta}=0$ at $\rho = \pi/2$ for any
$\phi$ and $\theta$.  However, $\phi$ and $\theta$ are no longer
arbitrary when we apply the sliding front conditions Eq.~(\ref{r13})
and Eq.~(\ref{r16}).  

To summarize, we have the following conditions on the existence of
fixed points at infinity.
\begin{theorem}
\label{thm:BFP}
Assume the incompressible flow $\mathbf{u}$ satisfies Eqs.~(\ref{r6})
and (\ref{r14}) or equivalently Eqs.~(\ref{r8}) and (\ref{r11}).  Then
a sliding front ends in a fixed point at infinity with orientation and
position angles $\theta^*$ and $\phi^*$ if and only if
$\hat{\mathbf{g}}(\theta^*) = \pm \hat{\bm \rho}(\phi^*)$ and
\begin{align}
Q_\phi(\phi^*)=0 \text{, for } k \ge 1, \\
Q_\phi(\phi^*)=\pm v_0 \text{, for } k = 0.
\end{align}
\end{theorem}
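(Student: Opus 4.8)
The plan is to recognize that this theorem packages the compactified-dynamics computation built up in Eqs.~(\ref{r1})--(\ref{r12}) into a single biconditional, so the proof naturally splits into a necessity direction and a sufficiency direction, with the cases $k \ge 1$ and $k = 0$ handled separately because the singularity structure of the $\theta$-equation differs between them. In both directions the governing objects are the rescaled flow Eqs.~(\ref{r4})--(\ref{r5}) on the closed disk, for which $\rho = \pi/2$ is a genuine (non-singular) part of phase space once the reparametrization Eq.~(\ref{r18}) has been applied.

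For necessity I would start from the defining geometric property of a sliding front that limits onto a fixed point at $\rho = \pi/2$ along a fixed polar angle $\phi^*$: its tangent must become radial, forcing $\hat{\mathbf{g}}(\theta^*) = \pm\hat{\bm \rho}(\phi^*)$, which is Eq.~(\ref{r13}). Combining this with the sliding relation $\dot{\mathbf{r}} \propto \hat{\mathbf{g}}$ gives the limit Eq.~(\ref{r15}) that the radial speed exhausts the total speed. I would then substitute the far-field form Eq.~(\ref{r8}) together with the decay estimates Eq.~(\ref{r11}) and evaluate the limit as in Eq.~(\ref{r17}) and the lines following it; this collapses to $Q_\phi(\phi^*) = \pm v_0 \cos^k(\pi/2)$, i.e. Eq.~(\ref{r16}), yielding $Q_\phi(\phi^*)=0$ for $k \ge 1$ and $Q_\phi(\phi^*)=\pm v_0$ for $k = 0$.

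For sufficiency I would verify directly that the stated conditions make $(\pi/2, \phi^*, \theta^*)$ a fixed point of Eqs.~(\ref{r4})--(\ref{r5}). The $\bm \rho$-equation evaluated at $\rho = \pi/2$ reduces to $\tfrac{\pi}{2}\hat{\bm \phi}\,Q_\phi(\phi)$, so $d\bm\rho/ds = 0$ is equivalent to $Q_\phi(\phi^*)=0$ for $k \ge 1$ (Eq.~(\ref{r9})); for $k = 0$ the constant leading polynomial makes $\dot{\bm\rho}=0$ and $\dot\theta=0$ hold automatically from Eq.~(\ref{r7}) and Eq.~(\ref{eq:3DODEb}), with the sliding conditions Eqs.~(\ref{r13}) and (\ref{r16}) then pinning down the admissible $\phi^*$ and $\theta^*$. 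The delicate step is the orientation equation: here I would invoke the Jacobian at infinity, Eq.~(\ref{r12}), whose lower-left entry $k\,Q_\phi(\phi^*)$ vanishes exactly when $Q_\phi(\phi^*)=0$, so that $\hat{\bm\rho}$ is automatically a right eigenvector of $\mathsf{J}(\pi/2,\phi^*)$. Since $d\theta/ds=0$ is equivalent to $\hat{\mathbf{g}}$ being an eigenvector of $\mathsf{J}$ and Eq.~(\ref{r13}) identifies $\hat{\mathbf{g}}=\pm\hat{\bm\rho}$, the $\theta$-equation vanishes with no extra condition.

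I expect the main obstacle to be precisely this interlocking of the sliding constraint with the $\theta$-dynamics. One must confirm that the alignment $\hat{\mathbf{g}}=\pm\hat{\bm\rho}$ simultaneously renders the otherwise-singular term $(\cos^{k-1}\rho)\,u_{i,j}\hat{g}_j$ in Eq.~(\ref{r5}) finite at $\rho = \pi/2$ and makes it an eigendirection, so that no independent orientation condition beyond Eq.~(\ref{r13}) and the $Q_\phi$ condition survives; this is the \emph{fortuitous} cancellation the preceding discussion flags. The remaining care points are establishing the Jacobian form Eq.~(\ref{r12}) under the decay hypotheses Eq.~(\ref{r11}), and checking that the $k=0$ case does not secretly impose an extra constraint on $\theta^*$ beyond $\hat{\mathbf{g}}=\pm\hat{\bm\rho}$. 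Everything else is bookkeeping in the $(\rho,\phi,\theta)$ chart.
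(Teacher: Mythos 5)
Your proposal is correct and takes essentially the same approach as the paper: the paper's own proof is exactly the chain of computations you describe---the sliding-front conditions Eqs.~(\ref{r13}) and (\ref{r15}) substituted into the far-field form to yield Eq.~(\ref{r16}), the $\bm \rho$-equation at $\rho = \pi/2$ reducing to $\tfrac{\pi}{2}\hat{\bm \phi}\,Q_\phi(\phi)$ to give Eq.~(\ref{r9}), the upper-triangular Jacobian Eq.~(\ref{r12}) making $\hat{\mathbf{g}} = \pm\hat{\bm \rho}$ automatically an eigenvector so that $d\theta/ds = 0$ requires no extra condition, and the $k=0$ case handled by noting all boundary points are fixed points with the sliding conditions selecting $\phi^*$ and $\theta^*$. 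The only difference is organizational: you cast it as an explicit necessity/sufficiency split, whereas the paper presents the same content as a linear derivation that the theorem then summarizes.
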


\subsection{Stability of fixed points at infinity}

Assume $k \ge 1$.  To determine the stability of fixed points at
infinity, we compute the Jacobian matrix of the flow (\ref{r4}) and
(\ref{r5}) in $(\rho, \phi, \theta)$ coordinates, i.e.
\begin{equation}
\mathcal{J} = \frac{\partial (\rho', \phi', \theta')}{\partial (\rho,
  \phi, \theta)}.
\end{equation}
Applying the results of Theorem~\ref{thm:BFP}, it can be shown that
the Jacobian at a fixed point at infinity (to which a sliding front
converges) is
\begin{equation}
\mathcal{J} = 
\left[
\begin{array}{ccc}
-Q_r & 0 & 0 \\
\phi'_{,\rho}   & Q_{\phi,\phi} & 0 \\
0 & k Q_{\phi,\phi} & Q_{\phi,\phi} - k Q_r
\end{array}
\right],
\end{equation}
where
\begin{equation}
\phi'_{,\rho}  = -k(\cos \rho)^{k-1} ( \bar{E}_\phi + v_0 \hat{n}_\phi). 
\end{equation}

Applying the area-preservation constraint $\nabla \cdot {\mathbf{u}} = 0$ to
Eqs.~(\ref{r8}) and (\ref{r11}), we find
\begin{equation}
Q_{\phi,\phi} = -(1+k)Q_r,
\end{equation}
from which 
\begin{equation}
\mathcal{J} = 
\left[
\begin{array}{ccc}
-Q_r & 0 & 0 \\
\phi'_{,\rho}   & -(1+k)Q_r & 0 \\
0  & -(1+k)k Q_r & -(1+2k)Q_r
\end{array}
\right].
\end{equation}
Thus, the eigenvalues of $\mathcal{J}$ all have the same sign, which
is the opposite sign to $Q_r$.  We thus have the following.

\begin{theorem}
  Assume the incompressible flow $\mathbf{u}$ satisfies
  Eqs.~(\ref{r6}) and (\ref{r14}), with $k \ge 1$.  Suppose a sliding
  front ends in a fixed point $(\pi/2, \phi^*, \theta^*)$ at infinity
  and assume $Q_r(\phi^*) \ne 0$.  Then the fixed point has stability
  SSS (UUU) if the flow along the sliding front is radially outward
  (inward), i.e. $Q_r(\phi^*) > 0$ ($Q_r(\phi^*) < 0$).
\end{theorem}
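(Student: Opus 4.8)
The plan is to linearize the compactified dynamics about the fixed point at infinity and read the eigenvalues directly, exploiting that for $k \ge 1$ the rescaled flow (\ref{r4})--(\ref{r5}) is smooth at the circle $\rho = \pi/2$. First I would form the Jacobian $\mathcal{J} = \partial(\rho', \phi', \theta')/\partial(\rho, \phi, \theta)$ of this flow and evaluate it at $(\pi/2, \phi^*, \theta^*)$. Differentiating the right-hand sides of (\ref{r4}) and (\ref{r5}) term by term, I would substitute the asymptotic form (\ref{r8}) of $\mathbf{u}$ together with the decay estimates (\ref{r11}) on the error $\bar{\mathbf{E}}$ and its radial derivatives, so that in the limit $\rho \to \pi/2$ only the leading polynomial block $\mathbf{Q}(\phi)$ and its $\phi$-derivatives survive. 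This produces the entries of $\mathcal{J}$, including the subdiagonal coupling $\phi'_{,\rho} = -k(\cos\rho)^{k-1}(\bar{E}_\phi + v_0 \hat{n}_\phi)$.

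The second step is to impose the fixed-point conditions of Theorem~\ref{thm:BFP}, namely $Q_\phi(\phi^*) = 0$ from (\ref{r9}) and $\hat{\mathbf{g}} = \pm \hat{\bm \rho}$ from (\ref{r13}). These annihilate the upper-triangular block and collapse $\mathcal{J}$ to lower-triangular form with diagonal $-Q_r$, $Q_{\phi,\phi}$, and $Q_{\phi,\phi} - k Q_r$. The key simplification is that, as already noted when analyzing (\ref{r5}), $\hat{\mathbf{g}} = \hat{\bm \rho}$ is automatically an eigendirection of the relevant velocity-gradient block at the fixed point, so the $(\rho,\phi) \to \theta$ coupling enters only below the diagonal and leaves the spectrum determined by the diagonal alone.

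The third step is to fix $Q_{\phi,\phi}$ in terms of $Q_r$ using incompressibility. Writing $\nabla \cdot \mathbf{u} = 0$ in scaled polar coordinates and inserting (\ref{r8})--(\ref{r11}) gives the constraint $Q_{\phi,\phi} = -(1+k)Q_r$ at $\phi^*$. Substituting into the diagonal yields the three eigenvalues $-Q_r$, $-(1+k)Q_r$, and $-(1+2k)Q_r$. Since $k \ge 1$, the coefficients $1$, $1+k$, and $1+2k$ are all strictly positive, so the three eigenvalues share the common sign of $-Q_r$; the assumption $Q_r(\phi^*) \ne 0$ keeps them away from zero. Hence $Q_r(\phi^*) > 0$ (radially outward flow along the sliding front) gives three stable directions, an SSS point, while $Q_r(\phi^*) < 0$ (inward flow) gives UUU, completing the argument.

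I expect the main obstacle to be the bookkeeping in the first two steps: tracking precisely which off-diagonal entries of $\mathcal{J}$ survive the $\rho \to \pi/2$ limit once the error terms and their derivatives are controlled by (\ref{r11}), and verifying that the fixed-point conditions genuinely zero out the entire upper-triangular block rather than merely simplifying it. By contrast, the incompressibility relation and the final reading of the triangular spectrum are routine once the triangular structure is secured.
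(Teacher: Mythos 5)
Your proposal is correct and follows essentially the same route as the paper: computing the Jacobian $\mathcal{J}$ of the rescaled compactified flow at $(\pi/2,\phi^*,\theta^*)$, using the fixed-point conditions of Theorem~\ref{thm:BFP} to obtain the lower-triangular structure with diagonal $-Q_r$, $Q_{\phi,\phi}$, $Q_{\phi,\phi}-kQ_r$, and then invoking incompressibility to get $Q_{\phi,\phi}=-(1+k)Q_r$, so that all three eigenvalues $-Q_r$, $-(1+k)Q_r$, $-(1+2k)Q_r$ share the sign opposite to $Q_r(\phi^*)$. The details you flag as potential bookkeeping obstacles (the surviving off-diagonal entry $\phi'_{,\rho}$ and the vanishing of the upper-triangular block) are exactly what the paper's computation confirms.
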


In the $k = 0$ case, the eigenvalues of the Jacobian are all 0, and
stability must be determined by nonlinear analysis. 

\subsection{Behavior of frozen fronts at infinity}

Consider a sliding front that extends to infinity without containing a
cusp beyond some set radius.  Such a front must be a trajectory of the
$\mathbf{w}_\pm$ field Eq.~(\ref{eq:wpm_fields}) with a single choice
of sign.  Transforming this vector field into the
${\bm \rho}$-coordinates according to Eq.~(\ref{r3}) and scaling time
via Eq.~(\ref{r18}), we obtain a smooth 2D vector field over
${\bm \rho}$-space in the neighborhood of the boundary $\rho = \pi/2$.
Thus the sliding front that extends to infinity, i.e. converges to
$\rho = \pi/2$ in the ${\bm \rho}$-coordinates, can do only one of two
things.  If the boundary $\rho = \pi/2$ contains no fixed point, then
the boundary is a limit cycle to which the sliding front converges.
Otherwise, the sliding front must converge upon a fixed point at
$\rho = \pi/2$.

Let us suppose that the boundary $\rho = \pi/2$ is a limit cycle,
which we also suppose is stable.  (The analysis of the unstable case
is similar.)  Then, a circle at $\rho = \pi/2 -\epsilon$ will converge
outward to $\pi/2$ for a small enough $\epsilon$.  In the original
$xy$-coordinates, this means that a large enough circle of front
elements would expand outward without bound.  If these front elements
were passive tracers of the flow, this outward expansion would clearly
violate conservation of area of the underlying flow.  Similarly, if
these front elements were all pointing inward with nonzero $v_0$, they
would clearly still violate conservation of area of the underlying
flow.  Thus, the front elements can only be facing outward if the
circle is ultimately to grow in size.  In summary, we have the
following.

\begin{proposition}
  Assume the incompressible flow $\mathbf{u}$ satisfies
  Eqs.~(\ref{r6}) and (\ref{r14}).  A sliding front that moves outward
  to (inward from) infinity, without cusps beyond a certain radius,
  with its burning direction pointed inward (outward) must converge
  upon a fixed point $(\pi/2, \phi^*, \theta^*)$ at infinity.
\end{proposition}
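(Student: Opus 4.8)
The plan is to build directly on the dichotomy established in the paragraph preceding the statement: beyond the radius past which the sliding front has no cusps, it is a single-sign trajectory of the field $\mathbf{w}_{\pm}$ in Eq.~(\ref{eq:wpm_fields}), and after the change of variables Eq.~(\ref{r3}) and the time rescaling Eq.~(\ref{r18}) it becomes a trajectory of a smooth planar flow on the closed disk $\rho \le \pi/2$. Since the radial component of this flow vanishes on the boundary circle $\rho = \pi/2$, that circle is invariant, and the Poincar\'{e}--Bendixson theorem then forces the omega-limit set of the front to be either a fixed point on the boundary or, when the boundary carries no fixed point, the boundary circle itself as a limit cycle. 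The entire content of the proposition is therefore to rule out the limit-cycle alternative under the stated hypothesis on the burning direction; convergence to a fixed point at infinity then follows, and Theorem~\ref{thm:BFP} identifies its coordinates $(\pi/2,\phi^*,\theta^*)$.

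First I would treat the case of a front moving outward with burning direction pointed inward, and argue by contradiction that the boundary cannot be a (say, stable) limit cycle. If it were, then a circle of front elements placed at $\rho = \pi/2 - \epsilon$ for small $\epsilon$ would be carried outward, its image converging in $\bm\rho$-coordinates to the boundary circle and hence, in the original coordinates, to a circle of unboundedly growing radius. I would track the enclosed area $A(t)$ of this material curve and compute its rate of change as the flux of the front-element velocity through the curve,
\begin{align*}
\frac{dA}{dt} = \oint \rdot \cdot \hat{\mathbf N}\, d\ell
= \oint \uvec \cdot \hat{\mathbf N}\, d\ell + v_0 \oint \nhat \cdot \hat{\mathbf N}\, d\ell ,
\end{align*}
where $\hat{\mathbf N}$ is the outward normal of the material curve. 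The advective term equals $\iint \nabla\cdot\uvec\, dx\,dy = 0$ by incompressibility. For the burning term, the key observation is that as the material curve hugs the large boundary circle its outward normal satisfies $\hat{\mathbf N}\to\hat{\mathbf r}$, while the hypothesis that burning points inward gives $\nhat\cdot\hat{\mathbf r} < 0$ along the limit cycle; hence for large $t$ the burning term is strictly negative and $dA/dt < 0$. This contradicts $A\to\infty$, eliminating the limit cycle.

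The dual case---a front arriving from infinity (inward) with burning direction pointed outward---I would obtain by the same area estimate run in backward time, where the roles of inward and outward are exchanged and the sign of the burning flux again opposes the required growth of $A$; the unstable-limit-cycle subcase is symmetric. In both cases the limit-cycle alternative is excluded, so the front's omega- (or alpha-) limit set must contain a boundary fixed point, and the front converges upon it.

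I expect the main obstacle to be the two quantitative ingredients of the area argument rather than the topological dichotomy, which is already in hand. Concretely, I must (i) justify that convergence to the boundary limit cycle in $\bm\rho$-coordinates really does force the enclosed $xy$-area to diverge---this uses that the map $r=\tan\rho$ sends a curve close to the boundary circle to a curve close to a large geometric circle, so the material curve does not distort away from a near-circular shape---and (ii) control the sign of $\oint \nhat\cdot\hat{\mathbf N}\,d\ell$ uniformly for large $t$, which requires combining $\hat{\mathbf N}\to\hat{\mathbf r}$ with a strict inward-burning bound $\nhat\cdot\hat{\mathbf r}\le -\delta < 0$ inherited from the limiting front along the cycle. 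Establishing this strict, uniform negativity, as opposed to mere nonpositivity at isolated points, is the delicate step on which the contradiction rests.
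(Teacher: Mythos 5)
Your proposal takes essentially the same route as the paper's own proof: the identical compactified-flow dichotomy (boundary limit cycle versus fixed point at $\rho = \pi/2$), followed by the identical incompressibility argument ruling out the limit cycle when the burning direction opposes the required unbounded expansion of a circle of front elements. Your flux-integral computation $dA/dt = \oint \uvec \cdot \hat{\mathbf{N}}\, d\ell + v_0 \oint \nhat \cdot \hat{\mathbf{N}}\, d\ell$ merely makes quantitative what the paper asserts verbally (that such expansion ``would clearly violate conservation of area''), so the two arguments coincide in substance.
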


Consider now a frozen front that converges on a limit cycle at
infinity.  There would in fact need to be two separate frozen fronts
converging upon a limit cycle at infinity, one burning inward and one
burning outward, to create a burned strip that spirals outward to
infinity.  However, the previous proposition shows that only one such
burning direction is possible.  Hence, we have shown the following.

\begin{proposition}
  Assume the incompressible flow $\mathbf{u}$ satisfies
  Eqs.~(\ref{r6}) and (\ref{r14}).  Any frozen front that extends to
  infinity, without any concave corners or BFPs beyond a certain
  radius, must be a sliding front converging upon a fixed point at
  infinity.
\end{proposition}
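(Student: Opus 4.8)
The plan is to reduce the frozen front to a single smooth sliding trajectory beyond the given radius, invoke the compactification dichotomy already set up in this subsection, and then eliminate the limit-cycle alternative using the preceding proposition together with the incompressibility (finite-area) constraint on frozen domains. First I would apply Prop.~\ref{prop:FF}: a frozen front is a union of BIM-core (sliding-front) segments joined only at concave corners or at SSU/SSS BFPs, and it contains no cusps (Sec.~\ref{sec:FF}). Hence, beyond the radius past which no concave corners or BFPs occur, the frozen front coincides with a single sliding-front trajectory confined to one branch of the sliding surface, i.e. a trajectory of $\mathbf{w}_+$ or of $\mathbf{w}_-$ with a single, fixed choice of sign. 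This already establishes the ``sliding front'' half of the claim.

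Next I would transport this single-branch field into the $\bm\rho$-coordinates via Eq.~(\ref{r3}) and rescale time by Eq.~(\ref{r18}), obtaining a smooth planar vector field in a neighborhood of the circle $\rho = \pi/2$. As observed earlier in this subsection, a sliding front reaching infinity then converges either to a fixed point on $\rho = \pi/2$ or to $\rho = \pi/2$ itself as a limit cycle. The substance of the proposition is to rule out the latter.

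I would do so by contradiction. If the front converges to a limit cycle, it spirals outward to infinity, so by the preceding proposition its burning direction $\nhat$ must point outward (an outward-moving, inward-burning sliding front would instead converge to a fixed point). A frozen front, however, is the oriented boundary of a burned domain lying on its $-\nhat$ side, and neither the domain nor its complement may have finite area. A single curve spiraling to infinity cannot separate burned from unburned consistently: following the connected region between consecutive turns, the burned side determined by one turn is the unburned side determined by the next. A burned strip abutting the limit cycle would therefore require two distinct sliding-front boundaries on the same cycle, an inner one burning inward and an outer one burning outward. But the preceding proposition forces every such outward-spiraling sliding front to burn outward, so the required inward-burning boundary cannot exist. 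This contradiction eliminates the limit cycle, leaving convergence to a fixed point at infinity, as claimed.

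I expect the main obstacle to be making the consistency step airtight: one must verify, using $\nhat \times \ghat = +1$ and the fixed branch sign, that the burned side is genuinely constant along the spiral, and then argue carefully that the region between consecutive turns is connected in a way that forces incompatible burned/unburned labels---that is, that incompressibility, via the exclusion of a finite-area frozen domain or complement, is the true obstruction to a singly-bounded spiral burned region.
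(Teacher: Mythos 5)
Your proposal is correct and takes essentially the same route as the paper: both reduce the far-field frozen front to a single-branch sliding trajectory, invoke the compactified dichotomy between a fixed point and a limit cycle at $\rho = \pi/2$, and eliminate the limit-cycle case because a burned strip spiraling to infinity would require two boundary fronts of opposite burning direction, while the preceding proposition permits only one. Your consecutive-turns labeling argument merely makes explicit the step the paper states in a single sentence, and note that this step is purely topological---incompressibility enters the proof only through the preceding proposition, not through the exclusion of finite-area frozen domains.
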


\subsection{Stability of frozen fronts extending to infinity}

\label{sec:stabilityFFinfinity}

We now complete the discussion of frozen front stability begun in
Appendix~\ref{app:stability}.  In particular we consider a FF $F$ that
extends to infinity and an allowable perturbation to $F$ at a point
$\mathbf{r} \in F$ that is, at least initially, swept toward infinity.
Our concern is that this perturbation may not only grow in time, but
could burn far enough away from $F$ that it would no longer be carried
away to infinity but would leave some new part of the fluid, not in
the original burned domain defined by $F$, burned forever.  This
problem is resolved, however, by the Poincar\'{e} compactification.
If the sliding front converges on a stable fixed point at
$\rho = \pi/2$, then one can always find a sufficiently small
perturbation $\epsilon(\mathbf{r})$ such that the perturbation never
grows too large to be collapsed by the SSS BFP at infinity.

\section{SSS points}
\label{app:SSS}

It was shown in Ref.~\cite{Mitchell12b} that SSS is a possible stability type for BFPs.
Here we provide an explicit example for the fluid flow surrounding an SSS point.

The stream function
\begin{align}
\Psi = - v_0 y - \mu x y + \frac{c \mu^2}{6 v_0} y^3
\end{align}
produces the following flow field with a BFP at $(x,y) = (0,0)$ and
$\hat{\mathbf{g}} = \hat{\mathbf{y}}$
\begin{align}
\label{eq:SSS}
u_x &= - v_0 - \mu x + \frac{c \mu^2}{2 v_0} y^2,\\
u_y &= \mu y,
\end{align}
where $v_0$ is the burning speed, $\mu$ is the eigenvalue from
Eq.~(\ref{eq:mu}), and $c$ is a new parameter.  Note that this flow
can be understood as the combination of a uniform ``wind'', a linear
hyperbolic flow, and a Poiseuille flow.  The stability of the BFP can
be determined from Theorem~\ref{t4} once $\mu'$ is computed.
Ref.~\cite{Mitchell12b}, Eqs.~(19) and (31), showed that
\begin{align}
\mu' &= (2 \mu^2 - v_0 \mathbf{a} \cdot \mathbf{g})/ \mu, \\
\mathbf{a} \cdot \mathbf{g} &= n_i g_j g_k u_{i,jk},
\end{align}
from which it is straightforward to compute
$\mathbf{a} \cdot \mathbf{g} = c \mu^2/v_0$ and $\mu' = (2-c) \mu$.
Thus, the BFP at the origin has stability SSS if $\mu > 0$ and
$c > 2$.  If in addition $c \le 9/4$, Eqs.~(\ref{r27}) and (\ref{r28})
show that the three eigenvalues are real.

Figure~\ref{fig:SSS_not} illustrates this flow for $c = 7/4$, which is
too small to produce an SSS BFP at the origin.
Equation~(\ref{eq:muprime}) allows us to interpret this fact as due to
the magnitude of the SZ curvature $|\kappa|$ being too small.
Topologically, the dynamics has the same structure as the linear
hyperbolic flow Fig.~\ref{fig:hyperbolic_wpm_surface_and_flow}, since
the origin is an SSU BFP.  The BIM (red) forms a FF extending to
infinity.

\begin{figure}[bt]
\includegraphics[width=\linewidth]{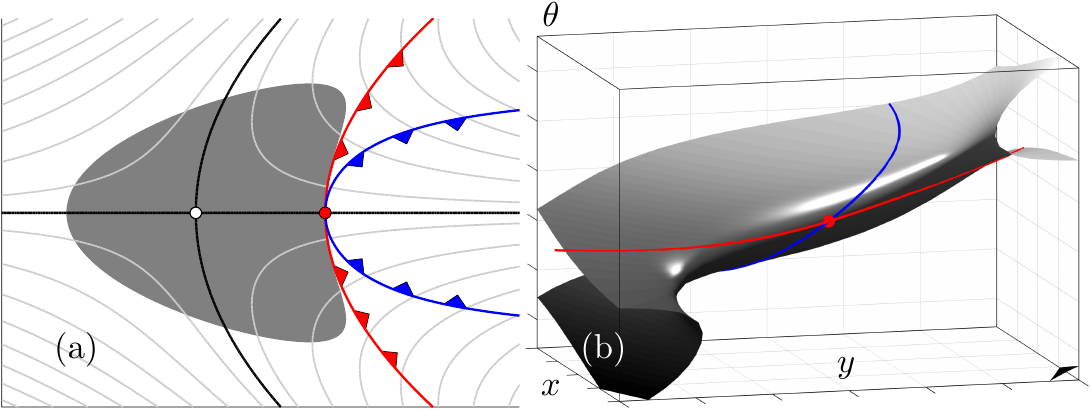}
\caption{(Color online.)  When the parameter $c$ is below $2$, the
  curvature of the SZ is not strong enough to create the SSU to
  SSU-SSS-SSU bifurcation. ($v_0 = 1, \mu = 1, c = 7/4$). (a) SSU BFP
  with attached sliding fronts.  Three other BFPs on the SZ boundary
  are not shown. (b) Fronts on the sliding surface seen in 3D from the
  center of the SZ.}
\label{fig:SSS_not}
\end{figure}

Increasing $c$ to $8.5/4$, the magnitude of the SZ curvature at the
BFP is increased (Fig.~\ref{fig:SSS_real}).  The SSU BFP in
Fig.~\ref{fig:SSS_not} has bifurcated into an SSS BFP at the origin
surrounded by two SSU BFPs.  The BIMs emanating from the SSU BFPs
toward the central SSS BFP actually terminate on the SSS BFP (most
easily seen in Fig.~\ref{fig:SSS_real}b.)  Thus, all three BFPs lie on
the FF composed of these BIMs.
\begin{figure}[bt]
\includegraphics[width=\linewidth]{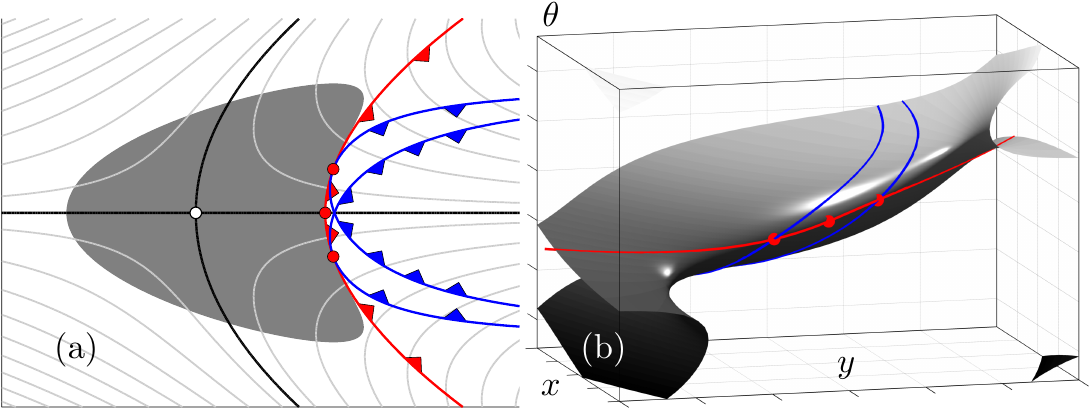}
\caption{(Color online.)   ($v_0 = 1, \mu = 1, c = 8.5/4$). (a) When the central SSS BFP has real eigenvalues, the BIMs emanating from the two neighboring SSU BFPs reach it with no cusps. (b) BIMs run very close to the sliding surface fold.}
\label{fig:SSS_real}
\end{figure}

Increasing $c$ further to $15/4$, the SSS point acquires complex
eigenvalues.  Therefore, any sliding trajectory that reaches the
center point must first encircle it an infinite number of times
(Fig.~\ref{fig:SSS_complex}).  Because the SSS point is on the fold of
the sliding surface (where $|\uvec| = v_0$), such a trajectory must
pass through the fold an infinite number of times, generating an
infinite sequence of cusps.  Therefore, for any point along a sliding
trajectory (which we imagine to be a part of some FF), there must
exist an infinite number of cusps between it and the SSS point.  This
ensures that an SSS point with complex eigenvalues cannot lie on a FF.

\begin{figure}[bt]
\includegraphics[width=\linewidth]{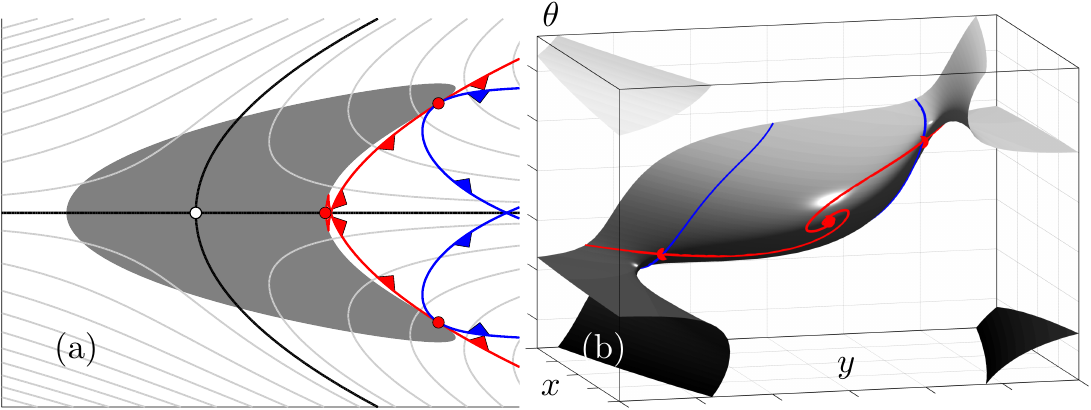}
\caption{(Color online.)  ($v_0 = 1, \mu = 1, c = 15/4$). (a) When the
  central SSS BFP has complex eigenvalues, the BIMs emanating from the
  SSU BFPs repeatedly overshoot the SSS BFP, and form cusps. (b) From
  side view, it is clear that these cusps are caused by the local
  spiraling of trajectories.}
\label{fig:SSS_complex}
\end{figure}

Conversely, an SSS point can only lie on a FF if its eigenvalues are
real.  Such SSS points do exist in the flow Eq.~(\ref{eq:SSS}) for
parameters satisfying $ v_0 > 0, \mu > 0, 8/4 < c < 9/4 $.  Thus in
Fig.~\ref{fig:SSS_real} the real eigenvalues ensure that the SSS point
lies on the FF attached to the neighboring SSU points.  Because the
eigenvalues are real for an open interval in parameter space, the
connection between the SSU and SSS BFPs in Fig.~\ref{fig:SSS_real} is
structurally stable.  Here the FF runs very close to the SZ boundary,
however there is a small gap.  Constrast this with the SSU to SSU
connection previously shown in Fig.~\ref{fig:SSU_to_SSU_connection},
which was not structurally stable.

\bibliographystyle{apsrev4-1}

%

\end{document}